\documentclass[11pt,english,reqno]{article}
\usepackage{lmodern}
\usepackage{lmodern}

\usepackage[T1]{fontenc}
\usepackage[latin9]{inputenc}
\usepackage{float}
\usepackage{enumitem}
\usepackage{tipa}
\usepackage{tipx}
\usepackage{amsmath}
\usepackage{amsthm}
\usepackage{amssymb}
\usepackage{graphicx}
\usepackage{color}

\makeatletter
\theoremstyle{plain}
\newtheorem{thm}{\protect\theoremname}
  \theoremstyle{plain}
  \newtheorem{lem}[thm]{\protect\lemmaname}
  \newtheorem{coro}[thm]{\protect \corollaryname}

\newtheoremstyle{myremark} 
    {\topsep}                    
    {\topsep}                    
    {\rm}                        
    {}                           
    {\bf}                        
    {.}                          
    {.5em}                       
    {}  

\usepackage{fullpage}
\usepackage{relsize}

\allowdisplaybreaks

\newcommand{\Tau}{\mathrm{T}}
\newcommand{\Kappa}{\mathrm{K}}

\def\<{\langle}
\def\>{\rangle}
\def\|{\Vert}

\def\bw{{\boldsymbol w}}
\def\reals{{\mathbb R}}
\def\naturals{{\mathbb N}}

\def\sT{{\sf T}}
\def\sb{{\sf b}}
\def\sd{{\sf d}}
\def\hsb{\hat{\sf b}}
\def\hsd{\hat{\sf d}}
\def\Tr{{\rm Tr}}
\def\op{\mbox{\tiny\rm op}}

\def\ed{\stackrel{{\rm d}}{=}}
\def\semp{\mbox{\tiny emp}}
\def\id{{\boldsymbol I}}
\def\normal{{\sf N}}
\def\div{{\rm div}}
\def\bm{{\boldsymbol m}}
\def\bu{{\boldsymbol u}}
\def\bv{{\boldsymbol v}}
\def\hbu{\hat{\boldsymbol u}}
\def\hbv{\hat{\boldsymbol v}}
\def\ba{{\boldsymbol a}}
\def\bb{{\boldsymbol b}}
\def\bc{{\boldsymbol c}}
\def\bh{{\boldsymbol h}}
\def\bq{{\boldsymbol q}}
\def\bx{{\boldsymbol x}}
\def\by{{\boldsymbol y}}
\def\bz{{\boldsymbol z}}
\def\br{{\boldsymbol r}}
\def\bA{{\boldsymbol A}}
\def\bB{{\boldsymbol B}}
\def\bC{{\boldsymbol C}}
\def\bH{{\boldsymbol H}}
\def\bM{{\boldsymbol M}}
\def\bQ{{\boldsymbol Q}}
\def\bP{{\boldsymbol P}}
\def\bS{{\boldsymbol S}}
\def\bX{{\boldsymbol X}}
\def\bY{{\boldsymbol Y}}
\def\bZ{{\boldsymbol Z}}
\def\bG{{\boldsymbol G}}

\def\bU{{\boldsymbol U}}
\def\bV{{\boldsymbol V}}
\def\balpha{{\boldsymbol \alpha}}
\def\hbtheta{\hat{\boldsymbol \theta}}
\def\btheta{{\boldsymbol \theta}}
\def\bSigma{{\boldsymbol \Sigma}}
\def\bKappa{{\boldsymbol {\rm K}}}

\def\GOE{{\rm GOE}}
\def\NMSE{{\rm NMSE}}

\def\bfone{{\boldsymbol 1}}
\def\bzero{{\boldsymbol 0}}
\def\cA{{\mathcal A}}
\def\cH{{\mathcal H}}
\def\cC{{\mathcal C}}
\def\cK{{\mathcal K}}
\def\SVT{{\sf S}}
\def\sP{{\sf P}}
\def\sR{{\sf R}}
\def\vec{{\rm vec}}
\def\eps{\varepsilon}

\newcommand{\N}{\mathbb{N}}
\newcommand{\E}{\mathbb{E}}
\newcommand{\R}{\mathbb{R}}
\newcommand{\approxP}{\mathrel{\stackrel{{\rm P}}{\mathrel{\scalebox{1.8}[1]{$\simeq$}}}}}

\usepackage{babel}
\providecommand{\corollaryname}{Corollary}
  \providecommand{\lemmaname}{Lemma}
  
\providecommand{\theoremname}{Theorem}

\usepackage{bbm}

\@ifundefined{showcaptionsetup}{}{%
 \PassOptionsToPackage{caption=false}{subfig}}
\usepackage{subfig}
\makeatother

\usepackage{babel}
  \providecommand{\lemmaname}{Lemma}
\providecommand{\theoremname}{Theorem}

\theoremstyle{myremark}
\newtheorem{remark}{Remark}[section]

\begin{document}

\title{State Evolution for Approximate Message Passing\\ with Non-Separable Functions}

\author{Rapha\"el Berthier\thanks{Ecole Normale Supérieure, Paris and Université Paris-Sud, Orsay}, \;\;
Andrea  Montanari\thanks{Department of Electrical Engineering and Department
  of Statistics, Stanford University}, \;\; and\;\;
Phan-Minh Nguyen\thanks{Department of Electrical Engineering, Stanford University}}

\maketitle

\begin{abstract}
Given a high-dimensional data matrix $\bA\in\reals^{m\times n}$, Approximate Message Passing (AMP)  algorithms construct sequences
of vectors $\bu^t\in\reals^n$, $\bv^t\in\reals^m$, indexed by $t\in\{0,1,2\dots\}$ by iteratively applying $\bA$ or $\bA^{\sT}$, and suitable non-linear functions,
which depend on the specific application. 
Special instances of this approach have been developed --among other applications-- for compressed sensing reconstruction, robust regression,
Bayesian estimation, low-rank matrix recovery, phase retrieval, and community detection in graphs. For certain classes of random matrices $\bA$,
AMP admits an asymptotically exact description in the high-dimensional limit $m,n\to\infty$, which goes under the name of \emph{state evolution.}

Earlier work established state evolution for separable non-linearities (under certain regularity conditions). Nevertheless,
empirical work demonstrated several important applications that require non-separable functions.
In this paper we generalize state evolution to Lipschitz continuous non-separable nonlinearities, for Gaussian matrices $\bA$. Our proof
makes use of Bolthausen's conditioning technique along with several approximation arguments. In particular, we introduce a modified 
algorithm (called LAMP for Long AMP) which is of independent interest.
\end{abstract}

\section{Introduction}

Over the last few years Approximate Message Passing (AMP) algorithms have been applied to a broad range of statistical estimation problems,
including compressed sensing \cite{donoho2009message}, robust regression \cite{donoho2016high}, Bayesian estimation \cite{kamilov2012approximate}, low rank matrix
recovery \cite{kabashima2016phase}, phase retrieval \cite{schniter2015compressive}, and  community detection in graphs \cite{deshpande2016asymptotic}.
In a fairly generic formulation\footnote{More general settings have
  also been developed, see for instance \cite{Javanmard2013}.}, AMP
takes as input a random data matrix $\bA\in \reals^{m\times n}$ and generates sequences of vectors $\bu^t\in\reals^n$, $\bv^t\in\reals^m$,
indexed by $t\in \naturals$ according to the iteration
\begin{align}
\bu^{t+1} & =\bA^{\sT}g_{t}(\bv^{t})-\sd_{t}e_{t}(\bu^{t})\, ,\label{eq:FIRSTasymmetricAMP_1}\\
\bv^{t} & =\bA e_{t}(\bu^{t})-\sb_{t}g_{t-1}(\bv^{t-1})\, .\label{eq:FIRSTasymmetricAMP_2}
\end{align}
Here $g_t:\reals^m\to\reals^m$ and $e_t:\reals^n\to\reals^n$ are two sequences of functions indexed by the iteration number $t$, 
that encode the specific application. The coefficients $\sd_t$, $\sb_t\in\reals$ are completely fixed by the choice of these functions.  
For instance, assuming $\E\{A_{ij}^2\}= 1/m$, we can use
\begin{equation}
  \sd_{t}^{\semp}=\frac{1}{m}{\rm div}\,g_{t}(\bv^t)\, ,\quad\sb^{\semp}_{t}=\frac{1}{m}{\rm div}\, e_{t}(\bu^{t})\, . \label{eq:FIRSTonsager_asym}
\end{equation}
(A slightly different definition, that is more convenient for proofs, will be adopted in Section \ref{sec:MainResults}.)

Apart from being broadly applicable, AMP algorithms admit an asymptotically exact characterization in the high-dimensional limit 
$m,n\to \infty$ with $m/n$ converging to a limit, which is known as \emph{state evolution}. Informally, for any $t$ fixed,  in the high-dimensional limit,
$\bu^t$ is approximately Gaussian with mean zero and covariance $\tau^2_t\id_n$, while 
$\bv^t$ is approximately $\normal(0,\sigma^2_t\id_m)$. The variance parameters $\tau^2_t, \sigma^2_t$ can be computed via a one-dimensional recursion.

State evolution was proved  in \cite{Bayati2011}  for the recursion (\ref{eq:FIRSTasymmetricAMP_1}), (\ref{eq:FIRSTasymmetricAMP_2})
under two key assumptions
\begin{itemize}
\item  $\bA$ a Gaussian random matrix with with i.i.d. entries $(A_{ij})_{i\le m,j\le n}\sim\normal(0,1/m)$.
\item The functions $g_t(\,\cdot\,)$, $e_t(\,\cdot\,)$ are separable\footnote{We say that $f:\reals^d\to\reals^d$ is separable if 
$f(x_1,\dots,x_d)_i= f_i(x_i)$ for some functions $f_i:\reals\to\reals$.} and Lipschitz continuous.
\end{itemize}
This paper relaxes the second assumption and establishes state evolution for functions $g_t(\,\cdot\,)$, $e_t(\,\cdot\,)$
that are Lipschitz continuous but not necessarily separable. Our proof uses (as the original paper \cite{Bayati2011}) a conditioning
technique initially developed by Erwin Bolthausen \cite{bolthausen2014iterative} to study the TAP equations in spin glass theory.
A key difficulty with non-separable denoisers is that the iterates $g_1(\bv^1)$, $g_2(\bv^2)$, \dots, $g_t(\bv^{t})\in\reals^m$ might be collinear and lie in a subspace of
dimension smaller than $t$, for large $m$. This degeneracy (or a similar problem with the $e_1(\bu^1)$, $e_2(\bu^2)$, \dots $e_t(\bu^{t})$) would cause
a naive adaptation of the proof of \cite{Bayati2011} to break down. In order to circumvent this problem without introducing ad hoc assumptions, 
we proceed in two steps:
\begin{enumerate}
\item We introduce a random perturbation of the functions $e_t(\,\cdot\,)$, $g_t(\,\cdot\,)$. We prove that, with probability one with respect to this random perturbation,
the new iteration satisfies the required non-degeneracy assumption.
\item We prove that both AMP and state evolution are uniformly continuous in the size of the perturbation, and hence we can let the perturbation
vanish recovering state evolution for the original unperturbed problem.
\end{enumerate}
Further, we obtain a streamlined proof with respect to the strategy of \cite{Bayati2011}, by introducing a different algorithms,
that we call LAMP (for Long AMP). State evolution is proved first for LAMP, and then the latter is shown to be closely approximated by the original AMP.
We believe that LAMP is potentially of independent interest and will be further investigated in  \cite{OurUnpub}

In the rest of this introduction we will briefly describe two applications of AMP with non-separable 
nonlinearities, and show how state evolution can be used to characterize its behavior. Both of these are examples of generalized
compressed sensing, cf. Section \ref{sec:Application-CS}.
We will then review some related work in Section \ref{sec:Related}, and state our results in Section \ref{sec:MainResults} (for the asymmetric iteration (\ref{eq:FIRSTasymmetricAMP_1})
and Section \ref{sec:Symmetric-AMP} (for the analogue case in which $\bA$ is a random symmetric matrix)).
Proofs are presented in Sections \ref{sec:ProofSymmetric} and \ref{sec:Asymmetric-AMP}. In fact, we will first 
prove state evolution in the case in which $\bA$ is a symmetric random matrix, and then reduce the asymmetric case to the symmetric one.
Finally, Section \ref{sec:Application-CS} applies the general theory to compressed sensing reconstruction with a variety of denoisers. In particular, we derive
a bound on the convergence rate for denoisers that are projectors onto convex sets.
Several technical elements are deferred to the appendices. 

For a summary of notations used throughout the paper, the reader is urged to consult Section \ref{sec:Notations}.

\subsection{Vignette $\# 1$: Matrix compressed sensing}

We want to reconstruct an unknown matrix $\bX_0\in\reals^{n_1\times n_2}$ from linear measurements $\by\in\reals^m$, where
\begin{align}
\by = \cA(\bX_0)\, .\label{eq:MatrixSensing}
\end{align}
Here $\cA:\reals^{n_1\times n_2}\to\reals^m$ is a Gaussian linear operator. Concretely $y_i = \<\bA_i,\bX_0\>$ where $\bA_i\in\reals^{n_1\times n_2}$ 
are i.i.d. matrices with independent entries $(\bA_i)_{r,c}\sim\normal(0,1/m)$. This setting was first studied in \cite{recht2010guaranteed} and
can be used as a simple model for system identification and matrix completion. 

The following AMP algorithm can be used to reconstruct $\bX_0$ from observations $\by$:
\begin{align}
\bX^{t+1} & =\SVT\big(\bX^{t}+\cA^{\sT}\br^{t}; \lambda_t\big)\, ,\label{eq:MatrixAMP1}\\
\br^{t} & =\by-\cA(\bX^{t})+\sb_t \br^{t-1}\, ,\label{eq:MatrixAMP2}
\end{align}
with initialization $\bX^0 = \bzero$. After $t$ iterations, the algorithm produces an estimate $\bX^t$, and a residual $\br^t$.
Here $\cA^{\sT}$ is the adjoint\footnote{We can represent the action $\cA(\bX)$ by vectorizing $\bX$ as $\vec(\bX)\in\reals^n$,
$n=n_1n_2$. If $\bA\in\reals^{m\times n}$ is the matrix whose $i$-th row is $\bA_i$, then $\cA(\bX) = \bA\vec(\bX)$. Then the adjoint
$\cA^{\sT}$ corresponds to the transpose $\bA^{\sT}$.} of the operator $\cA$, 
\begin{align}
\sb_t = \frac{1}{m}\div\,\SVT \big(\bX^{t-1}+\cA^{\sT}\br^{t-1}; \lambda_{t-1}\big)\, ,\label{eq:OnsagerSVT}
\end{align}
and $\SVT(\,\cdot\,;\lambda)$ is the singular value thresholding (SVT) operator, defined as follows.
For a matrix $\bY\in\reals^{n_1\times n_2}$, with singular value decomposition
\begin{align}
\bY=\sum_{i=1}^{n_{1}\wedge n_{2}}\sigma_{i}\bu_{i}\bv_{i}^{\sT},
\end{align}
the SVT operator yields
\begin{equation}
\SVT(\bY;\lambda)=\sum_{i=1}^{n_{1}\wedge n_{2}}(\sigma_{i}-\lambda)_{+}\bu_{i}\bv_{i}^{\sT}\, .
\end{equation}
The divergence in Eq.~(\ref{eq:OnsagerSVT}) can be computed explicitly using a formula from \cite{Candes,donoho2014minimax}, see Appendix \ref{app:Matrix}.
The sequence of parameters $(\lambda_t)_{t\ge 0}$ can be chosen to optimize the algorithm performance. 

Fixed points of this AMP algorithm are minimum nuclear norm solution of the constraint $\by=\bA(\bX)$. This
algorithm was implemented in \cite{donoho2013unpub} and partly motivated the predictions of \cite{donoho2013phase}.
A recent detailed study (and generalizations) can be found in \cite{Gavish2017}, showing that its phase transition matches the one of nuclear 
norm minimization, predicted in \cite{donoho2013phase} and proved in \cite{oymak2013squared,amelunxen2014living}.

With a change of variables, the algorithm (\ref{eq:MatrixAMP1}),  (\ref{eq:MatrixAMP2}) can be recast in the general form (\ref{eq:FIRSTasymmetricAMP_1}),
(\ref{eq:FIRSTasymmetricAMP_2}) with one of the functions being non-separable and given by the SVT operator (the change of variables is described in Section \ref{sec:Application-CS}). 

\begin{figure}
\includegraphics[width=0.5\textwidth]{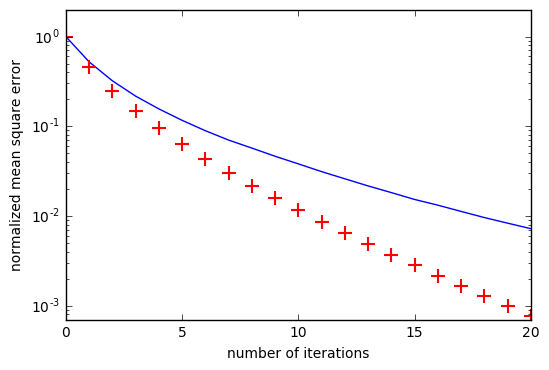} \includegraphics[width=0.5\textwidth]{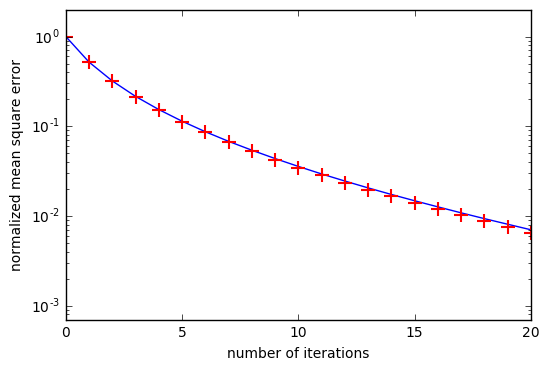}
\caption{Matrix compressed sensing reconstruction using AMP: normalized mean square error as a function of the number of iterations.
Left : $30\times 30$ matrices of rank $3$. Right: $170\times 170$
matrices of rank $17$. Red pluses ($+$): simulations. Blue line: state evolution prediction.}
\label{fig:matrix_reconstruction_se} 
\end{figure}

In Figure \ref{fig:matrix_reconstruction_se} we report the results of numerical simulations using this algorithm. We generated
$\bX_0\in\reals^{n_1\times n_2}$ of rank $r$ by letting $\bX_0 = \bU\bV^{\sT}$ for $\bU\in\reals^{n_1\times r}$, $\bV\in\reals^{n_2\times r}$
uniformly random orthogonal matrices, and computed $m$ measurements $\by$ as per Eq.~(\ref{eq:MatrixSensing}). We took $n_1=n_2$, $r =0.1\cdot n_1$,
$m = 0.65 \cdot n_1n_2$.  We chose the threshold parameter $\lambda_{t}$ to be proportional to
the noise level as estimated via the residual \cite{montanari2012graphical}:
\begin{equation}
\lambda_{t}=2\sqrt{n_{1}} \, \frac{\| \br^{t}\|_{2}} {\sqrt{m}}.
\end{equation}
We plot the the normalized mean square error as a function of the iteration number (with $n=n_1n_2$ the number of unknowns):
\begin{align}
\NMSE(t;n)= \frac{\| \bX^{t}-\bX_0\|_{F}^{2}}{\|\bX_0\|_{F}^{2}}\, .
\end{align}
State evolution allows to predict the value $\lim_{n\to\infty} \NMSE(t;n)$. The prediction is already very accurate for $n_1=n_2=170$.

\subsection{Vignette $\# 2$: Compressed sensing with images}
\label{sec:NLM-AMP}

We represent an image as a two-dimensional array $\bx = (x_{i,j})_{i\le n_1,j\le n_2}$, which we identify with its vectorization $\vec(\bx)\in\reals^{n}$, $n=n_1n_2$. In compressed sensing we acquire a small 
number of incoherent measurements $\by\in\reals^m$ according to 
\begin{align}
\by = \bA\bx +\bw\, .
\end{align}
where $\bA\in\reals^{m\times n}$ is a known sensing matrix for which we assume the simple Gaussian model $(A_{ij})_{i\le m,j\le n}\sim_{iid}\normal(0,1/m)$,
and $\bw\sim\normal(0,\sigma^2_w\id_m)$ is noise. 

A broad class of AMP reconstruction algorithms take the form
\begin{align}
\bx^{t+1} & =\eta_{t}\big(\bx^{t}+\bA^{\sT}\br^{t}\big)\, ,\label{eq:AMPImage1}\\
\br^{t} & =\by-\bA\bx^{t}+\sb_t\br^{t-1}\, .\label{eq:AMPImage2}
\end{align} 
where $\bx^0=0$, $\eta_t:\reals^{n_1\times n_2}\to \reals^{n_1\times n_2}$ is a sequence of image denoisers, and 
\begin{align}
\sb_t = \frac{1}{m}\div\,\eta_{t-1}\big(\bx^{t-1}+\bA^{\sT}\br^{t-1}\big)\, .
\end{align}
The compressed sensing reconstruction algorithm in \cite{donoho2009message} was a special case of this iteration with $\eta_t(\,\cdot\,)$ 
corresponding to coordinate-wise soft thresholding (in a suitable basis), hence leading to a separable AMP. Several authors studied the same algorithm
with non-separable denoisers, including Hidden Markov Models \cite{schniter2010turbo,som2012compressive}, total variation and block thresholding denoisers \cite{donoho2013accurate},
universal denoising \cite{ma2016approximate},
restricted Boltzmann machines \cite{tramel2016approximate}. As documented in these papers, a good choice of the denoiser yields a significant performance boost
over classical compressed sensing reconstruction methods, such as $\ell_1$ minimization.

Again, the iteration (\ref{eq:AMPImage2}), (\ref{eq:AMPImage1}) can be put in the form (\ref{eq:FIRSTasymmetricAMP_1}), (\ref{eq:FIRSTasymmetricAMP_2}) with a change of variables described
in Section \ref{sec:Application-CS}. A non-separable denoiser $\eta_t$ translates into non-separable non-linearities $g_t$, $e_t$.

Here we use Non-Local Means denoising (NLM) \cite{buades:hal-00271141}. Given a noisy image $\bz$, NLM estimates pixel $(i,j)$
as a weighted average of the pixels of $\bz$: 
\begin{align}
\eta(\bz)_{i,j}=\frac{ \sum_{(k,l)} W_{(k,l),(i,j)} (\bz)\, z_{k,l} }{ \sum_{(k,l)} W_{(k,l),(i,j)} (\bz) }\, .\label{eq:NLM_average}
\end{align}
The weights $W_{(k,l),(i,j)}(\bz)$ depend on the similarity between  the patches in $\bz$ centered around $(k,l)$
and $(i,j)$ respectively, as well as on the distance between the two pixels. In a simple instantiation, we choose a patch size
$L\in\N_{>0}$, a range $R>0$,  and a precision parameter $h>0$. For a position $(k,l)$ in
the image, denote by $P_{(k,l)}(\bx)$ the subimage of $\bx$ (or patch) centered in $(k,l)$, of size $L\times L$. Then: 
\begin{align}
W_{(k,l),(i,j)}(\bz)=\bfone_{\|(i,j)-(k,l)\|\le R} \, \exp\left(-\frac{\left\Vert P_{(k,l)}(x)-P_{(i,j)}(x)\right\Vert _{2}^{2}}{L^{2}h^{2}}\right).\label{eq:NLM_weights}
\end{align}
In words, NLM averages patches that are similar to each other. The recent paper \cite{Metzler2016} studies this algorithm and demonstrates 
state-of-the-art performances. 
Here we carry out similar simulations to demonstrate the accuracy of the state evolution prediction. At each iteration we can choose three parameters:
$L_t$, $R_t$ and $h_t$. We fix $L_t = 7$, $R_t=11$ and adapt $h_t$ to the noise level. The theory developed in the next sections
suggests that $\|\br^t\|_2/\sqrt{m}$ is a good measure of the effective noise level after $t$ iterations. We therefore set
\begin{align}
h_{t}=0.9\cdot \frac{\| \br^{t}\|_{2}}{\sqrt{m}}\, ,
\end{align}
where the coefficient $0.9$ was selected empirically.

One difficulty is to compute the divergence of NLM denoisers $\div\,\eta_{t}$. Rather than computing explicitly the divergence from Eqs.~(\ref{eq:NLM_average}) and (\ref{eq:NLM_weights}),
we use a trick suggested in \cite[Section V.B]{Metzler2016}. The trick is based on  the formula 
\begin{align}
{\rm div}\,\eta_{t}(x)=\lim_{\eps\to 0}\E\left[\left\< \bZ,\frac{1}{\eps}\left(\eta_{t}(\bx+\eps \bZ)-\eta_{t}(\bx)\right)\right\> \right],\quad \bZ\sim\normal(0,\id_{n}).
\end{align}
Rather than taking the limit, we fix $\eps$ very small and evaluate the expectation by Monte Carlo. In high dimensions, concentration of
measure helps and it is sufficient to use only one or a few samples to approximate the integral.

\begin{figure}
\center \includegraphics[scale=0.27]{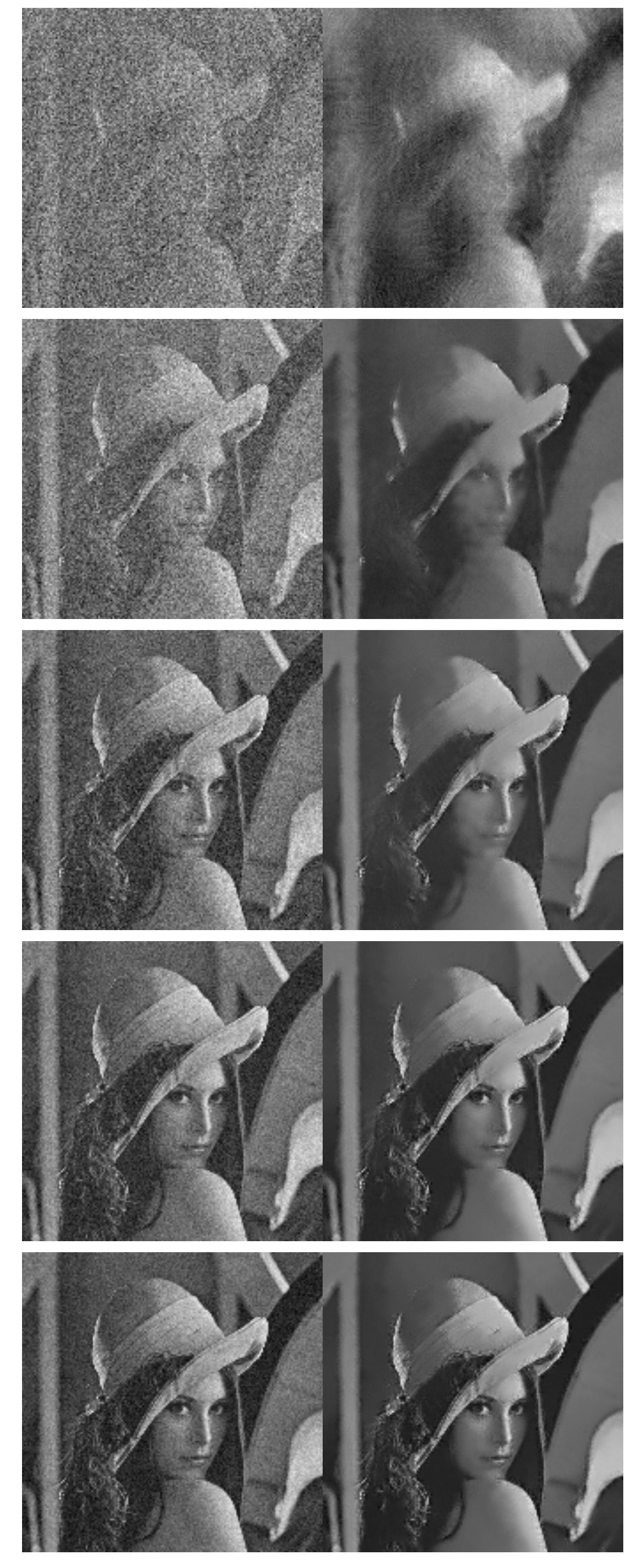} \caption{Compressed sensing reconstruction of Lena using NLM-AMP,
at undersampling ratio $m/n=0.5$: iterates $\bx^{t}+\bA^{\sT}\br^{t}$ (left column) and $\bx^{t+1}=\eta_{t}(\bx^{t}+\bA^{\sT}\br^{t})$
(right column) for  $t\in \{0,1,2,3,4\}$. (Details in the main text.)}
\label{fig:Lena_reconstruction} 
\end{figure}

\begin{figure}
\center \includegraphics[scale=0.7]{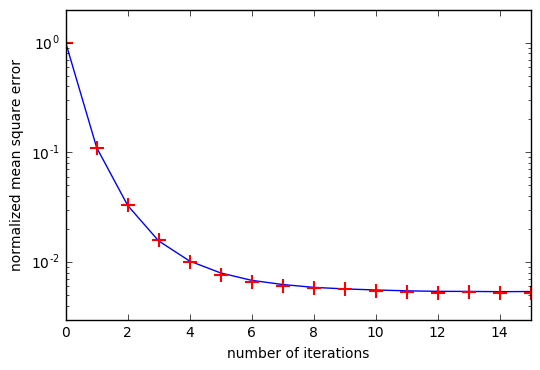} \caption{Compressed sensing reconstruction of Lena using NLM-AMP. Red pluses ($+$): evolution of the normalized square error. 
Blue line: state evolution prediction.}
\label{fig:Lena_reconstruction_se} 
\end{figure}

In Figure \ref{fig:Lena_reconstruction}, we demonstrate the algorithm performance 
for an image of size $170\times 170$ (i.e. $n_1=n_2=170$)  with  $m = 0.5\cdot n_{1}n_{2}$
measurements and noise level $\sigma_w = 0.034\cdot\| \bx_0 \|_2 / \sqrt{170}$. For each iteration $t\in \{0,1,2,3,4\}$, we show the estimates
$\bx^{t}+\bA^{\sT}\br^{t}$ (left column) together with the denoised versions $\bx^{t+1}=\eta_{t}(\bx^{t}+\bA^{\sT}\br^{t})$ (right column).
In Figure \ref{fig:Lena_reconstruction_se}, we  report the evolution
of the normalized square error $\NMSE(t;n) = \| \bx^{t}-\bx_0\|_{2}^{2}/\| \bx_0\|_{2}^{2}$,
as a function of the number of iteration. State evolution appears to track very closely the simulation results.

\section{Further related work}
\label{sec:Related}

Approximate Message Passing algorithms are motivated by ideas in spin glass theory, where they correspond to an iterative version of the celebrated TAP
equations \cite{thouless1977solution,bolthausen2014iterative}. They can also be derived from graphical models ideas, by viewing them as approximations of 
belief propagation \cite{koller2009probabilistic,montanari2012graphical}. In both of these cases, the AMP nonlinearities turn out to be related to  conditional expectation with respect to
certain prior distributions.  The theorems proved here apply more broadly,
as demonstrated by the example in Section \ref{sec:NLM-AMP}.

The state evolution analysis of \cite{Bayati2011} was generalized in a number of directions over the last few years. 
State evolution was proven to hold for matrices $\bA$ with i.i.d. subgaussian entries in \cite{bayati2015universality},
under the assumption that the non-linearity is a separable polynomial. The proof of \cite{bayati2015universality} is based on the moment method,
and hence is entirely different from the one presented here. 
Several generalizations of the basic iteration (\ref{eq:FIRSTasymmetricAMP_1}), (\ref{eq:FIRSTasymmetricAMP_2}) were studied in \cite{Javanmard2013}.
The framework of  \cite{Javanmard2013} allows to treat  some classes of matrices with independent Gaussian but not identically distributed entries,
as well as algorithms in which $\bu^t\in\reals^{n\times k}$, $\bv^t\in\reals^{m\times k}$ are matrices with $k$ fixed as $m,n\to\infty$.

A generalization of AMP to right-invariant random matrices was introduced and analyzed in 
\cite{ma2017orthogonal,rangan2016vector}, using the conditioning technique also applied here. This allows to treat classes of matrices
 with dependent entries and potentially large condition numbers. In the same direction, \cite{opper2016theory,ccakmak2017dynamical} develops 
iterative algorithms analogous to (\ref{eq:FIRSTasymmetricAMP_1}), (\ref{eq:FIRSTasymmetricAMP_2}) for unitarily invariant symmetric matrices,
and for compressed sensing. The analysis in these works is based on non-rigorous density functional methods from statistical physics.

All results discussed above are asymptotic, and characterize the limit $m,n\to\infty$ with $m/n$ converging to a limit. 
Nevertheless, the conditioning technique does rely on central-limit-theorem and concentration-of-measure arguments and,
as demonstrated in \cite{rush2016finite}, it can be sharpened to obtain non-asymptotic results.

Finally, a recent paper by Ma, Rush and Baron \cite{ma2017analysis} states a theorem establishing state evolution for 
compressed sensing reconstruction via AMP with a non-separable sliding-window  denoiser. The result of \cite{ma2017analysis} is
not directly comparable with ours, since it concerns a special class of non-separable nonlinearities, but provides non-asymptotic guarantees.

\section{Main results}
\label{sec:MainResults}

In this section we state our main result for the asymmetric AMP iteration of Eqs.~(\ref{eq:FIRSTasymmetricAMP_1}), (\ref{eq:FIRSTasymmetricAMP_2}).
A similar result for symmetric AMP will be stated in Section \ref{sec:Symmetric-AMP} (and proven in \ref{sec:ProofSymmetric}).

\subsection{Definitions}
For two sequences (in $n$) of random variables $X_{n}$ and $Y_{n}$, we write $X_{n}\approxP Y_{n}$ when their difference converges in probability to $0$, i.e. 
$X_{n}-Y_{n}\stackrel{{\rm P}}{\longrightarrow}0$.

For $\Kappa=\left(\Kappa_{s,r}\right)_{1\leq s,r\leq t}$ a $t\times t$
covariance matrix, we will write $(\bZ^{1},\dots,\bZ^{t})\sim\normal(0,\Kappa\otimes \id_{n})$
to mean that $\bZ^{1},\dots,\bZ^{t}$ is a collection of centered, jointly  Gaussian random vectors
in $\R^{n}$, with covariances $\E[\bZ^{s}(\bZ^{r})^{\sT}]=\Kappa_{s,r}\id_{n}$
for $1\leq s,r\leq t$.

For $k\in\mathbb{N}_{>0}$ and any $n,m\in\N_{>0}$, a function $\phi:\R^{n}\to\R^{m}$
is called \textit{pseudo-Lipschitz of order $k$} if there exists a constant $L$ such that for any $\bx,\by\in\R^{n}$,
\begin{equation}
\frac{\left\Vert \phi\left(\bx\right)-\phi\left(\by\right)\right\Vert _{2}}{\sqrt{m}}\leq L\left(1+\left(\frac{\left\Vert \bx\right\Vert _{2}}{\sqrt{n}}\right)^{k-1}+\left(\frac{\left\Vert \by\right\Vert _{2}}{\sqrt{n}}\right)^{k-1}\right)\frac{\left\Vert \bx-\by\right\Vert _{2}}{\sqrt{n}}.
\end{equation}
$L$ is then called the pseudo-Lipschitz constant of $\phi$. Note that this definition is the same as  introduced in \cite{Bayati2011}, apart from
a different scaling of the norm $\|\,\cdot\,\|_2$.  The normalization factors are introduced 
to simplify the analysis that follows.  For $k=1$, this definition coincides with the standard definition of a Lipschitz
function, for mapping between the normed spaces $(\reals^n,\|\,\cdot\,\|_2/\sqrt{n})$ and $(\reals^m,\|\,\cdot\,\|_2/\sqrt{m})$. 
In this case $L$ is the Lipschitz constant of $\phi$.

A sequence (in $n$) of pseudo-Lipschitz functions $\left\{ \phi_{n}\right\} _{n\in\mathbb{N}_{>0}}$
is called \emph{uniformly} pseudo-Lipschitz of order $k$ if, denoting by  $L_n$ is the pseudo-Lipschitz constant of order
$k$ of $\phi_n$, we have $L_n<\infty$ for each $n$ and $\limsup_{n\to\infty}L_{n}<\infty$. Note that the input
and output dimensions of each $\phi_{n}$ can depend on $n$. We call
any $L>\limsup_{n\to\infty}L_{n}$ a pseudo-Lipschitz constant of
the sequence.

\subsection{State evolution}

Fix $\delta>0$ and consider a sequence $m=m(n)\in\N$ such that
$m/n\to\delta$ as $n\to\infty$. For all $n$, we are given two sequences
of (deterministic) functions $\left\{ e_{t}:\mathbb{R}^{n}\to\mathbb{R}^{n}\right\} _{t\in\mathbb{N}}$
and $\left\{ g_{t}:\mathbb{R}^{m}\to\mathbb{R}^{m}\right\} _{t\in\mathbb{N}}$,
as well as a sequence of (deterministic) vectors $\bu^{0}=\bu^{0}\left(n\right)\in\mathbb{R}^{n}$,
and a sequence of random rectangular matrices $\bA=\bA\left(n\right)\in\mathbb{R}^{m\times n}$.

We next list  our assumptions (we refer to Section \ref{sec:Notations} for a summary of notations used in the paper):
\begin{enumerate}[font={\bfseries},label={(B\arabic*)}]
\item \label{assumption:gaussian-asym}$\bA$ has entries $(A_{ij})_{i\le m, j\le n} \sim_{iid}\normal(0,1/m)$.
\item For each $t\in\N$, the functions $e_{t}:\R^{n}\to\R^{n}$, $g_{t}:\R^{m}\to\R^{m}$
are uniformly Lipschitz (where uniformly is understood with respect to $n$). 
\item $\Vert \bu^{0}\Vert_{2}/\sqrt{n}$ converges to a finite constant as
$n\to\infty$. 
\item \label{assumption:u^0_conv}The following limit exists and is finite:
\begin{equation}
\Sigma_{0,0} \equiv \lim_{n\to\infty}\frac{1}{m}\big\< e_{0}(\bu^{0}),e_{0}(\bu^{0})\big\> .
\end{equation}
\item \label{assumption:mixed-conv-asym}For any $t\in\N_{>0}$ and any $s\ge 0$,
the following limit exists and is finite:
\begin{equation}
\lim_{n\to\infty}\frac{1}{m}\E\Big[\big\< e_{0}(\bu^{0}),e_{t}(\bZ)\big\> \Big]
\end{equation}
where $Z\sim\normal(0,s\id_{n})$.
\item \label{assumption:gaussian-conv-asym} For any $s,t\in\N_{>0}$ and any $\bS\in \reals^{2\times 2}$, $\bS\succeq \bzero$, 
the following limits exist and are finite: 
\begin{align}
\lim_{n\to\infty}\frac{1}{m}\E\Big[\big\< e_{s}(\bZ_{1}),e_{t}(\bZ_{2})\big\> \Big],\\
\lim_{n\to\infty}\frac{1}{m}\E\Big[\big\< g_{s}(\bZ_{3}),g_{t}(\bZ_{4})\big\> \Big],
\end{align}
where $(\bZ_{1},\bZ_{2})\sim\normal(0,\bS\otimes \id_{n})$
and $(\bZ_{3},\bZ_{4})\sim\normal(0,\bS\otimes \id_{m})$. 
\end{enumerate}
The technical assumptions \ref{assumption:u^0_conv},\ref{assumption:mixed-conv-asym}
and \ref{assumption:gaussian-conv-asym} allows to define two doubly infinite arrays
$(\Sigma_{s,r})_{s,r\ge 0}$ and $(\Tau_{s,r})_{ s,r\ge 1}$, through the following recursion, known as \emph{state evolution}.

We set $\Sigma_{0,0}$ using Assumption \ref{assumption:u^0_conv}. For each $t\ge 0$, given
$\left(\Sigma_{s,r}\right)_{0\leq s,r\leq (t-1)}$ and $\left(\Tau_{s,r}\right)_{1\leq s,r\leq t}$, we let, for $0\leq s\leq t$
\begin{align}
\Sigma_{t,s} & =\lim_{n\to\infty}\frac{1}{m}\E\left[\left\< e_{s}(\bZ_{\tau}^{s}),e_{t}(\bZ_{\tau}^{t}) \right\> \right]\, ,\\
\Tau_{t+1,s+1}& =\lim_{n\to\infty}\frac{1}{m}\E\left[\left\< g_{s}(\bZ_{\sigma}^{s}),g_{t}(\bZ_{\sigma}^{t})\right\> \right]\, ,
\end{align}
along with $\Tau_{s+1,t+1}=\Tau_{t+1,s+1}$ and $\Sigma_{s,t} = \Sigma_{t,s}$.
Here expectation is with respect to  $(\bZ_{\sigma}^{0},\dots,\bZ_{\sigma}^{t})\sim\normal (0,(\Sigma_{s,r})_{0\leq s,r\leq t}\otimes \id_{m})$,
$(\bZ_{\tau}^{1},\dots,\bZ_{\tau}^{t})\sim\normal(0,(\Tau_{s,r})_{1\leq s,r\leq t}\otimes \id_{n})$,
and it is understood that $\bZ_{\tau}^0 = \bu^0$. 

We will refer to the arrays $(\Sigma_{s,r})_{s,r\ge 0}$ and $(\Tau_{s,r})_{ s,r\ge 1}$ as to the \emph{state evolution
iterates} (and sometimes simply \emph{state evolution})  and denote them by $\left\{ \Tau_{s,t},\Sigma_{s,t}\middle\vert e_{t},g_{t},\bu^{0}\right\}$,
to make explicit the nonlinearities and initialization.

State evolution characterizes the AMP iteration of Eqs.~(\ref{eq:FIRSTasymmetricAMP_1}), (\ref{eq:FIRSTasymmetricAMP_2}), which we copy here for the reader's
convenience:
\begin{align}
\bu^{t+1} & =\bA^{\sT}g_{t}(\bv^{t})-\sd_{t}e_{t}(\bu^{t}),\label{eq:asymmetricAMP_1}\\
\bv^{t} & =\bA e_{t}(\bu^{t})-\sb_{t}g_{t-1}(\bv^{t-1})\, ,\label{eq:asymmetricAMP_2}
\end{align}
where the initial condition is given by $\bu^0$, and we let $g_{-1}(\,\cdot\,)=0$ by convention.
Further we use the following expression for the memory terms (which we shall refer to as `Onsager terms,' following the physics 
tradition):
\begin{equation}
\sd_{t}=\frac{1}{m}\E\left[\div\, g_{t}\left(\bZ_{\sigma}^{t}\right)\right],\quad\sb_{t}=\frac{1}{m}\E\left[\div\, e_{t}\left(\bZ_{\tau}^{t}\right)\right]\label{eq:onsager_asym}\, ,
\end{equation}
where $\bZ^t_{\sigma}\sim\normal(0,\Sigma_{t,t}\id_{m})$ and $\bZ^t_{\tau}\sim\normal(0,\Tau_{t,t}\id_n)$. We denote the asymmetric AMP iterates $(\bu^t,\bv^t)_{t\geq 0}$ by $\left\{ \bu^{t},\bv^{t}\middle\vert e_{t},g_{t},\bu^{0}\right\} $.

We are now in position to state our main result.
\begin{thm}\label{thm:AMP_convergence_asym}
Under assumptions \ref{assumption:gaussian-asym}-\ref{assumption:gaussian-conv-asym},
consider the asymmetric AMP iteration $\left\{ \bu^{t},\bv^{t}\middle\vert e_{t},g_{t},\bu^{0}\right\} $
along with its state evolution $\left\{ \Tau_{s,t},\Sigma_{s,t}\middle\vert e_{t},g_{t},\bu^{0}\right\} $.
Define for all $n$, 
\begin{align}
\big(\bZ_{\sigma}^{0},\dots,\bZ_{\sigma}^{t-1}\big) & \sim\normal\big(0,(\Sigma_{s,r})_{0\leq s,r\leq t-1}\otimes \id_{m}\big),\\
\big(\bZ_{\tau}^{1},\dots,\bZ_{\tau}^{t}\big) & \sim\normal\big(0,(\Tau_{s,r})_{1\leq s,r\leq t}\otimes \id_{n}\big),
\end{align}
such that the two collections $(\bZ_{\sigma}^{0},\dots,\bZ_{\sigma}^{t-1})$
and $(\bZ_{\tau}^{1},\dots,\bZ_{\tau}^{t})$ are independent of each other. 
Assume further that $\Sigma_{0,0},\dots,\Sigma_{t-1,t-1},\Tau_{1,1},\dots,\Tau_{t,t}>0$.

Then for any deterministic sequence $\phi_{n}:\left(\R^{n}\times\R^{m}\right)^{t}\times\R^{n}\to\R$ of uniformly pseudo-Lipschitz functions of order $k$, 
\begin{align}
\phi_{n}\big(\bu^{0},\bv^{0},\bu^{1},\bv^{1},\dots ,\bv^{t-1},\bu^{t}\big)
\approxP\E\left[\phi_{n}\left(\bu^{0},\bZ_{\sigma}^{0},\bZ_{\tau}^{1},\bZ_{\sigma}^{1},\dots,\bZ_{\sigma}^{t-1},\bZ_{\tau}^{t}\right)\right]\, .\label{eq:StateEvolutionConvAsymm}
\end{align}
\end{thm}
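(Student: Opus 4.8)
The plan is to first remove the asymmetry by the standard dilation. With $N=m+n$ one forms the symmetric Gaussian matrix
\[
\bM=\begin{pmatrix}\bzero & \bA\\ \bA^{\sT} & \bzero\end{pmatrix}\in\reals^{N\times N},
\]
stacks $\bx^{t}=(\bv^{t},\bu^{t})$, and bundles the nonlinearities into a single block map $f_{t}(\bx)=\big(g_{t}(\bx_{[1:m]}),\,e_{t}(\bx_{[m+1:N]})\big)$, which is uniformly Lipschitz whenever $e_{t},g_{t}$ are. After a small amount of index bookkeeping, \eqref{eq:asymmetricAMP_1}--\eqref{eq:asymmetricAMP_2} becomes a symmetric AMP recursion driven by $\bM$ whose state evolution is block-diagonal and decouples into the two recursions defining $(\Sigma_{s,r})$ and $(\Tau_{s,r})$. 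Hence it suffices to prove the symmetric analogue of Theorem~\ref{thm:AMP_convergence_asym} announced in Section~\ref{sec:Symmetric-AMP}; the reduction itself is the content of Section~\ref{sec:Asymmetric-AMP}.

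\textbf{Conditioning argument.} For the symmetric iteration I would argue by induction on $t$ using Bolthausen's technique. Condition on the $\sigma$-algebra generated by $\bx^{0},\dots,\bx^{t}$, or equivalently condition $\bM$ on the linear equations $\bM f_{s}(\bx^{s})=\bx^{s+1}+(\text{memory term})$ it is forced to satisfy for $s<t$. The conditional law of $\bM$ is then a fixed conditional mean (a matrix assembled from the past iterates and from the Gram matrix of $f_{0}(\bx^{0}),\dots,f_{t-1}(\bx^{t-1})$) plus $\bP^{\perp}\tilde{\bM}\bP^{\perp}$, where $\bP^{\perp}$ projects onto the complement of $\mathrm{span}\big(f_{0}(\bx^{0}),\dots,f_{t-1}(\bx^{t-1})\big)$ and $\tilde{\bM}$ is a fresh independent copy of the ensemble. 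Plugging this decomposition into $\bx^{t+1}$, the Onsager memory terms in \eqref{eq:onsager_asym} are exactly calibrated so that, after cancellation, $\bx^{t+1}$ is, conditionally on the past, an explicit linear combination of $\bx^{1},\dots,\bx^{t}$ plus an asymptotically fresh Gaussian vector with conditional covariance $\approx \big(\|f_{t}(\bx^{t})\|_{2}^{2}/N\big)\id$. Combining the induction hypothesis — that $(\bx^{0},\dots,\bx^{t})$ equals, in the pseudo-Lipschitz sense, a Gaussian vector with the state-evolution covariance — with concentration of pseudo-Lipschitz functions of Gaussians, and with laws of large numbers showing that the memory coefficients, Onsager coefficients and Gram entries converge to their state-evolution values (where assumptions \ref{assumption:u^0_conv}--\ref{assumption:gaussian-conv-asym} enter), closes the induction and yields \eqref{eq:StateEvolutionConvAsymm}. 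The positivity hypothesis $\Sigma_{s,s},\Tau_{s,s}>0$ is what guarantees the freshly generated one-dimensional components do not collapse.

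\textbf{The degeneracy obstacle and the perturbation device.} The one step that genuinely breaks for non-separable $e_{t},g_{t}$ is the inversion of the Gram matrix of $f_{0}(\bx^{0}),\dots,f_{t-1}(\bx^{t-1})$ and the definition of $\bP^{\perp}$: these vectors may become asymptotically collinear, so the Gram matrix is singular in the limit and the conditional law of $\bx^{t+1}$ is no longer controlled. To circumvent this I would replace $e_{t},g_{t}$ by random Lipschitz perturbations $e_{t}^{\eps}=e_{t}+\eps\,\omega_{t}$, $g_{t}^{\eps}=g_{t}+\eps\,\zeta_{t}$ (with $\omega_{t},\zeta_{t}$ independent, e.g.\ random linear, maps) and prove that for each fixed $\eps>0$ the perturbed iterates are, almost surely over the perturbation, non-degenerate — so the conditioning argument above applies verbatim and gives state evolution for the perturbed problem. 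I expect this to be exactly the role of the LAMP (``Long AMP'') reformulation mentioned in the introduction: a longer but structurally cleaner iteration for which non-degeneracy of the Gram matrix is manifest and whose distance to the original AMP is quantitatively controlled. Finally one removes the perturbation: the state-evolution arrays $\{\Tau^{\eps}_{s,t},\Sigma^{\eps}_{s,t}\}$ converge to $\{\Tau_{s,t},\Sigma_{s,t}\}$ as $\eps\to0$ directly from the defining recursion and dominated convergence, while $\|\bu^{t,\eps}-\bu^{t}\|_{2}/\sqrt{n}$ and $\|\bv^{t,\eps}-\bv^{t}\|_{2}/\sqrt{m}$ tend to $0$ in probability by the Lipschitz property of the nonlinearities together with an operator-norm bound on $\bA$; an $\eps$--$\delta$ argument exploiting the pseudo-Lipschitz continuity of $\phi_{n}$ then transfers \eqref{eq:StateEvolutionConvAsymm} to the unperturbed problem.

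\textbf{Main obstacle.} The conditioning computation is classical; the real difficulty is making the perturbation argument \emph{uniform in $n$}. Two things require care: (i) choosing the random perturbation so that non-degeneracy of the perturbed iterates holds almost surely — the iterates depend on the perturbation through a composition of $t$ nonlinear maps, so general-position statements are not automatic — and (ii) controlling the error propagation $\bu^{t,\eps}\mapsto\bu^{t}$, $\bv^{t,\eps}\mapsto\bv^{t}$ with constants depending only on $t$ and the limiting Lipschitz constants, not on $n$, so that the limits $\eps\to0$ and $n\to\infty$ may be exchanged. I expect the bulk of the work, and the reason for the LAMP detour, to lie in these uniform stability and non-degeneracy estimates.
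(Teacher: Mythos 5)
Your overall plan --- reduce the asymmetric iteration to a symmetric one, then prove the symmetric case by Bolthausen conditioning plus a perturbation/LAMP device --- is indeed the paper's strategy, and your sketch of the symmetric argument (conditional law of $\bM$, role of the Onsager terms, non-degeneracy of the Gram matrix, $\eps\to 0$ removal) tracks the actual proof of Theorem~\ref{thm:AMP_convergence} quite closely. However, your reduction step has a genuine gap.

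You reduce to the matrix $\bM=\begin{pmatrix}\bzero & \bA\\ \bA^{\sT} & \bzero\end{pmatrix}$ with zero diagonal blocks and a single bundled nonlinearity $f_t(\bx)=\big(g_t(\bx_{[1:m]}),e_t(\bx_{[m+1:N]})\big)$. This fails on two counts. First, $\bM$ is not a $\GOE(N)$ matrix (its diagonal blocks are deterministically zero), so the symmetric Theorem~\ref{thm:AMP_convergence}, which is stated and proved only under assumption~\ref{assumption:gaussian_matrix}, does not apply to it; you would need to re-derive a block-structured variant of the symmetric theorem, which is not ``small index bookkeeping.'' Second, the bundled $f_t$ updates $\bu$ and $\bv$ in parallel, whereas the asymmetric iteration \eqref{eq:asymmetricAMP_1}--\eqref{eq:asymmetricAMP_2} is sequential: $\bv^{t}$ depends on $\bu^{t}$ (same index) and $\bu^{t+1}$ depends on $\bv^{t}$, so one symmetric step must account for two half-steps. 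The paper (Section~\ref{sec:Asymmetric-AMP}) resolves both issues at once: it adds independent diagonal blocks $\bB\sim\GOE(m)$ and $\sqrt{\delta}\bC\sim\GOE(n)$ so that $\bA_s$ is a genuine $\GOE(N)$ matrix, and it uses time-doubled, alternating nonlinearities $f_{2t}(\bx)=\sqrt{(\delta+1)/\delta}\,(0,e_t(\bx_{[m+1:N]}))$ and $f_{2t+1}(\bx)=\sqrt{(\delta+1)/\delta}\,(g_t(\bx_{[1:m]}),0)$. Because each $f_s$ has a zero block, the spurious contributions $\bB g_t(\cdot)$ and $\bC e_t(\cdot)$ from the added diagonal blocks land exactly in the coordinates that the next $f$ ignores, so they never contaminate the tracked iterates; and the doubling of the time index aligns with the sequential structure of the asymmetric recursion. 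With your parallel $f_t$ on the full GOE matrix, those spurious terms would feed back into the recursion and the identification with \eqref{eq:asymmetricAMP_1}--\eqref{eq:asymmetricAMP_2} breaks; with $\bM$ alone, assumption \ref{assumption:gaussian_matrix} fails. You should replace the parallel bundling by the alternating, time-doubled construction on a genuine GOE matrix.
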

The proof of this theorem is presented in Section \ref{sec:Asymmetric-AMP}, and is obtained by reduction to the symmetric case, which is treated 
in the next section.

As mentioned above, we use Eq.~(\ref{eq:onsager_asym}) to define the coefficients $\sb_t$, $\sd_t$ because this simplifies the proofs.
In practice, this definition is replaced by an empirical estimate, e.g. as in Eq.~(\ref{eq:FIRSTonsager_asym}). State evolution follows for these versions of AMP
provided such estimates of $\sb_t$, $\sd_t$ are consistent. 

\begin{coro}\label{coro:Empirical}
Consider the modified AMP iteration whereby Eqs.~(\ref{eq:asymmetricAMP_1}), (\ref{eq:asymmetricAMP_2}) are replaced by
\begin{align}
\hbu^{t+1} & =\bA^{\sT}g_{t}(\hbv^{t})-\hsd_{t}e_{t}(\hbu^{t}),\label{eq:asymmetricAMP_1_emp}\\
\hbv^{t} & =\bA e_{t}(\hbu^{t})-\hsb_{t}g_{t-1}(\hbv^{t-1})\, ,\label{eq:asymmetricAMP_2_emp}
\end{align}
with the initialization $\hbu^0=\bu^0$, where  $\hsb_t= \hsb_t(\hbu^0,\hbv^0,\dots, \hbv^{t-1},\hbu^t)$ and 
$\hsd_t = \hsd_t(\hbu^0,\hbv^0,\dots, \hbv^{t-1},\hbu^t,\hbv^t)$ are two estimators of $\sb_t$, $\sd_t$. Assume the same conditions as Theorem \ref{thm:AMP_convergence_asym}. If, for each $t$,
$\hsb_t(\,\cdot\,)$, $\hsd_t(\,\cdot\,)$  are such that
\begin{align}
 \hsb_t(\hbu^0,\hbv^0,\dots, \hbv^{t-1},\hbu^t) \approxP \sb_t\,,\;\;\;\;\;
 \hsd_t(\hbu^0,\hbv^0,\dots, \hbv^{t-1},\hbu^t,\hbv^t) \approxP \sd_t\, ,\label{eq:ConsistentOnsager}
\end{align}
then the iterates $(\hbu^t,\hbv^t)_{t\ge 0}$ satisfy state evolution, namely Eq.~(\ref{eq:StateEvolutionConvAsymm})
holds with $(\bu^t,\bv^t)_{t\ge 0}$ replaced by $(\hbu^t,\hbv^t)_{t\ge 0}$.
\end{coro}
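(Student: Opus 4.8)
The plan is to deduce Corollary~\ref{coro:Empirical} from Theorem~\ref{thm:AMP_convergence_asym} by an induction on $t$, showing that the modified iterates $(\hbu^t,\hbv^t)$ stay close (in the appropriate $\approxP$ sense for pseudo-Lipschitz test functions) to the original iterates $(\bu^t,\bv^t)$ with the deterministic Onsager coefficients $\sb_t,\sd_t$. The key observation is that the only difference between the two recursions is that $\sb_t,\sd_t$ (deterministic constants) are replaced by the random quantities $\hsb_t,\hsd_t$, and the hypothesis \eqref{eq:ConsistentOnsager} guarantees these differences are $o_P(1)$ \emph{given} that the history up to time $t$ already behaves like the original iterates. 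So the induction hypothesis at step $t$ should be the full joint statement: for every uniformly pseudo-Lipschitz $\phi_n$ of order $k$, $\phi_n(\hbu^0,\hbv^0,\dots,\hbv^{t-1},\hbu^t)\approxP \E[\phi_n(\bu^0,\bZ_\sigma^0,\dots,\bZ_\sigma^{t-1},\bZ_\tau^t)]$, together with the companion statement that $\hsb_t\approxP\sb_t$ and $\hsd_t\approxP\sd_t$ (immediate from the hypothesis and the induction hypothesis, since $\hsb_t$ is a fixed function of a history that is $\approxP$-close to the true one). One should also carry along, as part of the induction, that $\|\hbu^s-\bu^s\|_2/\sqrt n\approxP 0$ and $\|\hbv^s-\bv^s\|_2/\sqrt m\approxP 0$ for $s\le t$; these stronger pointwise statements are what actually feed the next step.

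The inductive step goes as follows. Assume the closeness at time $\le t$. Write
\begin{align}
\hbv^{t}-\bv^{t} &= \bA\big(e_t(\hbu^t)-e_t(\bu^t)\big) - \hsb_t g_{t-1}(\hbv^{t-1}) + \sb_t g_{t-1}(\bv^{t-1})\, ,\\
\hbv^{t}-\bv^{t} &= \bA\big(e_t(\hbu^t)-e_t(\bu^t)\big) - (\hsb_t-\sb_t) g_{t-1}(\hbv^{t-1}) - \sb_t\big(g_{t-1}(\hbv^{t-1})-g_{t-1}(\bv^{t-1})\big)\, .
\end{align}
The first term is controlled by the Lipschitz property of $e_t$ together with the operator-norm bound $\|\bA\|_{\op}\le C$ with high probability (a standard fact for Gaussian $\bA$ with aspect ratio $\to\delta$), giving $\|\bA(e_t(\hbu^t)-e_t(\bu^t))\|_2/\sqrt m \le C L_{e_t}\|\hbu^t-\bu^t\|_2/\sqrt n \approxP 0$ by the induction hypothesis. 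The third term is $\approxP 0$ for the same reason, using the Lipschitz property of $g_{t-1}$ and the time-$(t-1)$ induction hypothesis. The middle term is $(\hsb_t-\sb_t)$ times $g_{t-1}(\hbv^{t-1})$, whose normalized norm $\|g_{t-1}(\hbv^{t-1})\|_2/\sqrt m$ is $O_P(1)$ (it converges by the $t-1$ induction hypothesis applied to the pseudo-Lipschitz function $\bv\mapsto\|g_{t-1}(\bv)\|_2^2/m$, which is pseudo-Lipschitz of order $k$ when $g_{t-1}$ is Lipschitz), while $\hsb_t-\sb_t\approxP 0$; the product is $\approxP 0$. An identical argument handles $\hbu^{t+1}-\bu^{t+1}$ using \eqref{eq:asymmetricAMP_1_emp}, \eqref{eq:asymmetricAMP_1}, the consistency of $\hsd_t$, and the just-established closeness of $\hbv^t$ to $\bv^t$. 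This closes the pointwise part of the induction. Finally, for an arbitrary uniformly pseudo-Lipschitz $\phi_n$ of order $k$, one writes
\begin{align}
\phi_n(\hbu^0,\hbv^0,\dots,\hbu^{t+1}) &\approxP \phi_n(\bu^0,\bv^0,\dots,\bu^{t+1}) \approxP \E\big[\phi_n(\bu^0,\bZ_\sigma^0,\dots,\bZ_\tau^{t+1})\big]\, ,
\end{align}
where the first $\approxP$ uses the pseudo-Lipschitz bound on $\phi_n$ applied to the difference of arguments together with the fact that each block has $O_P(1)$ normalized norm (again a consequence of Theorem~\ref{thm:AMP_convergence_asym} for the original iterates), and the second $\approxP$ is exactly Theorem~\ref{thm:AMP_convergence_asym}.

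The main obstacle, and the only place requiring genuine care, is the first $\approxP$ in the last display: turning the pointwise closeness $\|\hbu^s-\bu^s\|/\sqrt n\approxP 0$ into closeness of $\phi_n$-values. Because $\phi_n$ is only pseudo-Lipschitz of order $k$ (not Lipschitz), its local Lipschitz modulus grows with the norm of the arguments, so one needs to know a priori that the arguments $\bu^s,\bv^s$ (hence also $\hbu^s,\hbv^s$, which are close to them) have normalized $\ell_2$-norms that are bounded in probability. This boundedness follows by applying Theorem~\ref{thm:AMP_convergence_asym} to the pseudo-Lipschitz functions $(\cdots)\mapsto \|\bu^s\|_2^2/n$ and $\|\bv^s\|_2^2/m$, whose limiting expectations are finite under the stated positivity assumptions on $\Sigma_{s,s},\Tau_{s,s}$. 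With this in hand, the pseudo-Lipschitz inequality gives
\begin{align}
\frac{|\phi_n(\hat w)-\phi_n(w)|}{1} &\le L\Big(1+(\|\hat w\|_2/\sqrt{N})^{k-1}+(\|w\|_2/\sqrt{N})^{k-1}\Big)\frac{\|\hat w-w\|_2}{\sqrt N}\, ,
\end{align}
(with $w,\hat w$ the concatenated histories, $N$ the total dimension, and appropriate per-block normalization), and the right-hand side is a bounded-in-probability factor times an $o_P(1)$ factor, hence $\approxP 0$. Everything else is bookkeeping; no new probabilistic input beyond Theorem~\ref{thm:AMP_convergence_asym}, the Gaussian operator-norm bound, and the hypothesis \eqref{eq:ConsistentOnsager} is needed.
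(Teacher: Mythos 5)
Your proof is correct and follows essentially the same route as the paper's: an induction establishing $\|\hbu^s-\bu^s\|_2/\sqrt n\approxP 0$ and $\|\hbv^s-\bv^s\|_2/\sqrt m\approxP 0$ via the same three-term decomposition (operator-norm bound on $\bA$, Lipschitz continuity of $e_t,g_{t-1}$, and consistency of $\hsb_t,\hsd_t$). The only difference is that you spell out the final transfer from pointwise closeness of the iterates to closeness of pseudo-Lipschitz test-function values, a step the paper leaves implicit since it was carried out explicitly in the proof of Theorem~\ref{thm:AMP_convergence}.
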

The proof of this statement is deferred to Section \ref{sec:Asymmetric-AMP}.

Two choices of $\hsb_t, \hsd_t$ that satisfy the assumptions are:
\begin{itemize}
\item The empirical values
\begin{align}
\hsb_{t}=\frac{1}{m}\div\, e_{t}(\hbu^{t}) \, ,\;\;\;\hsd_{t}=\frac{1}{m}\div\, g_{t}(\hbv^{t}) \, .
\end{align}
By Theorem \ref{thm:AMP_convergence_asym}, if $\div\, e_{t}(\,\cdot\,)/m$, $\div\, g_{t}(\,\cdot\,)/m$
are uniformly pseudo-Lipschitz, then the assumptions of Corollary \ref{coro:Empirical} hold, and hence
we can apply state evolution. 
\item  As an alternative,
\begin{align}
\hsb_{t}=\frac{n \<\hbu^t,e_t(\hbu^t)\>}{m\|\hbu^t\|_2^2} \, ,\;\;\;\hsd_{t}=\frac{\<\hbv^t,g_t(\hbv^t)\>}{\|\hbv^t\|_2^2} \, .
\end{align}
Consistency follows (for $e_t(\,\cdot\,)$, $g_t(\,\cdot\,)$ uniformly Lipschitz) from Theorem \ref{thm:AMP_convergence_asym}
and Gaussian integration by parts (in particular, Stein's lemma; see Lemma \ref{lem:Stein}).
\end{itemize}

\section{Symmetric AMP}
\label{sec:Symmetric-AMP}
\label{subsec:Symmetric-SE}

For all $n$, we are given a (deterministic) vector
$\bx^{0}\in\R^{n}$ and a sequence of (deterministic) functions $\left\{ f_{t}:\mathbb{R}^{n}\to\mathbb{R}^{n}\right\} _{t\in\mathbb{N}}$.
These will be referred to as the \emph{setting} $\left\{ \bx^{0},f_{t}\right\} $.
Given a sequence of (random) symmetric matrices $\bA=\bA(n)\in\mathbb{R}^{n\times n}$,
we consider the following symmetric AMP iteration
\begin{align}
\bx^{t+1} & =\bA\bm^{t}-\sb_{t}\bm^{t-1}\, ,\label{eq:AMP_1}\\
\bm^{t} & =f_{t}(\bx^{t})\, ,\label{eq:AMP_2}
\end{align}
for $t\in\mathbb{N}$, with  initialization $\bx^0$ (and $\bm^{-1}=0$). Here 
\begin{align}
\sb_{t}=\frac{1}{n}\E\left[\div\, f_{t}(\bZ^{t})\right],\label{eq:Onsager_def}
\end{align}
where $\bZ^{t}\sim\normal(0,\Kappa_{t,t}\id_n)$ and $\Kappa_{t,t}$ will be defined via the state evolution recursion below
(see in particular Eq.~(\ref{eq:state_evolution})). We denote this AMP
recursion as $\left\{ \bx^{t},\bm^{t}\middle\vert f_{t},\bx^{0}\right\} $,
to make explicit the dependence on the setting.

We insist on the fact that $\bA$, $f_{t}$ and $\bx^{0}$ depend on $n$.
However, we will drop this dependence most of the time to ease the reading.

We make the following assumptions.
\begin{enumerate}[font={\bfseries},label={(A\arabic*)}]
\item \label{assumption:gaussian_matrix}$\bA$ is sampled from the Gaussian
orthogonal ensemble $\GOE(n)$, i.e. $\bA=\bG+\bG^{\sT}$ for $\bG\in\reals^{n\times n}$
with i.i.d. entries $G_{ij}\sim\normal(0,1/(2n))$.
\item For each $t\in\N$, $f_{t}:\R^{n}\to\R^{n}$ is uniformly
Lipschitz (as a sequence in $n$). 
\item \label{assumption: x0} $\| \bx^{0}\| _{2}/\sqrt{n}$
converges to a finite constant as $n\to\infty$. 
\item \label{assumption:x^0-conv} The following limit exists and is finite:
\begin{equation}
\Kappa_{1,1}\equiv \lim_{n\rightarrow\infty}\frac{1}{n}\big\< f_{0}(\bx^{0}),f_{0}(\bx^{0})\big\> .
\end{equation}
\item \label{assumption:mixed-conv} For any $t\in\mathbb{N}_{>0}$ and any
$s\ge 0$, the following limit exists and is finite: 
\begin{align}
\lim_{n\to\infty}\frac{1}{n}\E\left[\big\< f_{0}(\bx^{0}),f_{t}(\bZ)\big\> \right]
\end{align}
where $\bZ\in\R^{n}$, $\bZ\sim\normal(0,s\id_{n})$.
\item \label{assumption:gaussian-conv} For any $s,t\in\N_{>0}$ and
any $\bS\in\reals^{2\times 2}$, $\bS\succeq \bzero$, the following
limit exists and is finite: 
\begin{equation}
\lim_{n\rightarrow\infty}\frac{1}{n}\E\left[\big\< f_{s}(\bZ),f_{t}(\bZ')\big\> \right]
\end{equation}
where $(\bZ,\bZ')\in(\R^{n})^{2}$, $(\bZ,\bZ')\sim\normal(0,\bS\otimes \id_{n})$. 
\end{enumerate}
Given assumptions \ref{assumption:x^0-conv}, \ref{assumption:mixed-conv}
and \ref{assumption:gaussian-conv} we can define a doubly infinite array $(\Kappa_{s,r})_{s,r\ge 1}$ via a
\emph{state evolution} recursion as follows.

The initial condition $\Kappa_{1,1}$ is given by assumption \ref{assumption:x^0-conv}.
Once $\bKappa^{(t)} =\left(\Kappa_{s,r}\right)_{s,r\leq t}$ is defined,
let $(\bZ^{1},\dots,\bZ^{t})\sim\normal(0,\bKappa^{(t)}\otimes \id_{n})$
and define, for $0\leq s\leq t$,
\begin{align}
\Kappa_{t+1,s+1} & =\lim_{n\to\infty}\frac{1}{n}\E\left[\big\< f_{s}(\bZ^{s}),f_{t}(\bZ^{t})\big\> \right]\, .\label{eq:state_evolution}
\end{align}
where it is understood that $\bZ^0= \bx^0$ and $\Kappa_{s+1,t+1} = \Kappa_{t+1,s+1}$ is fixed by symmetry.
We will refer to $(\Kappa_{s,t})_{s,t\ge 1}$ as to the state evolution iterates, and we will emphasize their dependence on the setting 
 denoting them by $\left\{ \Kappa_{s,t}\middle\vert f_{t},\bx^{0}\right\}$.
The Onsager term in Eq.~(\ref{eq:AMP_1}) is defined as per Eq.~(\ref{eq:Onsager_def}), with
$\bZ^{t}\sim\normal(0,\Kappa_{t,t}\id_{n})$, and $\Kappa_{t,t}$ given by state evolution.

We have can now state the following state evolution characterization of symmetric AMP, which is analogous to
Theorem \ref{thm:AMP_convergence_asym}.
\begin{thm}
\label{thm:AMP_convergence}Under assumptions \ref{assumption:gaussian_matrix}-\ref{assumption:gaussian-conv},
consider the AMP iteration $\left\{ \bx^{t},\bm^{t}\middle\vert f_{t},\bx^{0}\right\} $.
Define for all $n$, 
\begin{equation}
(\bZ^{1},\dots,\bZ^{t+1})\sim\normal\big(0,\left(\Kappa_{s,r}\right)_{s,r\leq t+1}\otimes \id_{n}\big).
\end{equation}
Assume further that $\Kappa_{1,1},\dots,\Kappa_{t,t}>0$. For any sequence
of uniformly pseudo-Lipschitz functions $\left\{ \phi_{n}:\left(\mathbb{R}^{n}\right)^{t+2}\to\mathbb{R}\right\} $,
\begin{equation}
\phi_{n}\big(\bx^{0},\bx^{1},\dots,\bx^{t+1}\big)\approxP\E\left[\phi_{n}\big(\bx^{0},\bZ^{1},\dots,\bZ^{t+1}\big)\right].
\end{equation}
\end{thm}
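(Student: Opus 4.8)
\textbf{Proof strategy for Theorem \ref{thm:AMP_convergence}.}
The plan is to follow Bolthausen's conditioning technique, adapted to non-separable nonlinearities via the perturbation idea outlined in the introduction. First I would fix $t$ and run the recursion up to step $t+1$. The central object is the conditional distribution of $\bA$ given the sigma-algebra $\mathscr{S}_t$ generated by the iterates $\bx^1,\dots,\bx^t$ (equivalently by $\bm^0,\dots,\bm^{t-1}$ and the images $\bA\bm^0,\dots,\bA\bm^{t-1}$). Since $\bA$ is GOE and the conditioning is on a finite set of linear equations $\bA\bm^s = \bx^{s+1}+\sb_s\bm^{s-1}$, the conditional law of $\bA$ is Gaussian: it decomposes as a deterministic part (a projection expressible through the matrices $\bM_t=[\bm^0|\cdots|\bm^{t-1}]$ and $\bX_t=[\bx^1|\cdots|\bx^t]$ and a Gram-type inverse) plus an independent GOE matrix $\tilde\bA$ compressed to the orthogonal complement of $\mathrm{span}(\bm^0,\dots,\bm^{t-1})$. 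Plugging this decomposition into the definition of $\bx^{t+1}=\bA\bm^t-\sb_t\bm^{t-1}$ yields a representation of $\bx^{t+1}$ as an explicit linear combination of previous $\bx^s$ plus a fresh Gaussian term of the form $\tilde\bA \mathsf{P}^\perp_t \bm^t$, whose conditional covariance is governed by the residual of $\bm^t=f_t(\bx^t)$ after projecting onto previous $\bm$'s.

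The inductive claim I would set up is a joint statement: for each $t$, (i) the empirical behavior of $(\bx^0,\bx^1,\dots,\bx^t)$ is captured by the Gaussian $(\bx^0,\bZ^1,\dots,\bZ^t)$ with covariance $(\Kappa_{s,r})_{s,r\le t}\otimes\id_n$ in the $\approxP$ sense against all uniformly pseudo-Lipschitz test functions, (ii) the normalized Gram matrix $\tfrac1n\bM_t^{\sT}\bM_t$ converges to the matrix with entries $\Kappa_{s+1,r+1}$, and (iii) the Onsager coefficients $\sb_s$ and the correction terms in the conditional-mean expansion concentrate around their state-evolution values. The induction step: conditioning on $\mathscr{S}_t$ and using (ii) at level $t$, the residual $\mathsf{P}^\perp_t\bm^t$ has squared norm concentrating on $n(\Kappa_{t+1,t+1}-(\text{projection terms}))$, so the fresh noise term $\tilde\bA\mathsf{P}^\perp_t\bm^t$ is, conditionally, approximately $\normal(0,\text{(that constant)}\,\id_n/n)$ up to a small-rank correction and lower-order fluctuations; combining with the deterministic part and matching the definition of $\Kappa_{t+1,\cdot}$ from state evolution gives (i) at level $t+1$. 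The key analytic inputs are a conditional strong law / CLT for $\tilde\bA v$ with $v$ deterministic (or independent of $\tilde\bA$), concentration of $\chi^2$-type quantities, Lipschitz continuity of $f_t$ to propagate $\approxP$ through the nonlinearity, and Stein's lemma (Lemma \ref{lem:Stein}) to identify the projection coefficients with divergences, hence with the Onsager term.

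Here is where the non-separability bites, and this is the main obstacle: for the conditioning to produce a clean Gaussian decomposition, the vectors $\bm^0,\dots,\bm^{t-1}$ must be linearly independent (so that $\bM_t^{\sT}\bM_t$ is invertible and $\mathsf{P}^\perp_t$ has the right rank deficiency); for separable $f_t$ this is automatic from the non-degeneracy hypothesis $\Kappa_{s,s}>0$, but for non-separable $f_t$ the images $f_1(\bx^1),\dots,f_t(\bx^t)$ can be collinear even when the limiting Gram matrix is nondegenerate, or the limiting Gram matrix itself can be singular. I would handle this exactly as the introduction promises, in two layers. First, prove the theorem for a \emph{perturbed} setting in which each $f_t$ is replaced by $f_t^{(\epsilon)}(\bx)=f_t(\bx)+\epsilon\,\mathsf{B}_t\bx$ (or $f_t(\bx)+\epsilon\,\mathsf{B}_t \bx$ with $\bx$ replaced by a fresh auxiliary Gaussian direction) with $\mathsf{B}_t$ a suitable random matrix, show that almost surely in the perturbation the iterates $\bm^s$ are linearly independent and the state-evolution covariance is nondegenerate, so that the Bolthausen argument above goes through verbatim. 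Second, show that both sides of the $\approxP$ statement — AMP iterates and state evolution iterates — are uniformly continuous in $\epsilon$ (using the uniform Lipschitz bounds on $f_t$, stability of the finite AMP recursion, and continuity of the finitely many state-evolution maps), and let $\epsilon\downarrow 0$. (In the actual paper this is routed through the LAMP construction, but the essential point — perturb to restore non-degeneracy, then pass to the limit — is the same.) The routine parts are the concentration lemmas and the bookkeeping of the conditional-mean expansion; the conceptual work is entirely in managing the degeneracy.
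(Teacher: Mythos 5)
Your strategy agrees with the paper's at the architectural level: Gaussian conditioning, induction on $t$, and a ``perturb then let $\epsilon\downarrow 0$'' layer to handle degeneracy. Two concrete deviations are worth flagging. First, you condition directly on the AMP iterates, as in Bolthausen and Bayati--Montanari, whereas the paper routes through the LAMP recursion (\ref{eq:new_AMP_1})--(\ref{eq:new_AMP_2}), in which the Onsager term $\sb_t\bm^{t-1}$ is replaced by the exact projection $\bH_{t-1}\balpha^{t}$ onto the span of past iterates. The benefit of LAMP is modularity: the conditional decomposition (Lemma~\ref{lem:new_AMP_conditional_distibution}) holds exactly by construction, state evolution is then established for LAMP (Lemma~\ref{lem:new_AMP_convergence}), and the Onsager--projection reconciliation is isolated in a single lemma (Lemma~\ref{lem:new_AMP_approximates_AMP}, via Stein's lemma). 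With direct conditioning you must carry out that reconciliation inside every inductive step --- feasible, but heavier bookkeeping, and exactly what LAMP is designed to avoid. Second, your perturbation $f_{t}^{(\epsilon)}(\bx)=f_{t}(\bx)+\epsilon\,\mathsf{B}_{t}\bx$ is not the right object; the paper uses the additive form $f_{t}^{\epsilon y}(\cdot)=f_{t}(\cdot)+\epsilon\,\by^{t}$ with $\by^{t}\sim\normal(0,\id_n)$ i.i.d., independent of $\bA$. The additive choice is essentially forced: it preserves the Lipschitz constant exactly; it yields the perturbed state evolution in closed form, $\Kappa_{s,t}^{\epsilon}=\Kappa_{s,t}+\epsilon^{2}\mathbbm{1}_{s=t}$, so Lemma~\ref{lem:SE_eps_to_zero} is immediate; and, decisively, conditionally on the past the perturbed residual $\bq_{\perp}^{\epsilon y,t}$ is \emph{exactly} $\normal\big(\bP_{Q_{t-1}}^{\perp}f_{t}(\bh^{\epsilon y,t}),\,\epsilon^{2}\bP_{Q_{t-1}}^{\perp}\big)$, reducing the non-degeneracy lower bound to a $\chi^{2}$-tail estimate (Lemma~\ref{lem:good-cond-perturbed}). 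With a multiplicative $\epsilon\mathsf{B}_{t}\bx$, the conditional law of the perturbed $\bq^{t}$ is entangled with the law of $\bh^{t}$ and there is no comparably clean lower bound. Your parenthetical variant --- evaluating the perturbation at a fresh auxiliary Gaussian direction --- is the correct repair and collapses to the paper's construction; with that fix the plan is sound and the rest of your outline matches the paper's proof.
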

The proof of this theorem is presented in Section \ref{sec:ProofSymmetric}. We also note that an analogue of Corollary \ref{coro:Empirical}
applies to this case as well, and $\sb_t$ can be replaced by a consistent estimator $\hsb_t$.

\section{Proof of Theorem \ref{thm:AMP_convergence} (Symmetric AMP)}
\label{sec:ProofSymmetric}

In this section we prove Theorem \ref{thm:AMP_convergence} using a sequence of lemmas, whose proofs are postponed
to Section \ref{subsec:Proof-Lemmas}. We will also try to motivate the main steps. 
Throughout this section  and the next, Assumptions \ref{assumption:gaussian_matrix}-\ref{assumption:gaussian-conv} hold.

\subsection{Notations}
\label{sec:Notations}

We generally denote scalars by lower case letters, e.g. $a$, $b$, $c$, vectors by lower case
boldface, e.g. $\ba$, $\bb$, $\bc$, and matrices by upper case boldface, e.g. $\bA$, $\bB$, $\bC$.
We also use the upper case to emphasize that we are referring to a random variable, and --with a slight abuse of
the convention-- upper case boldface for random vectors.

For two random variables $X$ and $Y$ and a $\sigma$-algebra
$\mathfrak{S}$, we use $X\vert_{\mathfrak{S}}\ed Y$
to mean that for any integrable function $\phi$ and any $\mathfrak{S}$-measurable
bounded random variable $Z$, $\mathbb{E}\left[\phi\left(X\right)Z\right]=\mathbb{E}\left[\phi\left(Y\right)Z\right]$.
In words, $X$ is distributed as $Y$ \textit{conditional on} $\mathfrak{S}$.
If $\mathfrak{S}$ is the trivial $\sigma$-algebra, we simply write $X\ed Y$, i.e. $X$ is distributed
as $Y$.

For two vectors $\bx,\by\in\mathbb{R}^{n}$, we denote their inner product by 
$\< \bx,\by\> =\sum_{i=1}^{n}x_{i}y_{i}$, and  the associated norm by $\Vert\bx\Vert_{2}$.
For two matrices $\bX$, $\bY\in\reals^{m\times n}$, $\<\bX,\bY\> = \Tr(\bX^{\sT}\bY)$ is their scalar product
when viewed as vectors.

We use $\id_{n}$ to denote the $n\times n$ identity matrix. We use
$\sigma_{\min}\left(\bQ\right)$ and $\sigma_{\max}\left(\bQ\right)=\left\Vert \bQ\right\Vert _{\op}$
to denote the minimum and maximum singular values of the matrix $\bQ$.
For two matrices $\bQ$ and $\bP$ of the same number of rows, $\left[\bQ\vert \bP\right]$
denotes the matrix by concatenating $\bQ$ and $\bP$ horizontally. For
any matrix $\bM$, we denote the orthogonal projection onto its range
$\bP_{M}$, and we let $\bP_{M}^{\perp}=\id-\bP_{M}$. When $\bM$ is an empty matrix,
$\bP_{M}=0$ and $\bP_{M}^{\perp}=\id$. When $\bM$ has full column rank,
$\bP_{M}=\bM\left(\bM^{\sT}\bM\right)^{-1}\bM^{\sT}$.

If $f:\R^{n}\to\R^{n}$ is a Lipschitz function, it is almost everywhere
differentiable (w.r.t. the Lebesgue measure), and thus we can define
almost everywhere the quantity 
\begin{equation}
\div \, f(\bx)=\sum_{i=1}^{n}\frac{\partial f_{i}}{\partial x_{i}}(\bx)
\end{equation}
where $f_{i}(\bx)$ is the $i$-th coordinate of $f(\bx)$.

We say that a sequence of events that depends on $n$, hold with
high probability (w.h.p.) if it holds with probability converging
to 1 as $n\to\infty$.

We define the Wasserstein distance (of order $2$) between two probability
measures $\mu$ and $\nu$ as 
\begin{equation}
W_{2}\left(\mu,\nu\right)=\inf_{(X,Y)}\E\left[\left(X-Y\right)^{2}\right]^{1/2},
\end{equation}
where the infimum is taken over all couplings of $\mu$ and $\nu$,
i.e. all random variables $(X,Y)$ such that $X\sim\mu$ and $Y\sim\nu$
marginally.

\subsection{Long AMP}

The main idea of the proof is to analyze a different recursion than
the AMP recursion (\ref{eq:AMP_1}), (\ref{eq:AMP_2}). This new recursion
satisfies the conclusion of Theorem \ref{thm:AMP_convergence} and
will be a good approximation of the AMP recursion in the asymptotic
$n\rightarrow\infty$. It is defined as:
\begin{align}
\bh^{t+1} & =\bP_{Q_{t-1}}^{\perp}\bA\bP_{Q_{t-1}}^{\perp}\bq^{t}+\bH_{t-1}\balpha^{t},\label{eq:new_AMP_1}\\
\bq^{t} & =f_{t}\left(\bh^{t}\right),\label{eq:new_AMP_2}
\end{align}
where at each step $t$, we have defined 
\begin{align}
\bQ_{t-1} & =\left[\bq^{0}\vert \bq^{1}\vert\cdots\vert \bq^{t-1}\right],\label{eq:def_Q}\\
\balpha^{t} & =\left(\bQ_{t-1}^{\sT}\bQ_{t-1}\right)^{-1}\bQ_{t-1}^{\sT}\bq^{t},\\
\bH_{t-1} & =\left[\bh^{1}\vert \bh^{2}\vert\cdots\vert \bh^{t}\right].
\end{align}
The initialization is $\bq^{0}=f_{0}(\bx^{0})$ and $\bh^{1}=\bA\bq^{0}$.
This recursion will be referred as the \emph{Long AMP} recursion,
or LAMP $\left\{ \bh^{t},\bq^{t}\middle\vert f_{t},\bx^{0}\right\} $.

Note that for the LAMP recursion to be well-defined, the matrices
$\bQ_{t-1}^{\sT}\bQ_{t-1}$ must be invertible, that is to say the family
$\bq^{0},\bq^{1},\dots,\bq^{t-1}$ must be linearly independent. This has
no reason to be true, since $\bq^{s}=f_{s}(\bh^{s})$ and $f_{s}$ is
a generic sequence of Lipschitz functions (satisfying assumptions
\ref{assumption:x^0-conv}-\ref{assumption:gaussian-conv}). For instance,
if all $f_{s}$, $s=0,\dots,t-1$, have images included in a same subspace
of dimension lower than $t$, this cannot be true. This difficulty
leads to some technicalities in the proof. However, we will start
by studying the case where $\bQ_{t-1}^{\sT}\bQ_{t-1}$ is invertible, with
$\sigma_{\min}\left(\bQ_{t-1}\right)/\sqrt{n}\geq c_{t}>0$, for $n$
large enough, where $c_{t}$ is a constant independent of $n$. More
formally, we make the following assumption.

\paragraph{Assumption (non-degeneracy):}

We say that the LAMP iterates satisfy the non-degeneracy assumption
if 
\begin{itemize}
\item almost surely, for all $t\in\N$ and all $n\geq t$, $\bQ_{t-1}$ has
full column rank, 
\item for all $t\in\mathbb{N}_{>0}$, there exists some constant $c_{t}>0$
-independent of $n$- such that almost surely, there exists $n_{0}$
(random) such that, for $n\geq n_{0}$, $\sigma_{\min}\left(\bQ_{t-1}\right)/\sqrt{n}\geq c_{t}>0$. 
\end{itemize}

\subsection{The non-degenerate case}

The LAMP recursion is of interest because it behaves well with Gaussian
conditioning, so that the sequence of iterates becomes easier to study.
The following lemma makes this idea explicit. 
\begin{lem}
\label{lem:new_AMP_conditional_distibution}Consider the LAMP $\left\{ \bh^{t},\bq^{t}\middle\vert f_{t},\bx^{0}\right\} $,
and assume it satisfies the non-degeneracy assumption. Fix $t\in\mathbb{N}_{>0}$.
Let $\mathfrak{S}_{t}$ be the $\sigma$-algebra generated by $\bh^{1},\dots,\bh^{t}$
and denote $\bq_{\perp}^{t}=\bP_{Q_{t-1}}^{\perp}\bq^{t}$ and $\bq_{\parallel}^{t}=\bP_{Q_{t-1}}\bq^{t}$.
Then: 
\begin{equation}
\bh^{t+1}\vert_{\mathfrak{S}_{t}}\ed \bP_{Q_{t-1}}^{\perp}\tilde{\bA}\bq_{\perp}^{t}+\bH_{t-1}\balpha^{t}
\end{equation}
where $\tilde{\bA}$ is an independent copy of $\bA$. 
\end{lem}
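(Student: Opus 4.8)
The plan is to compute the conditional law of $\bA$ given $\mathfrak{S}_t$, and then push it through the explicit formula for $\bh^{t+1}$. First I would observe that, since $\bA$ is GOE and the iterates $\bh^1,\dots,\bh^t$, $\bq^0,\dots,\bq^{t-1}$ are built from $\bA$ by a triangular procedure, the $\sigma$-algebra $\mathfrak{S}_t$ is generated by a finite collection of linear equations on $\bA$. Concretely, unrolling the recursion (\ref{eq:new_AMP_1})--(\ref{eq:new_AMP_2}), conditioning on $\bh^1,\dots,\bh^t$ is equivalent to conditioning on the linear constraints $\bP_{Q_{s-1}}^{\perp}\bA\bP_{Q_{s-1}}^{\perp}\bq^{s}=\bh^{s+1}-\bH_{s-1}\balpha^{s}$ for $s=0,\dots,t-1$ (with the convention that the $s=0$ term reads $\bA\bq^0=\bh^1$). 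Because each $\bq^s$ is $\mathfrak{S}_t$-measurable (it is $f_s(\bh^s)$) and lies in the range of $\bQ_{s}$, these constraints can be rephrased as: the action of $\bA$ on the subspace $\mathrm{span}(\bq^0,\dots,\bq^{t-1})=\mathrm{range}(\bQ_{t-1})$ is fixed, while the restriction of $\bA$ to the orthogonal complement of this subspace is unconstrained.

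Next I would invoke the standard Gaussian conditioning fact (the symmetric analogue of the one used in \cite{bolthausen2014iterative,Bayati2011}): if $\bA\sim\GOE(n)$ and we condition on $\bA\bV = \bW$ for a fixed matrix $\bV$ of full column rank with $\bP=\bP_V$ the projector onto $\mathrm{range}(\bV)$, then conditionally
\begin{equation}
\bA\vert_{\{\bA\bV=\bW\}}\ed \bE + \bP_V^{\perp}\tilde{\bA}\bP_V^{\perp},
\end{equation}
where $\bE$ is the (deterministic, $\mathfrak{S}_t$-measurable) matrix realizing the constraint and $\tilde{\bA}$ is an independent GOE copy. Here $\bV$ is essentially $\bQ_{t-1}$ (the non-degeneracy assumption is exactly what guarantees full column rank, so $\bP_{Q_{t-1}}$ and the conditional law are well-defined). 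Applying this to $\bh^{t+1}=\bP_{Q_{t-1}}^{\perp}\bA\bP_{Q_{t-1}}^{\perp}\bq^{t}+\bH_{t-1}\balpha^{t}$: the term $\bH_{t-1}\balpha^t$ is $\mathfrak{S}_t$-measurable and passes through the conditioning untouched; in the first term, replacing $\bA$ by $\bE+\bP_{Q_{t-1}}^{\perp}\tilde{\bA}\bP_{Q_{t-1}}^{\perp}$ and using $\bP_{Q_{t-1}}^{\perp}\bE\bP_{Q_{t-1}}^{\perp}=0$ (since $\bE$ has range and row-space inside $\mathrm{range}(\bQ_{t-1})$) and idempotence of $\bP_{Q_{t-1}}^{\perp}$, we are left with $\bP_{Q_{t-1}}^{\perp}\tilde{\bA}\bP_{Q_{t-1}}^{\perp}\bq^t = \bP_{Q_{t-1}}^{\perp}\tilde{\bA}\bq_{\perp}^t$, using $\bP_{Q_{t-1}}^{\perp}\bq^t=\bq_{\perp}^t$. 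This yields exactly the claimed identity.

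The one point requiring care — and what I expect to be the main obstacle — is the bookkeeping that turns ``condition on $\bh^1,\dots,\bh^t$'' into ``condition on $\bA$ acting on a fixed subspace.'' The subtlety is that the constraints defining $\mathfrak{S}_t$ involve $\bP_{Q_{s-1}}^{\perp}\bA\bP_{Q_{s-1}}^{\perp}$ with \emph{different} projectors at each step $s$, and one must verify that these are jointly equivalent to fixing $\bA$ on $\mathrm{range}(\bQ_{t-1})$ (both as a domain and, by symmetry of $\bA$, as a co-domain). This is handled by an induction on $s$: assuming $\bh^1,\dots,\bh^s$ are known (hence $\bq^0,\dots,\bq^{s-1}$ and the projectors $\bP_{Q_0},\dots,\bP_{Q_{s-1}}$ are known), the new constraint $\bP_{Q_{s-1}}^{\perp}\bA\bP_{Q_{s-1}}^{\perp}\bq^{s}=\bh^{s+1}-\bH_{s-1}\balpha^{s}$ together with the already-known action of $\bA$ on $\mathrm{range}(\bQ_{s-1})$ pins down $\bA\bq^s$, and since $\bq^s\in\mathrm{range}(\bQ_s)\setminus\mathrm{range}(\bQ_{s-1})$ (by non-degeneracy), this extends the known action of $\bA$ from $\mathrm{range}(\bQ_{s-1})$ to $\mathrm{range}(\bQ_s)$. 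Iterating to $s=t-1$ gives the desired reduction. I would also remark that all the projector-algebra identities ($\bP_{Q_{t-1}}^{\perp}$ idempotent, $\bP_{Q_{t-1}}^{\perp}\bq^t=\bq_\perp^t$, annihilation of $\bE$) are immediate, and that the independence of $\tilde{\bA}$ from $\mathfrak{S}_t$ is inherited directly from the Gaussian conditioning lemma.
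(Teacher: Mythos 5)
Your proof follows essentially the same route as the paper's: reduce conditioning on $\mathfrak{S}_t$ to conditioning on a fixed linear constraint prescribing $\bA$ on $\mathrm{range}(\bQ_{t-1})$ (via the same inductive measurability argument), invoke the GOE conditioning lemma (cf.\ Javanmard--Montanari Lemma~3 / Bayati--Montanari Lemma~10), and push the conditional law through the LAMP update for $\bh^{t+1}$. One small slip in the justification: the conditional mean $\E[\bA\mid\mathfrak{S}_t]=\bA-\bP_{Q_{t-1}}^\perp\bA\bP_{Q_{t-1}}^\perp$ does \emph{not} have range contained in $\mathrm{range}(\bQ_{t-1})$ (it coincides with $\bA$ on that subspace, which can map outside of it); nevertheless $\bP_{Q_{t-1}}^\perp\,\E[\bA\mid\mathfrak{S}_t]\,\bP_{Q_{t-1}}^\perp=0$ holds by an immediate cancellation from the formula, which is all the argument actually uses.
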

Here, we decompose $\bh^{t+1}$ as a sum of past iterates $\bh^{1},\dots,\bh^{t}$,
and of a new Gaussian vector $\bP_{Q_{t-1}}^{\perp}\tilde{\bA}\bq_{\perp}^{t}$,
whose conditional law knowing the past $\mathfrak{S}_{t}$ is well
understood. The key property is that we have replaced $\bA$ by a new
matrix $\tilde{\bA}$ decoupled from the past iterates. This enables
us to show that the sets of points $\bq^{0},\bq^{1},\dots,\bq^{t}$ and $\bh^{1},\bh^{2},\dots,\bh^{t+1}$
have asymptotically the same geometry, and that the conclusion of
Theorem \ref{thm:AMP_convergence} holds for LAMP. The following lemma gives a precise statement.
\begin{lem}
\label{lem:new_AMP_convergence} Consider the LAMP $\left\{ \bh^{t},\bq^{t}\middle\vert f_{t},\bx^{0}\right\} $
and suppose it satisfies the non-degeneracy assumption. Then:
\begin{enumerate}[label=(\alph*)]
\item \label{item:same_geometry} For all $0\leq s,r\leq t$, 
\begin{equation}
\frac{1}{n}\left\< \bh^{s+1},\bh^{r+1}\right\> \approxP\frac{1}{n}\left\< \bq^{s},\bq^{r}\right\> .
\end{equation}
\item \label{item:LAMP_SE} For any $t\in\mathbb{N}$, for any sequence
of uniformly order-$k$ pseudo-Lipschitz functions $\left\{ \phi_{n}:\left(\mathbb{R}^{n}\right)^{t+2}\to\mathbb{R}\right\} $,
\begin{equation}
\phi_{n}\left(\bx^{0},\bh^{1},\dots,\bh^{t+1}\right)\approxP\mathbb{E}\left[\phi_{n}\left(\bx^{0},\bZ^{1},\dots,\bZ^{t+1}\right)\right]\label{eq:new_AMP_lemma_h}
\end{equation}
where 
\begin{equation}
(\bZ^{1},\dots,\bZ^{t+1})\sim\normal\left(0,\left(\Kappa_{s,r}\right)_{s,r\leq t+1}\otimes \id_{n}\right).
\end{equation}
Here the state evolution $\{\Kappa_{s,t}\vert f_t, \bx^0\}$ is described in Section \ref{sec:Symmetric-AMP}.
\end{enumerate}
\end{lem}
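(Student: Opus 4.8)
The plan is to prove parts \ref{item:same_geometry} and \ref{item:LAMP_SE} together by induction on $t$, observing at the outset that, given the statement at earlier levels, part \ref{item:same_geometry} at level $t$ is a consequence of part \ref{item:LAMP_SE} at level $t$: applying \eqref{eq:new_AMP_lemma_h} to the uniformly order-$2$ pseudo-Lipschitz maps $(\by^0,\dots,\by^{t+1})\mapsto\tfrac1n\langle\by^{s+1},\by^{r+1}\rangle$ and $(\by^0,\dots,\by^{t+1})\mapsto\tfrac1n\langle f_s(\by^s),f_r(\by^r)\rangle$ (with $\by^0$ in the role of $\bx^0$, so the latter evaluates to $\tfrac1n\langle\bq^s,\bq^r\rangle$) gives $\tfrac1n\langle\bh^{s+1},\bh^{r+1}\rangle\approxP\Kappa_{s+1,r+1}\approxP\tfrac1n\langle\bq^s,\bq^r\rangle$. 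So the work is in \ref{item:LAMP_SE}. For the base case $t=0$ I would argue directly: $\bq^0=f_0(\bx^0)$ is deterministic, $\bh^1=\bA\bq^0$ is exactly Gaussian with covariance $\tfrac{\|\bq^0\|_2^2}{n}\id_n+\tfrac1n\bq^0(\bq^0)^{\sT}$, the rank-one correction is a vector of normalized norm $O_P(n^{-1/2})$, and $\|\bq^0\|_2^2/n\to\Kappa_{1,1}$, so \eqref{eq:new_AMP_lemma_h} follows from the concentration of pseudo-Lipschitz functions of a Gaussian vector about their mean together with the continuity in $\sigma$ of $\E[\phi_n(\bx^0,\bZ)]$, $\bZ\sim\normal(0,\sigma^2\id_n)$.

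For the inductive step I would assume \eqref{eq:new_AMP_lemma_h} up to level $t-1$, set $\mathfrak{S}_t=\sigma(\bh^1,\dots,\bh^t)$, and invoke Lemma \ref{lem:new_AMP_conditional_distibution}, which by the tower property gives that $(\bh^1,\dots,\bh^{t+1})$ has the same law as $(\bh^1,\dots,\bh^t,\bP_{Q_{t-1}}^{\perp}\tilde\bA\bq_\perp^t+\bH_{t-1}\balpha^t)$ with $\tilde\bA\sim\GOE(n)$ independent of the $\mathfrak{S}_t$-measurable quantities $\bQ_{t-1},\bH_{t-1},\balpha^t,\bq_\perp^t$. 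Applying the induction hypothesis to the pseudo-Lipschitz functions $\tfrac1n\langle f_a(\cdot),f_b(\cdot)\rangle$ and $\tfrac1n\langle\cdot,\cdot\rangle$ of $(\bx^0,\bh^1,\dots,\bh^t)$ then yields $\tfrac1n\bQ_{t-1}^{\sT}\bQ_{t-1}\approxP\bKappa^{(t)}$, $\tfrac1n\bQ_{t-1}^{\sT}\bq^t\approxP\boldsymbol{k}:=(\Kappa_{i,t+1})_{1\le i\le t}$ and $\tfrac1n\bH_{t-1}^{\sT}\bH_{t-1}\approxP\bKappa^{(t)}$; since $\bKappa^{(t)}$ is invertible by the non-degeneracy assumption, this gives $\balpha^t\approxP\alpha_\infty:=(\bKappa^{(t)})^{-1}\boldsymbol{k}$ (with $\|\balpha^t\|_2=O_P(1)$) and $\tfrac1n\|\bq_\perp^t\|_2^2\approxP\tau_t^2:=\Kappa_{t+1,t+1}-\boldsymbol{k}^{\sT}(\bKappa^{(t)})^{-1}\boldsymbol{k}\ge0$.

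The heart of the step is a chain of approximations, each modifying the last argument of $\phi_n$ by a vector of normalized norm $\approxP0$ while keeping every normalized norm $O_P(1)$, so that it is absorbed by the pseudo-Lipschitz bound: (i) $\tfrac1n\E[\|\bP_{Q_{t-1}}\tilde\bA\bq_\perp^t\|_2^2\mid\mathfrak{S}_t]=\tfrac{t}{n}\cdot\tfrac{\|\bq_\perp^t\|_2^2}{n}\to0$, so $\bP_{Q_{t-1}}^{\perp}\tilde\bA\bq_\perp^t$ may be replaced by $\tilde\bA\bq_\perp^t$; (ii) writing $\tilde\bA\bq_\perp^t$ (conditionally on $\mathfrak{S}_t$) as an isotropic Gaussian plus a rank-one term and using $\tfrac1n\|\bq_\perp^t\|_2^2\approxP\tau_t^2$, it may be replaced by $\boldsymbol{g}\sim\normal(0,\tau_t^2\id_n)$ drawn independently of $\mathfrak{S}_t$; (iii) $\bH_{t-1}(\balpha^t-\alpha_\infty)$ has normalized norm $\le\|\balpha^t-\alpha_\infty\|_2\,\sigma_{\max}(\bH_{t-1})/\sqrt n\approxP0$, so $\bH_{t-1}\balpha^t$ may be replaced by $\sum_{s=1}^{t}(\alpha_\infty)_s\bh^s$. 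This gives
\begin{align}
\phi_n(\bx^0,\bh^1,\dots,\bh^{t+1})
&\approxP\phi_n\Big(\bx^0,\bh^1,\dots,\bh^t,\;\textstyle\sum_{s=1}^{t}(\alpha_\infty)_s\bh^s+\boldsymbol{g}\Big)\nonumber\\
&\approxP\psi_n(\bx^0,\bh^1,\dots,\bh^t),
\end{align}
where the second step is Gaussian concentration over $\boldsymbol{g}$ conditionally on $\mathfrak{S}_t$ (Gaussian--Poincar\'e plus the $O_P(1)$ norm control) and $\psi_n(\bx^0,\by^1,\dots,\by^t):=\E_{\boldsymbol{g}}[\phi_n(\bx^0,\by^1,\dots,\by^t,\sum_{s=1}^{t}(\alpha_\infty)_s\by^s+\boldsymbol{g})]$ is a uniformly order-$k$ pseudo-Lipschitz sequence. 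I would then apply the induction hypothesis to $\psi_n$, giving $\psi_n(\bx^0,\bh^1,\dots,\bh^t)\approxP\E[\psi_n(\bx^0,\bZ^1,\dots,\bZ^t)]$, and conclude by checking that $(\bZ^1,\dots,\bZ^t,\sum_{s=1}^{t}(\alpha_\infty)_s\bZ^s+\boldsymbol{g})$ is centred Gaussian with covariance $(\Kappa_{s,r})_{s,r\le t+1}\otimes\id_n$ — a one-line computation from $\bKappa^{(t)}\alpha_\infty=\boldsymbol{k}$ and $\alpha_\infty^{\sT}\bKappa^{(t)}\alpha_\infty+\tau_t^2=\Kappa_{t+1,t+1}$ — so that the right-hand side equals $\E[\phi_n(\bx^0,\bZ^1,\dots,\bZ^{t+1})]$, which closes the induction.

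The architecture here is exactly Bolthausen's conditioning argument, so I expect the main obstacle to be not the scheme but the bookkeeping that makes the approximation chain rigorous: one must verify that all normalized norms stay $O_P(1)$ at every stage, that Gaussian convolution followed by the linear substitution $\by\mapsto(\by,\sum_s(\alpha_\infty)_s\by^s+\boldsymbol{g})$ preserves uniform order-$k$ pseudo-Lipschitzness (so the induction hypothesis genuinely applies to $\psi_n$), and that the conditional concentration estimate survives the fact that the conditional law of $\tilde\bA\bq_\perp^t$ has a random, $\mathfrak{S}_t$-measurable covariance converging to $\tau_t^2\id_n$. These are the ``several approximation arguments'' flagged in the introduction and form the technical core.
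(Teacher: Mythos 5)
Your proposal is correct, and its architecture is the same as the paper's: Bolthausen conditioning via Lemma~\ref{lem:new_AMP_conditional_distibution}, the decomposition of $\bh^{t+1}$ into a fresh Gaussian direction plus a $\mathfrak{S}_t$-measurable linear combination of past iterates, a chain of replacements each controlled by the pseudo-Lipschitz bound and $O_P(1)$ norm control, Gaussian concentration (Poincar\'e), and then the induction hypothesis on the resulting expectation over the fresh Gaussian. The only real deviation is organizational: the paper establishes part~\ref{item:same_geometry} at level $t$ first (by direct conditional computation of $\tfrac1n\langle\bh^{s+1},\bh^{t+1}\rangle$, treating $s<t$ and $s=t$ separately), then uses ${\cal H}_t(a)$ to identify the variance $\Kappa_{t+1,t+1}$ and covariances $\Kappa_{s,t+1}$ of the limiting Gaussian in part~\ref{item:LAMP_SE}. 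You instead prove~\ref{item:LAMP_SE} at level $t$ directly, identify the limiting covariance by the explicit linear algebra $\bKappa^{(t)}\alpha_\infty=\boldsymbol{k}$ and $\alpha_\infty^{\sT}\bKappa^{(t)}\alpha_\infty+\tau_t^2=\Kappa_{t+1,t+1}$, and then read off~\ref{item:same_geometry} as a corollary by testing~\eqref{eq:new_AMP_lemma_h} against $\tfrac1n\langle\by^{s+1},\by^{r+1}\rangle$ and $\tfrac1n\langle f_s(\by^s),f_r(\by^r)\rangle$. This is a modest streamlining --- it trades the separate $s<t$/$s=t$ case analysis for the closed-form covariance check --- and both routes rely on the same non-degeneracy-driven invertibility of $\bKappa^{(t)}$ to justify $\balpha^t\approxP\alpha_\infty$. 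Your closing list of technical points to verify (uniform $O_P(1)$ control, stability of pseudo-Lipschitzness under Gaussian convolution and the affine substitution, and the passage from the random conditional variance $\|\bq_\perp^t\|_2^2/n$ to its deterministic limit $\tau_t^2$) matches exactly where the paper invests its auxiliary lemmas (Lemmas~\ref{lem:GOE_properties}, \ref{lem:pseudoLipschitz_innerProd}, \ref{lem:pseudoLipschitz_add_constVec}, \ref{lem:pseudoLipschitz_expectation_wrt_Gaussian}, \ref{lem:psedoLipschitz_concentration}).
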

To conclude that Theorem \ref{thm:AMP_convergence} holds in this
case, we only need to show that LAMP is a good approximation of AMP. 
\begin{lem}
\label{lem:new_AMP_approximates_AMP}Consider the AMP $\left\{ \bx^{t},\bm^{t}\middle\vert f_{t},\bx^{0}\right\} $
and the LAMP $\left\{ \bh^{t},\bq^{t}\middle\vert f_{t},\bx^{0}\right\} $.
Suppose the LAMP satisfies the non-degeneracy assumption. For any
$t\in\mathbb{N}$, 
\begin{equation}
\frac{1}{\sqrt{n}}\left\Vert \bh^{t+1}-\bx^{t+1}\right\Vert _{2}\xrightarrow[n\to\infty]{{\rm P}}0\quad{\rm and}\quad\frac{1}{\sqrt{n}}\left\Vert \bq^{t}-\bm^{t}\right\Vert _{2}\xrightarrow[n\to\infty]{{\rm P}}0.
\end{equation}
\end{lem}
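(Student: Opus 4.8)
The plan is to prove the two claims simultaneously by induction on $t$, comparing the LAMP recursion \eqref{eq:new_AMP_1}, \eqref{eq:new_AMP_2} term by term with the AMP recursion \eqref{eq:AMP_1}, \eqref{eq:AMP_2}. The base case $t=0$ is immediate: $\bq^0 = f_0(\bx^0) = \bm^0$ and $\bh^1 = \bA\bq^0 = \bA\bm^0 = \bx^1$ (recall $\bm^{-1}=0$ and $\bq^{-1}$ does not enter, so the projections and Onsager corrections are absent at the first step). For the inductive step, assume $n^{-1/2}\|\bh^s - \bx^s\|_2 \to_{\mathrm P} 0$ and $n^{-1/2}\|\bq^{s} - \bm^{s}\|_2 \to_{\mathrm P} 0$ for all $s\le t$; using the uniform Lipschitz assumption \ref{assumption: x0}--type bound on $f_t$, the first of these already gives $n^{-1/2}\|\bq^t - \bm^t\|_2 = n^{-1/2}\|f_t(\bh^t) - f_t(\bx^t)\|_2 \to_{\mathrm P} 0$, which is the second conclusion at time $t$. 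It remains to control $n^{-1/2}\|\bh^{t+1} - \bx^{t+1}\|_2$.

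The heart of the matter is to show that the ``correction terms'' that distinguish LAMP from AMP are asymptotically negligible. Write
\begin{equation}
\bh^{t+1} - \bx^{t+1} = \Big(\bP_{Q_{t-1}}^{\perp}\bA\bP_{Q_{t-1}}^{\perp}\bq^t - \bA\bq^t\Big) + \bH_{t-1}\balpha^t + \bA(\bq^t - \bm^t) + \sb_t\bm^{t-1}.
\end{equation}
The third summand has vanishing normalized norm because $\|\bA\|_{\op}$ is $O(1)$ w.h.p. (standard GOE operator-norm bound) and $n^{-1/2}\|\bq^t-\bm^t\|_2\to_{\mathrm P}0$ by the previous paragraph. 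The delicate part is the combination of the first two summands together with the Onsager term $\sb_t\bm^{t-1}$. Expanding $\bP_{Q_{t-1}}^{\perp} = \id - \bP_{Q_{t-1}}$, one gets cross terms of the form $\bP_{Q_{t-1}}\bA\bq^t$, $\bA\bP_{Q_{t-1}}\bq^t$ and $\bP_{Q_{t-1}}\bA\bP_{Q_{t-1}}\bq^t$, plus $\bH_{t-1}\balpha^t$. The strategy is to use Lemma \ref{lem:new_AMP_conditional_distibution} (Gaussian conditioning) to replace $\bA$ by an independent copy $\tilde\bA$ on the orthogonal complement, so that the projections $\bP_{Q_{t-1}}\bA\bq^s$ concentrate on explicit deterministic quantities. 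The non-degeneracy assumption ($\sigma_{\min}(\bQ_{t-1})/\sqrt n \ge c_t$) guarantees that $(\bQ_{t-1}^\sT\bQ_{t-1})^{-1}$ is well-conditioned, so $\balpha^t$ and the projection coefficients are bounded and behave continuously under the approximations from the induction hypothesis and from Lemma \ref{lem:new_AMP_convergence}\ref{item:same_geometry}. One then checks that the sum of $\bH_{t-1}\balpha^t$ and the $\bP_{Q_{t-1}}$-cross-terms, after this substitution, telescopes/cancels against the AMP Onsager memory term $\sb_t \bm^{t-1}$ up to a $\to_{\mathrm P}0$ error — this is exactly the algebraic identity that motivated the definition of LAMP, and it is where the precise form of $\balpha^t = (\bQ_{t-1}^\sT\bQ_{t-1})^{-1}\bQ_{t-1}^\sT\bq^t$ is used: the conditioning lemma shows $\bH_{t-1}\balpha^t$ supplies precisely the ``missing'' correlation with the past that AMP encodes through its single Onsager term.

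I expect the main obstacle to be the bookkeeping in this cancellation: showing that $\bH_{t-1}\balpha^t$ minus the parallel components of $\bA\bq^t$ reduces, in the $n\to\infty$ limit, to $-\sb_t\bm^{t-1}$ plus lower-order terms. This requires (i) invoking Lemma \ref{lem:new_AMP_conditional_distibution} to evaluate $\bq^{t\,\sT}\bA^\sT \bh^{s}$-type inner products via Stein's lemma, producing divergence terms $\tfrac1n\div f_s$ that match the definition \eqref{eq:Onsager_def} of $\sb_s$; (ii) using Lemma \ref{lem:new_AMP_convergence}\ref{item:LAMP_SE} to pin down the limiting Gram matrix of the $\bq^s$'s and the limiting values of these divergences; and (iii) propagating all the $\approxP$ statements through the inverse $(\bQ_{t-1}^\sT\bQ_{t-1})^{-1}$, which is legitimate precisely because of the non-degeneracy lower bound on $\sigma_{\min}$. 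Once these pieces are assembled, the triangle inequality collects all the error terms into a single $o_{\mathrm P}(1)$ bound on $n^{-1/2}\|\bh^{t+1}-\bx^{t+1}\|_2$, completing the induction.
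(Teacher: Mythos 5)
Your plan lands on the right mechanism --- identify the projection corrections together with $\bH_{t-1}\balpha^{t}$ as the pieces that must cancel against the Onsager term, use Stein's lemma to turn inner products of past iterates into divergences matching $\sb_{j}$, and lean on the non-degeneracy bound to control $(\bQ_{t-1}^{\sT}\bQ_{t-1})^{-1}$ --- so the high-level strategy agrees with the paper's. But the proposal stops precisely where the proof begins: the assertion that the $\bP_{Q_{t-1}}$-cross-terms plus $\bH_{t-1}\balpha^{t}$ ``telescope/cancel'' against $\sb_{t}\bm^{t-1}$ up to an $o_{{\rm P}}(1)$ error \emph{is} the lemma, and deferring it to ``bookkeeping'' leaves the central claim unproved.

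The paper makes this bookkeeping tractable with a structural device you did not anticipate: it introduces the intermediate iterate $\hat{\bh}^{t+1}=\bA\bq^{t}-\sb_{t}\bq^{t-1}$ (an AMP-type update applied to the LAMP iterates) and proves a separate auxiliary lemma that $n^{-1/2}\Vert\bh^{t+1}-\hat{\bh}^{t+1}\Vert_{2}\to_{{\rm P}}0$; once that is in hand the comparison to AMP is immediate, since $\hat{\bh}^{t+1}-\bx^{t+1}=\bA(\bq^{t}-\bm^{t})-\sb_{t}(\bq^{t-1}-\bm^{t-1})$ vanishes by the inductive hypothesis. The auxiliary lemma is where all the work sits. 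It starts from the deterministic identity $\bA\bQ_{t-1}=\hat{\bH}_{t-1}+[0\vert\bQ_{t-2}]\mathsf{B}_{t-1}$, uses it to rewrite $\hat{\bh}^{t+1}-\bh^{t+1}$ as $\sum_{s}c_{s}\bq^{s-1}$ plus error terms of the form $(\hat{\bH}_{t-1}-\bH_{t-1})$ controlled by induction and non-degeneracy, and then shows each scalar $c_{s}\to_{{\rm P}}0$ by computing $\frac{1}{n}\langle\bh^{r},\bq^{j}\rangle$ with Stein's lemma (Lemma \ref{lem:Stein}) together with the already-established LAMP state evolution (Lemma \ref{lem:new_AMP_convergence}), and finally collapsing the resulting sum by identifying $\frac{1}{n}\langle\bq^{r-1},\bq^{j-1}\rangle$ with the inverse of $R=\frac{1}{n}(\bQ_{t-1}^{\sT}\bQ_{t-1})^{-1}$. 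Note also a small misattribution in your sketch: Lemma \ref{lem:new_AMP_conditional_distibution} is used upstream to prove Lemma \ref{lem:new_AMP_convergence}; in the present argument it is the latter, not the conditioning lemma directly, that supplies the asymptotic inner-product identities needed for the cancellation.
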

Wrapping things together, we have shown the following weaker form
of Theorem \ref{thm:AMP_convergence}. 
\begin{thm}
\label{thm:AMP_convergence_weak}Assume \ref{assumption:gaussian_matrix}-\ref{assumption:gaussian-conv} and
that the LAMP iterates satisfy the non-degeneracy assumption.
Consider the AMP $\left\{ \bx^{t},\bm^{t}\middle\vert f_{t},\bx^{0}\right\} $.
For any sequence of uniformly order-$k$ pseudo-Lipschitz functions
$\left\{ \phi_{n}:\left(\mathbb{R}^{n}\right)^{t+2}\to\mathbb{R}\right\} $,
\begin{equation}
\phi_{n}\left(\bx^{0},\bx^{1},\dots,\bx^{t+1}\right)\approxP\mathbb{E}\left[\phi_{n}\left(\bx^{0},\bZ^{1},\dots,\bZ^{t+1}\right)\right]
\end{equation}
where 
\begin{equation}
(\bZ^{1},\dots,\bZ^{t+1})\sim\normal\left(0,\left(\Kappa_{s,r}\right)_{s,r\leq t+1}\otimes \id_{n}\right).
\end{equation}
\end{thm}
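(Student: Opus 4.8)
The plan is to assemble Theorem~\ref{thm:AMP_convergence_weak} from the three lemmas already stated, treating it as a pure corollary of Lemmas~\ref{lem:new_AMP_conditional_distibution}--\ref{lem:new_AMP_approximates_AMP}. Since we are assuming the non-degeneracy assumption holds, all three lemmas are available. Lemma~\ref{lem:new_AMP_convergence}\ref{item:LAMP_SE} already gives the desired state evolution statement for the LAMP iterates $(\bh^1,\dots,\bh^{t+1})$, and Lemma~\ref{lem:new_AMP_approximates_AMP} tells us that the AMP iterates $\bx^{s}$ are asymptotically indistinguishable from $\bh^{s}$ in normalized $\ell_2$ distance. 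So the whole content of the proof is a continuity/approximation argument showing that pseudo-Lipschitz test functions cannot tell the difference.

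First I would fix $t$ and a sequence of uniformly order-$k$ pseudo-Lipschitz functions $\phi_n:(\R^n)^{t+2}\to\R$ with pseudo-Lipschitz constant $L$ (with $\limsup_n L_n<\infty$). Write
\begin{align}
\big|\phi_n(\bx^0,\bx^1,\dots,\bx^{t+1})-\phi_n(\bx^0,\bh^1,\dots,\bh^{t+1})\big|
&\leq L\Big(1+\sum_{s}\big(\tfrac{\|\bx^s\|_2}{\sqrt n}\big)^{k-1}+\sum_{s}\big(\tfrac{\|\bh^s\|_2}{\sqrt n}\big)^{k-1}\Big)\cdot\frac{1}{\sqrt n}\Big(\sum_{s=1}^{t+1}\|\bx^s-\bh^s\|_2^2\Big)^{1/2},\nonumber
\end{align}
using the pseudo-Lipschitz bound applied to the concatenated vectors in $(\R^n)^{t+2}$ (the $\bx^0$ coordinate cancels). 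By Lemma~\ref{lem:new_AMP_approximates_AMP}, $\|\bx^s-\bh^s\|_2/\sqrt n\xrightarrow{\mathrm P}0$ for each $s\leq t+1$, so the last factor tends to $0$ in probability. It remains to check that the prefactor $1+\sum_s(\|\bx^s\|_2/\sqrt n)^{k-1}+\sum_s(\|\bh^s\|_2/\sqrt n)^{k-1}$ is bounded in probability ($O_{\mathrm P}(1)$). For the $\bh^s$ terms this follows from Lemma~\ref{lem:new_AMP_convergence}\ref{item:same_geometry} (or directly from \ref{item:LAMP_SE} applied to $\phi_n(\bx^0,\bh^1,\dots,\bh^{t+1})=\|\bh^s\|_2^2/n$), which shows $\|\bh^s\|_2^2/n$ converges in probability to the finite constant $\Kappa_{s,s}$; for the $\bx^s$ terms, again invoking Lemma~\ref{lem:new_AMP_approximates_AMP}, $\|\bx^s\|_2/\sqrt n$ and $\|\bh^s\|_2/\sqrt n$ differ by $o_{\mathrm P}(1)$, hence $\|\bx^s\|_2/\sqrt n$ is also $O_{\mathrm P}(1)$. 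A product of an $o_{\mathrm P}(1)$ term and an $O_{\mathrm P}(1)$ term is $o_{\mathrm P}(1)$, so $\phi_n(\bx^0,\bx^1,\dots,\bx^{t+1})\approxP\phi_n(\bx^0,\bh^1,\dots,\bh^{t+1})$.

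Then I would close the argument by combining this with Lemma~\ref{lem:new_AMP_convergence}\ref{item:LAMP_SE}: the right-hand side $\phi_n(\bx^0,\bh^1,\dots,\bh^{t+1})$ is $\approxP \E[\phi_n(\bx^0,\bZ^1,\dots,\bZ^{t+1})]$ with $(\bZ^1,\dots,\bZ^{t+1})\sim\normal(0,(\Kappa_{s,r})_{s,r\leq t+1}\otimes\id_n)$. Since $\approxP$ is transitive (the sum of two sequences converging to $0$ in probability converges to $0$ in probability), we conclude $\phi_n(\bx^0,\bx^1,\dots,\bx^{t+1})\approxP \E[\phi_n(\bx^0,\bZ^1,\dots,\bZ^{t+1})]$, which is exactly the claim of Theorem~\ref{thm:AMP_convergence_weak}.

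The only mildly delicate point — not really an obstacle given the lemmas — is ensuring the pseudo-Lipschitz prefactor stays bounded in probability; this is why one needs the moment/geometry control from Lemma~\ref{lem:new_AMP_convergence} rather than just the approximation in Lemma~\ref{lem:new_AMP_approximates_AMP}. All the genuine work (Bolthausen conditioning for the conditional law, the induction establishing the covariance structure and state evolution for LAMP, and the $\bh$-versus-$\bx$ comparison) is packaged inside Lemmas~\ref{lem:new_AMP_conditional_distibution}, \ref{lem:new_AMP_convergence}, and \ref{lem:new_AMP_approximates_AMP}; the remaining non-degeneracy hypothesis is discharged later by the random-perturbation argument outlined in the introduction.
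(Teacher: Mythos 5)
Your proof is correct and is essentially the argument the paper intends when it states the theorem by ``wrapping things together'': Lemma~\ref{lem:new_AMP_convergence}\ref{item:LAMP_SE} gives the state evolution statement for the LAMP iterates $\bh^s$, Lemma~\ref{lem:new_AMP_approximates_AMP} gives $\|\bx^s-\bh^s\|_2/\sqrt n \xrightarrow{\mathrm P} 0$ for $s\le t+1$, and the pseudo-Lipschitz bound plus $O_{\mathrm P}(1)$ control on the prefactor (via Lemma~\ref{lem:new_AMP_convergence}\ref{item:same_geometry}/\ref{item:LAMP_SE} for the $\bh^s$ norms, Assumption~\ref{assumption: x0} for $\bx^0$, and the triangle inequality for the $\bx^s$ norms) transfers the conclusion from $\bh$ to $\bx$. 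The only cosmetic omission is an implicit constant $C(k,t)$ that arises when you bound the $(k-1)$-th power of the concatenated norm by a sum of $(k-1)$-th powers, exactly as the paper does in its proof of Theorem~\ref{thm:AMP_convergence}; this does not affect the conclusion.
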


\subsection{The general case}

To treat the case where the matrix $\bQ_{t-1}$ is ill-conditioned,
we add a small perturbation to the functions $f_{s}$ so that the
perturbed AMP behaves well. We then make sure that the perturbed AMP
approximates well the original one.

A convenient way implement this program is to perturb \emph{randomly} the functions.
We then show that almost surely, the perturbation has the required
properties \ref{assumption:x^0-conv}-\ref{assumption:gaussian-conv}.
Specifically, consider 
\begin{equation}
f_{t}^{\epsilon y}\left(\,\cdot\,\right)=f_{t}\left(\,\cdot\,\right)+\epsilon \by^{t}
\end{equation}
where $\epsilon\geq0$ and $\by^{0},\by^{1},\by^{2},\dots$ are generated
as i.i.d. $\normal\left(0,\id_{n}\right)$, independent of the matrix
$A$. The perturbation vectors $\by^{0},\by^{1},\by^{2},\dots$ are called
collectively as $\by$ for brevity. 
\begin{lem}
\label{lem:SE_perturbated} Almost surely (w.r.t. $\by$), the setting
$\left\{ \bx^{0},f_{t}^{\epsilon y}\right\} $ satisfies assumptions
\ref{assumption:x^0-conv},\ref{assumption:mixed-conv}, \ref{assumption:gaussian-conv}.
As a consequence, we can define an associated state evolution $\left\{ \Kappa_{s,t}^{\epsilon}\middle\vert f_{t}^{\epsilon y},\bx^{0}\right\} $:
\begin{equation}
\Kappa_{1,1}^{\epsilon}=\lim_{n\to\infty}\frac{1}{n}\left\Vert f_{0}^{\epsilon y}\left(\bx^{0}\right)\right\Vert _{2}^{2},
\end{equation}
and once $\Kappa^{\epsilon}=\left(\Kappa_{s,r}^{\epsilon}\right)_{s,r\leq t}$
is defined, take $\left(\bZ^{\epsilon,1},\dots,\bZ^{\epsilon,t}\right)\sim\normal\left(0,\Kappa^{\epsilon}\otimes \id_{n}\right)$ independently of $\by$
and define 
\begin{align}
\Kappa_{1,t+1}^{\epsilon} & =\lim_{n\to\infty}\frac{1}{n}\mathbb{E}\left[\left\< f_{0}^{\epsilon y}\left(\bx^{0}\right),f_{t}^{\epsilon y}\left(\bZ^{\epsilon,t}\right)\right\> \right],\\
\Kappa_{s+1,t+1}^{\epsilon} & =\lim_{n\to\infty}\frac{1}{n}\mathbb{E}\left[\left\< f_{s}^{\epsilon y}\left(\bZ^{\epsilon,s}\right),f_{t}^{\epsilon y}\left(\bZ^{\epsilon,t}\right)\right\> \right.],
\end{align}
where the expectations are taken w.r.t. $\bZ^{\epsilon,1},\dots,\bZ^{\epsilon,t}$ but not $\by$.
Moreover, the resulting state evolution is almost surely equal to
a constant, thus justifying that we drop the dependence on $\by$ in
$\Kappa_{s,t}^{\epsilon}$. 
\end{lem}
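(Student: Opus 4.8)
The plan is to show that the randomly perturbed setting $\left\{ \bx^{0},f_{t}^{\epsilon y}\right\}$ inherits assumptions \ref{assumption:x^0-conv}--\ref{assumption:gaussian-conv} from the unperturbed one, and that the perturbation alters the state evolution only through an explicit deterministic correction, so that the resulting array does not depend on $\by$. Writing $f_{t}^{\epsilon y}=f_{t}+\epsilon\by^{t}$, every limit that must be controlled is of the form $\tfrac1n\E[\langle f_{s}^{\epsilon y}(\bZ^{s}),f_{t}^{\epsilon y}(\bZ^{t})\rangle]$, where $\bZ^{0}=\bx^{0}$, the vectors $\bZ^{1},\dots,\bZ^{t}$ are the Gaussians of the perturbed state evolution, and $\E$ is over the $\bZ$'s only. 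Expanding bilinearly it equals
\[
\tfrac1n\E[\langle f_{s}(\bZ^{s}),f_{t}(\bZ^{t})\rangle] \;+\; \tfrac{\epsilon}{n}\langle\by^{t},\E[f_{s}(\bZ^{s})]\rangle \;+\; \tfrac{\epsilon}{n}\langle\by^{s},\E[f_{t}(\bZ^{t})]\rangle \;+\; \tfrac{\epsilon^{2}}{n}\langle\by^{s},\by^{t}\rangle .
\]
The first term has a deterministic limit by the corresponding original assumption (\ref{assumption:x^0-conv}, \ref{assumption:mixed-conv} or \ref{assumption:gaussian-conv}), once the covariance of the $\bZ$'s is known to be deterministic; the last term tends almost surely to $\epsilon^{2}$ when $s=t$ and to $0$ otherwise, by the strong law of large numbers applied to $\tfrac1n\|\by^{s}\|_{2}^{2}$ and $\tfrac1n\langle\by^{s},\by^{t}\rangle$; and each cross term has the form $\tfrac1n\langle\by^{i},\bc_{n}\rangle$ with $\bc_{n}$ independent of $\by^{i}$ and $\|\bc_{n}\|_{2}^{2}=O(n)$ (for $\bc_{n}=\E[f_{i}(\bZ^{i})]$ this uses Jensen's inequality and the boundedness of $\tfrac1n\E[\|f_{i}(\bZ^{i})\|_{2}^{2}]$, itself a consequence of \ref{assumption:gaussian-conv} for the original $f_{i}$), so it is Gaussian with variance $O(1/n)$, has exponentially small tails, and converges almost surely to $0$ by Borel--Cantelli.

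I then run an induction on $t$ proving the stronger statement that, almost surely, the truncated array $\bKappa^{\epsilon,(t)}=(\Kappa^{\epsilon}_{s,r})_{s,r\le t}$ is well defined and equals a fixed deterministic matrix $\bar\bKappa^{\epsilon,(t)}$ --the array obtained by running the state evolution recursion with $f$ in place of $f^{\epsilon y}$ but adding $\epsilon^{2}$ to each diagonal entry. The base case is $\Kappa^{\epsilon}_{1,1}=\lim_{n}\tfrac1n\|f_{0}(\bx^{0})+\epsilon\by^{0}\|_{2}^{2}=\Kappa_{1,1}+\epsilon^{2}$ almost surely. For the inductive step, on the probability-one event $\{\bKappa^{\epsilon,(t)}=\bar\bKappa^{\epsilon,(t)}\}$ the Gaussians $(\bZ^{\epsilon,1},\dots,\bZ^{\epsilon,t})$ used to define $\Kappa^{\epsilon}_{\cdot,t+1}$ have a fixed law independent of $\by$, so each $\E[f_{s}(\bZ^{\epsilon,s})]$ is a deterministic vector independent of $\by^{t}$; the displayed expansion then applies and shows each $\Kappa^{\epsilon}_{s+1,t+1}$ equals a deterministic constant almost surely. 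Intersecting the countably many probability-one events, one per entry, shows that $\{\Kappa^{\epsilon}_{s,t}\}$ is almost surely deterministic, which is the last assertion of the lemma.

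Two secondary points complete the proof. Assumptions \ref{assumption:mixed-conv} and \ref{assumption:gaussian-conv} must hold for \emph{all} parameters $s\ge0$, resp.\ $\bS\succeq\bzero$, whereas the computation above fixes the parameter; I take the probability-one events over a countable dense set of parameters and extend to all of them using that the maps $s\mapsto\tfrac1n\E[\langle f_{0}^{\epsilon y}(\bx^{0}),f_{t}^{\epsilon y}(\bZ)\rangle]$ (with $\bZ\sim\normal(0,s\id_{n})$) and their two-parameter analogues are equicontinuous in the parameter uniformly in $n$ --a consequence of the uniform Lipschitz property of the $f_{t}$'s and Cauchy--Schwarz-- so that the pointwise limit on the dense set extends to a continuous, locally uniform limit. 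Also, $\Kappa^{\epsilon}_{t,t}\ge\epsilon^{2}>0$ for every $t$, a fact that will be used later when checking the non-degeneracy assumption for the perturbed LAMP; and assumptions \ref{assumption:gaussian_matrix}--\ref{assumption: x0} require nothing, since $\bA$ and $\bx^{0}$ are untouched and $f_{t}^{\epsilon y}$ has the same Lipschitz constant as $f_{t}$ (the additive perturbation cancels in differences).

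The main obstacle is the independence bookkeeping inside the induction: a priori the covariance $\bKappa^{\epsilon,(t)}$ with which the level-$(t+1)$ Gaussians are drawn is a function of $\by$, so the independence between those Gaussians and $\by$ on which the Borel--Cantelli step for the cross terms relies is not automatic. Folding the statement "$\bKappa^{\epsilon,(t)}$ is almost surely a fixed deterministic matrix" into the induction hypothesis resolves this: conditioning on that event makes the covariance a genuine constant, after which the Gaussian-concentration, law-of-large-numbers and equicontinuity/density arguments are all routine.
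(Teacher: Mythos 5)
Your proof follows the same route as the paper's: the bilinear expansion of $\tfrac1n\E\langle f_s^{\epsilon y}(\bZ^s),f_t^{\epsilon y}(\bZ^t)\rangle$ into the unperturbed term, two Gaussian cross terms killed by Borel--Cantelli, and the $\epsilon^2\mathbbm{1}_{s=t}$ term from the law of large numbers, followed by the induction showing the perturbed $\bKappa^\epsilon$ is a.s. deterministic. You are in fact slightly more careful than the paper on two points it leaves implicit --- passing from ``for each fixed $\bS$, a.s.'' to ``a.s., for all $\bS$'' via a countable dense set plus equicontinuity, and folding the a.s.-determinism of $\bKappa^{\epsilon,(t)}$ into the induction hypothesis so that the independence used in the Borel--Cantelli step is legitimate --- but these are refinements, not a different argument.
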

\begin{lem}
\label{lem:good-cond-perturbed}Denote as $\bQ_{t-1}^{\epsilon y}$
the matrix associated with the LAMP iterates $\left\{ \bh^{\epsilon y,t},\bq^{\epsilon y,t}\middle\vert f_{t}^{\epsilon y},\bx^{0}\right\} $,
according to equation (\ref{eq:def_Q}). Assume $\epsilon>0$. Then
as soon as $n\geq t$, almost surely the matrix $\bQ_{t-1}^{\epsilon y}$
is of full column rank. Furthermore, there exists a constant $c_{t,\epsilon}>0$
-independent of $n$- such that almost surely, there exists $n_{0}$
(random) such that for $n\geq n_{0}$, $\sigma_{\min}(\bQ_{t-1}^{\epsilon y})/\sqrt{n}\geq c_{t,\epsilon}$. 
\end{lem}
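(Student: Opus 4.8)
The plan is to exploit the triangular structure of the perturbation. Writing $\bq^{\epsilon y,s}=f_s(\bh^{\epsilon y,s})+\epsilon\by^s$, each $\bh^{\epsilon y,s}$ --- and hence $f_s(\bh^{\epsilon y,s})$ and the matrix $\bQ_{s-1}^{\epsilon y}=[\bq^{\epsilon y,0}\vert\cdots\vert\bq^{\epsilon y,s-1}]$ --- is measurable with respect to $\mathfrak{F}_s:=\sigma(\bA,\by^0,\dots,\by^{s-1})$, whereas $\by^s\sim\normal(0,\id_n)$ is independent of $\mathfrak{F}_s$. I will run a single induction on $t$ establishing simultaneously: (i) $\|\bh^{\epsilon y,s}\|_2/\sqrt n$ and $\|\bq^{\epsilon y,s}\|_2/\sqrt n$ are bounded by an $n$-independent constant, eventually almost surely; (ii) for $n\ge t+1$ the matrix $\bQ_t^{\epsilon y}$ has full column rank almost surely; (iii) $\sigma_{\min}(\bQ_t^{\epsilon y})/\sqrt n\ge c_{t+1,\epsilon}$ eventually almost surely, for an $n$-independent constant $c_{t+1,\epsilon}>0$. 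Items (ii) and (iii) are precisely the two bullets of the non-degeneracy assumption. The base case concerns $\bQ_0^{\epsilon y}=[\,f_0(\bx^0)+\epsilon\by^0\,]$, which is nonzero a.s., has norm $O(\sqrt n)$ by Assumption~\ref{assumption:x^0-conv}, and satisfies $\sigma_{\min}(\bQ_0^{\epsilon y})^2=\|\bq^{\epsilon y,0}\|_2^2\ge\epsilon^2\|\bP_{f_0(\bx^0)}^{\perp}\by^0\|_2^2\ge\epsilon^2(n-1)/2$ eventually a.s.\ by a chi-square tail bound.

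For (i), I argue by a direct induction on the LAMP recursion (\ref{eq:new_AMP_1})--(\ref{eq:new_AMP_2}), using $\|\bA\|_{\op}=O(1)$ eventually a.s.\ (the operator norm of a $\GOE(n)$ matrix converges to $2$), $\|\by^s\|_2=O(\sqrt n)$ a.s., and the uniform Lipschitz property to pass from $\bh^{\epsilon y,s}$ to $\bq^{\epsilon y,s}$; here $\|\bq^{\epsilon y,0}\|_2/\sqrt n$ is bounded by Assumption~\ref{assumption:x^0-conv}, and for $s\ge 1$ the constant $\|f_s(\bzero)\|_2/\sqrt n$ is bounded by comparing $f_s(\bzero)$ with $f_s(\bZ)$, $\bZ\sim\normal(0,\id_n)$, via the Lipschitz bound and Assumption~\ref{assumption:gaussian-conv}. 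To control the memory term $\bH_{s-1}\balpha^s$ one uses $\|\bH_{s-1}\|_{\op}\le\|\bH_{s-1}\|_F=O(\sqrt n)$ together with $\|\balpha^s\|_2\le\sigma_{\min}(\bQ_{s-1}^{\epsilon y})^{-1}\|\bP_{Q_{s-1}^{\epsilon y}}\bq^{\epsilon y,s}\|_2\le\sigma_{\min}(\bQ_{s-1}^{\epsilon y})^{-1}\|\bq^{\epsilon y,s}\|_2$, where the lower bound on $\sigma_{\min}(\bQ_{s-1}^{\epsilon y})$ is item (iii) at the previous step; this closes (i) and yields $\|\bQ_t^{\epsilon y}\|_F^2=\sum_{s\le t}\|\bq^{\epsilon y,s}\|_2^2\le Cn$ eventually a.s.

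For (ii) and (iii), condition on $\mathfrak{F}_t$. For (ii): inductively $\bQ_{t-1}^{\epsilon y}$ has full column rank, so its range is a fixed subspace of dimension $t$; since $f_t(\bh^{\epsilon y,t})$ is $\mathfrak{F}_t$-measurable and $\bq^{\epsilon y,t}\sim\normal(f_t(\bh^{\epsilon y,t}),\epsilon^2\id_n)$ is absolutely continuous, while $t<n$ whenever $n\ge t+1$, the vector $\bq^{\epsilon y,t}$ avoids that subspace almost surely, so $\bQ_t^{\epsilon y}=[\bQ_{t-1}^{\epsilon y}\vert\bq^{\epsilon y,t}]$ has full column rank. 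For (iii), I bound $\sigma_{\min}$ through the Gram determinant: $\bQ_t^{\epsilon y}$ has $t+1$ columns, so applying Hadamard's inequality to $\det((\bQ_t^{\epsilon y})^{\sT}\bQ_t^{\epsilon y})=\prod_i\sigma_i(\bQ_t^{\epsilon y})^2$ gives
\begin{equation}
\sigma_{\min}\big(\bQ_t^{\epsilon y}\big)^2\ \ge\ \frac{\det\big((\bQ_t^{\epsilon y})^{\sT}\bQ_t^{\epsilon y}\big)}{\big\|\bQ_t^{\epsilon y}\big\|_F^{2t}}\ =\ \frac{\prod_{s=0}^{t}\big\|\bP_{Q_{s-1}^{\epsilon y}}^{\perp}\bq^{\epsilon y,s}\big\|_2^2}{\big\|\bQ_t^{\epsilon y}\big\|_F^{2t}}\,.
\end{equation}
The denominator is $\le(Cn)^t$ eventually a.s.\ by (i). For the numerator, put $\bM_s=[\,\bQ_{s-1}^{\epsilon y}\vert f_s(\bh^{\epsilon y,s})\,]$; since the range of $\bM_s$ contains both the range of $\bQ_{s-1}^{\epsilon y}$ and the vector $f_s(\bh^{\epsilon y,s})$, one has $\bP_{M_s}^{\perp}\bq^{\epsilon y,s}=\epsilon\,\bP_{M_s}^{\perp}\by^s$, hence $\|\bP_{Q_{s-1}^{\epsilon y}}^{\perp}\bq^{\epsilon y,s}\|_2\ge\|\bP_{M_s}^{\perp}\bq^{\epsilon y,s}\|_2=\epsilon\,\|\bP_{M_s}^{\perp}\by^s\|_2$. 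Now $\bP_{M_s}^{\perp}$ is $\mathfrak{F}_s$-measurable with rank $\ge n-s-1$, so conditionally on $\mathfrak{F}_s$ the quantity $\|\bP_{M_s}^{\perp}\by^s\|_2^2$ is a $\chi^2$ variable with at least $n-s-1$ degrees of freedom, whence $\mathbb{P}\big(\|\bP_{M_s}^{\perp}\by^s\|_2^2<(n-s-1)/2\big)\le e^{-c(n-s-1)}$ for a universal $c>0$; Borel--Cantelli gives $\|\bP_{M_s}^{\perp}\by^s\|_2^2\ge(n-s-1)/2$ eventually a.s., for every $s\le t$. Therefore $\det((\bQ_t^{\epsilon y})^{\sT}\bQ_t^{\epsilon y})\ge\prod_{s=0}^{t}\epsilon^2(n-s-1)/2\ge(\epsilon^2n/4)^{t+1}$ eventually a.s., so $\sigma_{\min}(\bQ_t^{\epsilon y})^2/n\ge(\epsilon^2/4)^{t+1}/C^t=:c_{t+1,\epsilon}^2>0$, closing the induction; a countable intersection of these almost-sure events (over $t$, and over $s\le t$ within each step) preserves ``almost surely''.

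I expect the main obstacle to be the bookkeeping that forces (i), (ii) and (iii) into a single joint induction: the a priori norm control of $\bh^{\epsilon y,t+1}$ requires the already-established lower bound on $\sigma_{\min}(\bQ_{t-1}^{\epsilon y})$ (to bound $\balpha^t$, hence $\bH_{t-1}\balpha^t$), whereas the lower bound on $\sigma_{\min}(\bQ_t^{\epsilon y})$ requires the upper bound $\|\bQ_t^{\epsilon y}\|_F=O(\sqrt n)$ in the denominator above --- the Gaussian anti-concentration step itself being clean. A minor point, best isolated as a preliminary claim, is the uniform boundedness of $\|f_s(\bzero)\|_2/\sqrt n$, which follows from Assumptions~\ref{assumption:x^0-conv} and~\ref{assumption:gaussian-conv} together with the uniform Lipschitz property.
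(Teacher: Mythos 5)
Your proof is correct, and the core probabilistic step is the same as the paper's: conditionally on the past, a small-ball probability for $\bq_\perp^{\epsilon y,s}$ is bounded by a chi-square lower tail (degrees of freedom at least $n-s-1$), which is summable, and Borel--Cantelli finishes. But the route from that chi-square bound to the $\sigma_{\min}$ estimate is genuinely different. The paper invokes Lemma~8 of \cite{Bayati2011} as a black box: that lemma reduces the claim to showing $\|\bq_\perp^{\epsilon y,t}\|_2^2/n\ge c_\epsilon$ eventually almost surely, and the paper handles the non-zero conditional mean $\bP_{Q_{t-1}}^\perp f_t(\bh^{\epsilon y,t})$ by noting that a non-central chi-square stochastically dominates a central one, so the mean can be dropped. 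You avoid both of these. Instead of citing Lemma~8, you establish the singular value bound directly through the Gram identity $\det\big((\bQ_t^{\epsilon y})^\sT\bQ_t^{\epsilon y}\big)=\prod_{s\le t}\|\bP_{Q_{s-1}^{\epsilon y}}^\perp\bq^{\epsilon y,s}\|_2^2$ combined with $\sigma_{\min}^2\ge \det/\|\cdot\|_F^{2t}$; and instead of stochastic domination you kill the conditional mean by enlarging the projection to $\bP_{M_s}^\perp$ with $M_s=[\bQ_{s-1}^{\epsilon y}\vert f_s(\bh^{\epsilon y,s})]$, so the residual is an honest centered Gaussian. The price for being self-contained is the extra bookkeeping you correctly flag: the Frobenius denominator $\|\bQ_t^{\epsilon y}\|_F^{2t}$ must be controlled, which forces the a priori norm bounds on $\bh^{\epsilon y,s}$ and $\bq^{\epsilon y,s}$ into a joint induction with the $\sigma_{\min}$ bound itself (the control of $\balpha^s$ needs the previous step's $\sigma_{\min}$). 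The paper sidesteps this interlocking because that dependency is absorbed into the cited lemma. What your approach buys is independence from the external reference and a cleaner Gaussian anti-concentration step; what the paper's buys is brevity.
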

The last two lemmas imply that almost surely, we can apply Theorem
\ref{thm:AMP_convergence_weak} to $\left\{ f_{t}^{\epsilon y}\right\} _{t\geq0}$.
The next three lemmas quantify how this result approximates our original
one. 
\begin{lem}
\label{lem:pseudo-lip_and_gaussians}Let $\left\{ \phi_{n}:\left(\R^{n}\right)^{t}\rightarrow\R\right\} _{n>0}$
be a sequence of uniformly pseudo-Lipschitz functions of order $k$.
Let $\Kappa$, $\tilde{\Kappa}$ be two $t\times t$ covariance matrices,
and $\bZ\sim\normal\left(0,\Kappa\otimes \id_{n}\right)$, $\tilde{\bZ}\sim\normal\left(0,\tilde{\Kappa}\otimes \id_{n}\right)$.
Then 
\begin{equation}
\lim_{\tilde{\Kappa}\to\Kappa}\sup_{n\geq1}\left\vert \E[\phi_{n}(\bZ)]-\E[\phi_{n}(\tilde{\bZ})]\right\vert =0.
\end{equation}
\end{lem}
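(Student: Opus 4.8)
\emph{Proof plan.} The plan is to realize $\bZ$ and $\tilde{\bZ}$ on a single probability space through a common standard Gaussian array, and to estimate $\E[\phi_n(\bZ)]-\E[\phi_n(\tilde{\bZ})]$ by combining the pseudo-Lipschitz bound with the Cauchy--Schwarz inequality. Let $\bG=(\bG^1,\dots,\bG^t)$ with the $\bG^j\in\R^n$ i.i.d.\ $\normal(0,\id_n)$, and set $\bZ^i=\sum_{j=1}^t(\Kappa^{1/2})_{ij}\bG^j$ and $\tilde{\bZ}^i=\sum_{j=1}^t(\tilde{\Kappa}^{1/2})_{ij}\bG^j$, where $\Kappa^{1/2},\tilde{\Kappa}^{1/2}\succeq\bzero$ are the symmetric square roots. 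Since $\Kappa^{1/2}$ is symmetric one checks $\E[\bZ^s(\bZ^r)^{\sT}]=\Kappa_{sr}\id_n$, so $(\bZ^1,\dots,\bZ^t)\sim\normal(0,\Kappa\otimes\id_n)$ and likewise $(\tilde{\bZ}^1,\dots,\tilde{\bZ}^t)\sim\normal(0,\tilde{\Kappa}\otimes\id_n)$; this is a valid coupling. Writing $N=tn$ for the input dimension of $\phi_n$ (the output being one-dimensional, so that the output normalization is trivial), the pseudo-Lipschitz property gives pointwise
\begin{equation}
\bigl|\phi_n(\bZ)-\phi_n(\tilde{\bZ})\bigr|\le L_n\Bigl(1+\bigl(\|\bZ\|_2/\sqrt N\bigr)^{k-1}+\bigl(\|\tilde{\bZ}\|_2/\sqrt N\bigr)^{k-1}\Bigr)\frac{\|\bZ-\tilde{\bZ}\|_2}{\sqrt N},
\end{equation}
and hence, by Cauchy--Schwarz,
\begin{equation}
\bigl|\E\phi_n(\bZ)-\E\phi_n(\tilde{\bZ})\bigr|\le L_n\,\E\Bigl[\bigl(1+(\|\bZ\|_2/\sqrt N)^{k-1}+(\|\tilde{\bZ}\|_2/\sqrt N)^{k-1}\bigr)^2\Bigr]^{1/2}\,\E\Bigl[\tfrac1N\|\bZ-\tilde{\bZ}\|_2^2\Bigr]^{1/2}.
\end{equation}

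Next I would bound the two factors separately. The coupling is chosen precisely so that $\bZ^i-\tilde{\bZ}^i=\sum_j\bigl((\Kappa^{1/2})_{ij}-(\tilde{\Kappa}^{1/2})_{ij}\bigr)\bG^j$, whence $\E\|\bZ-\tilde{\bZ}\|_2^2=n\,\|\Kappa^{1/2}-\tilde{\Kappa}^{1/2}\|_F^2$ and the second factor equals $\|\Kappa^{1/2}-\tilde{\Kappa}^{1/2}\|_F/\sqrt t$, which carries \emph{no} dependence on $n$. For the first factor, $\|\bZ\|_2^2/N\le\|\Kappa\|_{\op}\cdot\tfrac1N\|\bG\|_2^2$, and $\tfrac1N\|\bG\|_2^2$ is an average of $N$ i.i.d.\ $\chi^2_1$ random variables, whose moments of any fixed order are bounded uniformly in $n$ by Jensen's inequality; hence $\E[(1+(\|\bZ\|_2/\sqrt N)^{k-1}+(\|\tilde{\bZ}\|_2/\sqrt N)^{k-1})^2]\le C(k,t,\|\Kappa\|_{\op},\|\tilde{\Kappa}\|_{\op})$ uniformly in $n$, with $C$ staying bounded while $\tilde{\Kappa}$ ranges over a bounded neighborhood of $\Kappa$.

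Finally I would invoke the continuity of the matrix square root on the positive semidefinite cone: $\tilde{\Kappa}\to\Kappa$ implies $\tilde{\Kappa}^{1/2}\to\Kappa^{1/2}$, hence $\|\Kappa^{1/2}-\tilde{\Kappa}^{1/2}\|_F\to0$. Combining with the bounds above, and using that a uniformly pseudo-Lipschitz sequence satisfies $L_\ast:=\sup_n L_n<\infty$ (each $L_n$ is finite and $\limsup_n L_n<\infty$), we obtain $\sup_{n\ge1}|\E\phi_n(\bZ)-\E\phi_n(\tilde{\bZ})|\le L_\ast\,C^{1/2}\,\|\Kappa^{1/2}-\tilde{\Kappa}^{1/2}\|_F/\sqrt t$, which tends to $0$ as $\tilde{\Kappa}\to\Kappa$. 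This gives the claim. I expect the only subtle point --- and the reason the statement is stronger than a plain dominated-convergence argument at fixed $n$ --- to be the uniformity in $n$; this is exactly what the common-Gaussian coupling buys, since it makes the ``distance'' factor $n$-free, while the ``growth'' factor is controlled by an $n$-uniform Gaussian moment estimate. The only external fact used is the continuity of $A\mapsto A^{1/2}$ on PSD matrices.
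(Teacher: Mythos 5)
Your proof is correct, and it reaches the conclusion by a genuinely different, somewhat more elementary route than the paper. Both arguments share the same skeleton: couple $\bZ$ and $\tilde{\bZ}$ on one space, apply the pseudo-Lipschitz bound, split by Cauchy--Schwarz, and control the Gaussian moment factor uniformly in $n$. Where you diverge is in how the coupling cost is handled. The paper takes the infimum over \emph{all} couplings, so the distance factor becomes the Wasserstein distance $W_2\bigl(\normal(0,\Kappa\otimes\id_n),\normal(0,\tilde{\Kappa}\otimes\id_n)\bigr)$, and it then invokes the Givens--Shortt formula $W_2(\normal(0,\Kappa),\normal(0,\tilde{\Kappa}))^2 = \Tr\bigl(\Kappa+\tilde{\Kappa}-2(\Kappa^{1/2}\tilde{\Kappa}\Kappa^{1/2})^{1/2}\bigr)$ together with the tensorization identity $W_2(\mu\otimes\nu,\mu'\otimes\nu')^2=W_2(\mu,\mu')^2+W_2(\nu,\nu')^2$ to extract the crucial $n$-free quantity. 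You instead exhibit one concrete coupling --- push the same standard Gaussian $\bG$ through the symmetric square roots $\Kappa^{1/2}$ and $\tilde{\Kappa}^{1/2}$ --- and compute $\E\|\bZ-\tilde{\bZ}\|_2^2 = n\|\Kappa^{1/2}-\tilde{\Kappa}^{1/2}\|_F^2$ directly, closing the argument with the continuity of the PSD square root. Since $\|\Kappa^{1/2}-\tilde{\Kappa}^{1/2}\|_F^2 \ge \Tr\bigl(\Kappa+\tilde{\Kappa}-2(\Kappa^{1/2}\tilde{\Kappa}\Kappa^{1/2})^{1/2}\bigr)$, the paper's bound is quantitatively sharper, but your suboptimal coupling is entirely sufficient here (only vanishing as $\tilde{\Kappa}\to\Kappa$ is needed), and it avoids importing the explicit Gaussian optimal-transport formula. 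Two small cosmetic notes: the paper normalizes by $\sqrt{n}$ rather than $\sqrt{nt}$ in the pseudo-Lipschitz bound for $\phi_n:(\R^n)^t\to\R$ (this is an $n$-independent factor $\sqrt{t}$, so it is immaterial), and your Jensen step $\bigl(\tfrac1N\sum_i X_i\bigr)^{k-1}\le\tfrac1N\sum_i X_i^{k-1}$ for the $\chi^2_1$ moments is a clean alternative to the paper's explicit product formula for chi-square moments.
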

\begin{lem}
\label{lem:SE_eps_to_zero}For any $s,t\geq1$, $\Kappa_{s,t}^{\epsilon}\xrightarrow[\epsilon\to0]{}\Kappa_{s,t}$. 
\end{lem}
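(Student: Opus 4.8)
The plan is to prove the convergence $\Kappa_{s,t}^\epsilon\to\Kappa_{s,t}$ by induction on $\max(s,t)$, exploiting the recursive structure of both state evolutions and the uniform continuity result of Lemma \ref{lem:pseudo-lip_and_gaussians}. The base case is $s=t=1$: here $\Kappa_{1,1}^\epsilon = \lim_n \tfrac1n\|f_0(\bx^0)+\epsilon\by^0\|_2^2$, which expands as $\tfrac1n\|f_0(\bx^0)\|_2^2 + \tfrac{2\epsilon}{n}\langle f_0(\bx^0),\by^0\rangle + \tfrac{\epsilon^2}{n}\|\by^0\|_2^2$. Almost surely (and in fact this is already part of what Lemma \ref{lem:SE_perturbated} guarantees, since the limit exists a.s.\ and is constant), $\tfrac1n\|\by^0\|_2^2\to 1$ and $\tfrac1n\langle f_0(\bx^0),\by^0\rangle\to 0$ by the law of large numbers together with Assumption \ref{assumption: x0} bounding $\|f_0(\bx^0)\|_2/\sqrt n$; hence $\Kappa_{1,1}^\epsilon = \Kappa_{1,1} + O(\epsilon^2)\to\Kappa_{1,1}$ as $\epsilon\to 0$.

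For the inductive step, suppose $\Kappa_{s,r}^\epsilon\to\Kappa_{s,r}$ for all $s,r\le t$; I want to show $\Kappa_{t+1,s+1}^\epsilon\to\Kappa_{t+1,s+1}$ for each $0\le s\le t$. By definition,
\begin{equation}
\Kappa_{t+1,s+1}^\epsilon = \lim_{n\to\infty}\frac1n\,\E\big[\langle f_s^{\epsilon y}(\bZ^{\epsilon,s}),f_t^{\epsilon y}(\bZ^{\epsilon,t})\rangle\big],
\end{equation}
where $(\bZ^{\epsilon,1},\dots,\bZ^{\epsilon,t})\sim\normal(0,\Kappa^\epsilon\otimes\id_n)$ and the expectation is over the $\bZ$'s only (with $\bZ^{\epsilon,0}=\bx^0$). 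Expanding $f_s^{\epsilon y} = f_s + \epsilon\by^s$ gives four terms: the main term $\tfrac1n\E[\langle f_s(\bZ^{\epsilon,s}),f_t(\bZ^{\epsilon,t})\rangle]$, two cross terms of the form $\tfrac{\epsilon}{n}\E[\langle \by^s, f_t(\bZ^{\epsilon,t})\rangle]$, and the term $\tfrac{\epsilon^2}{n}\langle\by^s,\by^t\rangle$. The cross terms are $O(\epsilon)$: since $f_t$ is uniformly Lipschitz, $\|f_t(\bZ^{\epsilon,t})\|_2/\sqrt n$ is bounded in $L^2$ uniformly in $n$ (using that $\Kappa^\epsilon$ is bounded, which follows from Lemma \ref{lem:SE_perturbated}), so by Cauchy–Schwarz the cross term is $O(\epsilon)\cdot\|\by^s\|_2/\sqrt n = O(\epsilon)$ a.s. The last term is $\tfrac{\epsilon^2}{n}\langle\by^s,\by^t\rangle$, which a.s.\ tends to $\epsilon^2\,\mathbbm{1}_{s=t} = O(\epsilon^2)$. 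It remains to handle the main term. Applying Lemma \ref{lem:pseudo-lip_and_gaussians} with the (uniformly pseudo-Lipschitz of order $2$) function $\phi_n(\bz^s,\bz^t) = \tfrac1n\langle f_s(\bz^s),f_t(\bz^t)\rangle$ — pseudo-Lipschitzness here follows from the Lipschitz property of $f_s,f_t$ together with the elementary inequality $|\langle a,b\rangle - \langle a',b'\rangle|\le \|a-a'\|\|b\| + \|a'\|\|b-b'\|$ — and using the inductive hypothesis that the covariance $(\Kappa_{s,r}^\epsilon)_{s,r\le t}\to(\Kappa_{s,r})_{s,r\le t}$, we conclude
\begin{equation}
\sup_n\Big|\tfrac1n\E[\langle f_s(\bZ^{\epsilon,s}),f_t(\bZ^{\epsilon,t})\rangle] - \tfrac1n\E[\langle f_s(\bZ^{s}),f_t(\bZ^{t})\rangle]\Big|\xrightarrow[\epsilon\to 0]{}0,
\end{equation}
and the right-hand quantity converges (in $n$) to $\Kappa_{t+1,s+1}$. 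Combining with the $O(\epsilon)$ bounds on the cross and quadratic terms, and being slightly careful about exchanging the limits in $n$ and $\epsilon$ (the uniformity in $n$ supplied by Lemma \ref{lem:pseudo-lip_and_gaussians} is precisely what licenses this), we obtain $\Kappa_{t+1,s+1}^\epsilon\to\Kappa_{t+1,s+1}$.

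The main obstacle is the interchange of limits: $\Kappa_{t+1,s+1}^\epsilon$ is itself defined as an $n\to\infty$ limit of a quantity depending on $\epsilon$ through the covariance $\Kappa^\epsilon$, and we want to then send $\epsilon\to 0$. The clean way through is to never fix the wrong order: Lemma \ref{lem:pseudo-lip_and_gaussians} gives convergence of $\E[\phi_n(\bZ)]$ that is \emph{uniform in $n$} as the covariance moves, so for each $\eta>0$ there is a threshold on $\|\Kappa^\epsilon - \Kappa\|$ — hence on $\epsilon$, by the inductive hypothesis — below which $|\E[\phi_n(\bZ^\epsilon)]-\E[\phi_n(\bZ)]|<\eta$ for all $n$; taking $n\to\infty$ inside this bound is then harmless. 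The cross-term and quadratic-term estimates are genuinely almost-sure statements in $\by$, but since Lemma \ref{lem:SE_perturbated} already tells us the limits defining $\Kappa^\epsilon_{s,t}$ exist a.s.\ and equal deterministic constants, working on the full-probability event where all the relevant laws of large numbers hold costs nothing. A minor bookkeeping point is to verify that all the bounding constants (Lipschitz constants of $f_s$, bounds on $\Kappa^\epsilon$ for $\epsilon$ in a bounded neighborhood of $0$) can be chosen uniformly in $\epsilon$ and $n$; this is immediate from the uniform-Lipschitz hypothesis \ref{assumption:gaussian-conv} and from the explicit recursion for $\Kappa^\epsilon$.
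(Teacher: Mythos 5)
Your proof is correct and follows essentially the same route as the paper's: the same induction on $t$, the same decomposition of $\langle f_s^{\epsilon y}(\bZ^{\epsilon,s}),f_t^{\epsilon y}(\bZ^{\epsilon,t})\rangle$ into main, cross and quadratic terms, and the same invocation of Lemma~\ref{lem:pseudo-lip_and_gaussians} combined with the induction hypothesis to handle the main term (the paper appeals to Lemma~\ref{lem:pseudoLipschitz_innerProd} for the pseudo-Lipschitz property; you re-derive it directly). The one cosmetic difference is that the paper simply \emph{cites} the identities~(\ref{eq:x0-conv-perturbed}) and~(\ref{eq:gaussian-conv-perturbed}) already established in the proof of Lemma~\ref{lem:SE_perturbated}, which give exactly $\Kappa_{1,1}^{\epsilon}=\Kappa_{1,1}+\epsilon^2$ and, after $n\to\infty$, make the cross terms vanish identically rather than merely be $O(\epsilon)$; your Cauchy--Schwarz bound is weaker but of course suffices for the $\epsilon\to0$ conclusion. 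Two minor imprecisions worth tidying: the cross term $\tfrac{2\epsilon}{n}\langle f_0(\bx^0),\by^0\rangle$ vanishes via Gaussian concentration / Borel--Cantelli (it is a Gaussian of variance $O(1/n)$), not strictly ``the law of large numbers,'' and the bound on $\|f_0(\bx^0)\|_2/\sqrt{n}$ follows from the uniform Lipschitz hypothesis together with Assumption~\ref{assumption: x0} (or directly from Assumption~\ref{assumption:x^0-conv}), not from Assumption~\ref{assumption: x0} alone. Your explicit discussion of why the uniformity in $n$ from Lemma~\ref{lem:pseudo-lip_and_gaussians} is what licenses the interchange of limits is the right remark and is left implicit in the paper.
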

\begin{lem}
\label{lem:AMP_eps_to_0}Consider the AMP iterates in two different
settings $\left\{ \bx^{t},\bm^{t}\middle\vert f_{t},\bx^{0}\right\} $ and
$\left\{ \bx^{\epsilon y,t},\bm^{\epsilon y,t}\middle\vert f_{t}^{\epsilon y},\bx^{0}\right\} $.
Assume further that for some $t\in\N$, $\Kappa_{1,1},\dots,\Kappa_{t,t}>0$.
Then there exist functions $h_{t}(\epsilon)$, $h'_{t}(\epsilon)$,
independent of $n$, such that 
\begin{equation}
\lim_{\epsilon\to0}h_{t}(\epsilon)=\lim_{\epsilon\to0}h_{t}'(\epsilon)=0,
\end{equation}
and for all $\epsilon\leq1$, with high probability, 
\begin{align}
\frac{1}{\sqrt{n}}\left\Vert \bm^{\epsilon y,t}-\bm^{t}\right\Vert _{2} & \leq h_{t}'(\epsilon),\\
\frac{1}{\sqrt{n}}\left\Vert \bx^{\epsilon y,t+1}-\bx^{t+1}\right\Vert _{2} & \leq h_{t}(\epsilon).
\end{align}
\end{lem}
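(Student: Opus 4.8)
The plan is to prove Lemma~\ref{lem:AMP_eps_to_0} by induction on $t$, comparing the AMP iterations $\{\bx^t,\bm^t\mid f_t,\bx^0\}$ and $\{\bx^{\epsilon y,t},\bm^{\epsilon y,t}\mid f_t^{\epsilon y},\bx^0\}$ step by step, and tracking the accumulation of errors coming from two distinct sources: (i) the explicit additive perturbation $\epsilon\by^t$ in the nonlinearity, and (ii) the difference $|\sb_t^\epsilon-\sb_t|$ in the Onsager coefficients, which are defined through the (perturbed vs.\ unperturbed) state evolution. The base case $t=0$ is immediate since $\bx^{\epsilon y,0}=\bx^0$ and $\bm^{\epsilon y,0}-\bm^0=f_0^{\epsilon y}(\bx^0)-f_0(\bx^0)=\epsilon\by^0$, so $\|\bm^{\epsilon y,0}-\bm^0\|_2/\sqrt n=\epsilon\|\by^0\|_2/\sqrt n$, which is $\le 2\epsilon$ with high probability by concentration of $\|\by^0\|_2/\sqrt n\to 1$; set $h_0'(\epsilon)=2\epsilon$.

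For the inductive step, suppose the bounds hold at level $t$ (with functions $h_t,h_t'$). Write
\begin{align}
\bx^{\epsilon y,t+1}-\bx^{t+1}
&=\bA\big(\bm^{\epsilon y,t}-\bm^t\big)-\sb_t^\epsilon\,\bm^{\epsilon y,t-1}+\sb_t\,\bm^{t-1}\nonumber\\
&=\bA\big(\bm^{\epsilon y,t}-\bm^t\big)-\sb_t^\epsilon\big(\bm^{\epsilon y,t-1}-\bm^{t-1}\big)-\big(\sb_t^\epsilon-\sb_t\big)\bm^{t-1}.\nonumber
\end{align}
Using $\|\bA\|_{\op}\le C$ with high probability (a standard bound for $\GOE(n)$), the first term is controlled by $C\,h_t'(\epsilon)$; the second by $|\sb_t^\epsilon|\cdot h_{t-1}'(\epsilon)$, and $\sb_t^\epsilon$ is bounded uniformly in $n$ and $\epsilon\le 1$ (since $f_t^{\epsilon y}$ is uniformly Lipschitz and its divergence is bounded by $n$ times the Lipschitz constant); the third term is $|\sb_t^\epsilon-\sb_t|\cdot\|\bm^{t-1}\|_2/\sqrt n$, where $\|\bm^{t-1}\|_2/\sqrt n$ is bounded with high probability by Theorem~\ref{thm:AMP_convergence_weak} (applied to the unperturbed setting — note the hypothesis $\Kappa_{1,1},\dots,\Kappa_{t,t}>0$ lets us use it, or one argues more directly). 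By Lemma~\ref{lem:SE_eps_to_zero}, $|\sb_t^\epsilon-\sb_t|\to 0$ as $\epsilon\to 0$: indeed $\sb_t^\epsilon=\frac1n\E[\div f_t^{\epsilon y}(\bZ_\epsilon^t)]$ with $\bZ_\epsilon^t\sim\normal(0,\Kappa_{t,t}^\epsilon\id_n)$, and $\div f_t^{\epsilon y}=\div f_t$ (the perturbation is a constant shift, so adds nothing to the divergence), while $\Kappa_{t,t}^\epsilon\to\Kappa_{t,t}$; hence $\sb_t^\epsilon-\sb_t=\frac1n\E[\div f_t(\bZ_\epsilon^t)]-\frac1n\E[\div f_t(\bZ^t)]$, which one shows vanishes — either via Lemma~\ref{lem:pseudo-lip_and_gaussians} if $\div f_t/n$ is assumed pseudo-Lipschitz, or (the robust route, since that is not assumed here) by approximating $f_t$ by a smooth function and using dominated convergence together with a Gaussian integration-by-parts identity, $\frac1n\E[\div f_t(\bZ)]=\frac{1}{n\Kappa_{t,t}}\E[\langle\bZ,f_t(\bZ)\rangle]$ for $\bZ\sim\normal(0,\Kappa_{t,t}\id_n)$, which is genuinely continuous in the variance. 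Define $h_{t}(\epsilon):=C\,h_t'(\epsilon)+(\sup_{\epsilon\le1}|\sb_t^\epsilon|)\,h_{t-1}'(\epsilon)+|\sb_t^\epsilon-\sb_t|\cdot K_{t-1}$ where $K_{t-1}$ is the high-probability bound on $\|\bm^{t-1}\|_2/\sqrt n$; each summand $\to 0$, so $h_t(\epsilon)\to 0$.

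The companion bound follows from $\bm^{\epsilon y,t+1}-\bm^{t+1}=f_{t+1}^{\epsilon y}(\bx^{\epsilon y,t+1})-f_{t+1}(\bx^{t+1})=\big(f_{t+1}(\bx^{\epsilon y,t+1})-f_{t+1}(\bx^{t+1})\big)+\epsilon\by^{t+1}$, so by the uniform Lipschitz property of $f_{t+1}$ (with constant $L_{t+1}$, uniform in $n$) and the bound just obtained,
\[
\frac{1}{\sqrt n}\big\|\bm^{\epsilon y,t+1}-\bm^{t+1}\big\|_2
\le L_{t+1}\,\frac{1}{\sqrt n}\big\|\bx^{\epsilon y,t+1}-\bx^{t+1}\big\|_2+\epsilon\,\frac{\|\by^{t+1}\|_2}{\sqrt n}
\le L_{t+1}\,h_t(\epsilon)+2\epsilon=:h_{t+1}'(\epsilon)
\]
with high probability, which again $\to 0$ as $\epsilon\to 0$. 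This closes the induction. The main obstacle is the third term, i.e.\ establishing that $\sb_t^\epsilon\to\sb_t$ without assuming $\div f_t$ is pseudo-Lipschitz: since $f_t$ is only Lipschitz, its divergence is merely an $L^\infty$ object defined a.e., and one must argue that its Gaussian average is continuous in the covariance. The Stein/integration-by-parts reformulation above is the clean way to do this, reducing continuity of $\frac1n\E[\div f_t(\bZ)]$ in the variance to continuity of $\frac1n\E[\langle \bZ, f_t(\bZ)\rangle]$, which is immediate from the Lipschitz (hence linear-growth) bound on $f_t$ and a scaling change of variables; one should also take care that all "with high probability" operator-norm and norm bounds are uniform over $\epsilon\le 1$, which they are since the perturbed state evolution constants converge and are therefore bounded.
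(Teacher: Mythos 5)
Your proof is correct and follows essentially the same route as the paper: the same a priori norm bounds, the same three-term decomposition of $\bx^{\epsilon y,t+1}-\bx^{t+1}$ (operator-norm $\times$ Lipschitz error, $\sb_t^\epsilon$ times the lagged iterate difference, and $(\sb_t^\epsilon-\sb_t)\bm^{t-1}$), and the same Stein's-lemma reformulation $\frac{1}{n}\E[\div f_t(\bZ)]=\frac{1}{\Kappa_{t,t}}\E[\frac{1}{n}\langle\bZ,f_t(\bZ)\rangle]$ (Lemma~\ref{lem:Stein}) to reduce continuity of the Onsager coefficient in the variance $\Kappa_{t,t}^\epsilon\to\Kappa_{t,t}$ to the Lipschitz bound on $f_t$, invoking Lemma~\ref{lem:SE_eps_to_zero}. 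The only cosmetic difference is that the paper obtains the high-probability bound on $\|\bm^{t-1}\|_2/\sqrt{n}$ by a direct a priori induction using only the Lipschitz property and $\|\bA\|_{\op}\to 2$, whereas you invoke Theorem~\ref{thm:AMP_convergence_weak}; both work, and the smooth-approximation/dominated-convergence alternative you sketch is unnecessary once the Stein reformulation is in place, as the paper's computation shows.
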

\begin{proof}[Proof of Theorem \ref{thm:AMP_convergence}]
The proof combines three elements that follow from the previous lemmas: 
\begin{itemize}
\item Thanks to Lemmas \ref{lem:SE_perturbated} and \ref{lem:good-cond-perturbed},
\emph{almost surely w.r.t. the perturbation} $\by^{0},\by^{1},\dots$, the
assumptions of Theorem \ref{thm:AMP_convergence_weak} are satisfied
for the perturbed setting $\left\{ \bx^{0},f_{t}^{\epsilon y}\right\} $.
We get that a.s., for any sequence of uniformly pseudo-Lipschitz functions
$\left\{ \phi_{n}:\left(\mathbb{R}^{n}\right)^{t+2}\to\mathbb{R}\right\} $,
\begin{equation}
\phi_{n}\left(\bx^{0},\bx^{\epsilon y,1},\dots,\bx^{\epsilon y,t+1}\right)\approxP\mathbb{E}\left[\phi_{n}\left(\bx^{0},\bZ^{\epsilon,1},\dots,\bZ^{\epsilon,t+1}\right)\right]\label{eq:thm_5_applied}
\end{equation}
where $\bZ^{\epsilon,1},\dots,\bZ^{\epsilon,t+1}\sim\normal\left(0,\left(\Kappa_{r,s}^{\epsilon}\right)_{r,s\leq t+1}\otimes \id_{n}\right)$.
To obtain the desired result, we shall take the limit $\epsilon\to0$,
the technicalities of which are presented in the following two elements. 
\item Let $\bZ^{1},\dots,\bZ^{t+1}\sim\normal\left(0,\left(\Kappa_{r,s}\right)_{r,s\leq t+1}\otimes \id_{n}\right)$.
Since, by Lemma \ref{lem:SE_eps_to_zero}, the perturbed state evolution
converges to the original one when $\epsilon\to0$, so we can apply
Lemma \ref{lem:pseudo-lip_and_gaussians} to get 
\begin{equation}
\sup_{n\geq1}\left\vert \mathbb{E}\left[\phi_{n}\left(\bx^{0},\bZ^{\epsilon,1},\dots,\bZ^{\epsilon,t+1}\right)\right]-\mathbb{E}\left[\phi_{n}\left(\bx^{0},\bZ^{1},\dots,\bZ^{t+1}\right)\right]\right\vert \xrightarrow[\epsilon\to0]{}0.\label{eq:gaussian_approx}
\end{equation}
\item Using that $\phi_{n}$ is uniformly pseudo-Lipschitz of order $k$
and the triangle inequality, 
\begin{align}
 & \left\vert \phi_{n}\left(\bx^{0},\bx^{1},\dots,\bx^{t+1}\right)-\phi_{n}\left(\bx^{0},\bx^{\epsilon y,1},\dots,\bx^{\epsilon y,t+1}\right)\right\vert \\
 & \leq LC_{1}(k,t)\left(1+\frac{\left\Vert \bx^{0}\right\Vert _{2}^{k-1}}{n^{(k-1)/2}}+\sum_{i=1}^{t+1}\frac{\left\Vert \bx^{\epsilon y,i}\right\Vert _{2}^{k-1}}{n^{(k-1)/2}}+\sum_{i=1}^{t+1}\frac{\left\Vert \bx^{i}\right\Vert _{2}^{k-1}}{n^{(k-1)/2}}\right)\sum_{i=1}^{t+1}\frac{\left\Vert \bx^{\epsilon y,i}-\bx^{i}\right\Vert _{2}}{\sqrt{n}}
\end{align}
where here $C_{j}(k,t)$ is a constant depending only on $k$ and
$t$. Lemma \ref{lem:AMP_eps_to_0} ensures that w.h.p. $\left\Vert \bx^{\epsilon y,i}-\bx^{i}\right\Vert _{2}/\sqrt{n}\leq h_{i}(\epsilon)$.
We also know by assumption \ref{assumption: x0} that $\left\Vert \bx^{0}\right\Vert _{2}/\sqrt{n}$
converges to a finite limit. Furthermore, one can use Theorem \ref{thm:AMP_convergence_weak}
to bound w.h.p. 
\begin{equation}
\frac{\left\Vert \bx^{\epsilon y,i}\right\Vert _{2}^{k-1}}{n^{(k-1)/2}}=\frac{\E\left[\left\Vert \bZ^{\epsilon,i}\right\Vert _{2}^{k-1}\right]}{n^{(k-1)/2}}+\left(\frac{\left\Vert \bx^{\epsilon y,i}\right\Vert _{2}^{k-1}}{n^{(k-1)/2}}-\frac{\E\left[\left\Vert \bZ^{\epsilon,i}\right\Vert _{2}^{k-1}\right]}{n^{(k-1)/2}}\right)\leq C_{2}(k)\left\Vert \left(\Kappa_{s,r}^{\epsilon}\right)_{s,r\leq t+1}\right\Vert _{{\rm op}}^{(k-1)/2}+1.
\end{equation}
Finally, using the triangle inequality, w.h.p., 
\begin{align}
\frac{\left\Vert \bx^{i}\right\Vert _{2}^{k-1}}{n^{(k-1)/2}} & \leq C_{3}(k)\left(\frac{\left\Vert \bx^{\epsilon y,i}\right\Vert _{2}^{k-1}}{n^{(k-1)/2}}+\frac{\left\Vert \bx^{\epsilon y,i}-\bx^{i}\right\Vert _{2}^{k-1}}{n^{(k-1)/2}}\right)\\
 & \leq C_{4}(k)\left(\left\Vert \left(\Kappa_{s,r}^{\epsilon}\right)_{s,r\leq t+1}\right\Vert _{{\rm op}}^{(k-1)/2}+1+h_{i}(\epsilon)^{k-1}\right).
\end{align}
Putting things together, we get w.h.p., 
\begin{align}
 & \left\vert \phi_{n}\left(\bx^{0},\bx^{1},\dots,\bx^{t+1}\right)-\phi_{n}\left(\bx^{0},\bx^{\epsilon y,1},\dots,\bx^{\epsilon y,t+1}\right)\right\vert \\
 & \leq LC_{5}(k,t)\left(1+\left\Vert \left(\Kappa_{s,r}^{\epsilon}\right)_{s,r\leq t+1}\right\Vert _{{\rm op}}^{(k-1)/2}+\sum_{i=1}^{t+1}h_{i}(\epsilon)^{k-1}\right)\sum_{i=1}^{t+1}h_{i}(\epsilon).
\end{align}
As this upper bound goes converges to 0 as $\epsilon\to0$, we have
for any $\eta>0$, 
\begin{equation}
\lim_{\epsilon\to0}\limsup_{n\to\infty}\Pr\left(\left\vert \phi_{n}\left(\bx^{0},\bx^{1},\dots,\bx^{t+1}\right)-\phi_{n}\left(\bx^{0},\bx^{\epsilon y,1},\dots,\bx^{\epsilon y,t+1}\right)\right\vert \geq\eta\right)=0.\label{eq:phi_approx}
\end{equation}
\end{itemize}
Let us now combine the three elements together. Let $\eta>0$. We
have: 
\begin{align}
 & \Pr\left(\left\vert \phi_{n}\left(\bx^{0},\bx^{1},\dots,\bx^{t+1}\right)-\mathbb{E}\left[\phi_{n}\left(\bx^{0},\bZ^{1},\dots,\bZ^{t+1}\right)\right]\right\vert \geq\eta\right)\\
 & \leq\Pr\left(\left\vert \phi_{n}\left(\bx^{0},\bx^{1},\dots,\bx^{t+1}\right)-\phi_{n}\left(\bx^{0},\bx^{\epsilon y,1},\dots,\bx^{\epsilon y,t+1}\right)\right\vert \geq\frac{\eta}{3}\right)\\
 & \quad+\Pr\left(\left\vert \phi_{n}\left(\bx^{0},\bx^{\epsilon y,1},\dots,\bx^{\epsilon y,t+1}\right)-\mathbb{E}\left[\phi_{n}\left(\bx^{0},\bZ^{\epsilon,1},\dots,\bZ^{\epsilon,t+1}\right)\right]\right\vert \geq\frac{\eta}{3}\right)\\
 & \quad+\mathbbm{1}_{\left\{ \left\vert \E\left[\phi_{n}\left(\bx^{0},\bZ^{\epsilon,1},\dots,\bZ^{\epsilon,t+1}\right)\right]-\E\left[\phi_{n}\left(\bx^{0},\bZ^{1},\dots,\bZ^{t+1}\right)\right]\right\vert \geq\eta/3\right\} }
\end{align}
Taking $\limsup$ as $n\to\infty$, the second term vanishes because
of (\ref{eq:thm_5_applied}): 
\begin{align}
 & \limsup_{n\to\infty}\Pr\left(\left\vert \phi_{n}\left(\bx^{0},\bx^{1},\dots,\bx^{t+1}\right)-\mathbb{E}\left[\phi_{n}\left(\bx^{0},\bZ^{1},\dots,\bZ^{t+1}\right)\right]\right\vert \geq\eta\right)\\
 & \leq\limsup_{n\to\infty}\Pr\left(\left\vert \phi_{n}\left(\bx^{0},\bx^{1},\dots,\bx^{t+1}\right)-\phi_{n}\left(\bx^{0},\bx^{\epsilon y,1},\dots,\bx^{\epsilon y,t+1}\right)\right\vert \geq\frac{\eta}{3}\right)\\
 & \quad+\mathbbm{1}_{\left\{ \sup_{n\geq1}\left\vert \E\left[\phi_{n}\left(\bx^{0},\bZ^{\epsilon,1},\dots,\bZ^{\epsilon,t+1}\right)\right]-\E\left[\phi_{n}\left(\bx^{0},\bZ^{1},\dots,\bZ^{t+1}\right)\right]\right\vert \geq\eta/3\right\} }.
\end{align}
Because of (\ref{eq:phi_approx}) and (\ref{eq:gaussian_approx}),
this upper bound converges to 0 as $\epsilon\to0$. We can then conclude
that 
\begin{equation}
\left\vert \phi_{n}\left(\bx^{0},\bx^{1},\dots,\bx^{t+1}\right)-\mathbb{E}\left[\phi_{n}\left(\bx^{0},\bZ^{1},\dots,\bZ^{t+1}\right)\right]\right\vert \xrightarrow[n\to\infty]{P}0.
\end{equation}
\end{proof}

\subsection{Proof of the Lemmas\label{subsec:Proof-Lemmas}}

\subsubsection{Proof of Lemma \ref{lem:new_AMP_conditional_distibution}}

The claim for $t=0$ is immediate from that $\mathfrak{S}_{0}$ is
the trivial $\sigma$-algebra and $\bP_{Q_{t-1}}^{\perp}=\id_{n}$. For
$t\geq1$, let us rewrite (\ref{eq:new_AMP_1}) as 
\begin{align}
\bh^{t+1} & =\bA\bq_{\perp}^{t}-\bP_{Q_{t-1}}\bA\bq_{\perp}^{t}+\bH_{t-1}\balpha^{t}\\
 & =\bA\bq_{\perp}^{t}-\tilde{\bQ}_{t-1}\left(\tilde{\bQ}_{t-1}^{\sT}\tilde{\bQ}_{t-1}\right)^{-1}\bY_{t-1}^{\sT}\bq_{\perp}^{t}+\bH_{t-1}\balpha^{t}
\end{align}
where $\bq_{\perp}^{t}=\bP_{Q_{t-1}}^{\perp}\bq^{t}$, $\tilde{\bQ}_{t-1}=\left[\bq_{\perp}^{0}\vert \bq_{\perp}^{1}\vert\dots\vert \bq_{\perp}^{t-1}\right]$
and $\bY_{t-1}=\left[\by^{0}\vert \by^{1}\vert\dots\vert \by^{t-1}\right]$
with $\by^{s}=\bA^{\sT}\bq_{\perp}^{s}=\bA\bq_{\perp}^{s}$. Here we use the fact
that $\bP_{Q_{t-1}}=\bP_{\tilde{Q}_{t-1}}$. Notice that 
\begin{align}
\by^{0} & =\bh^{1},\\
\by^{s} & =\bh^{s+1}+\tilde{\bQ}_{s-1}\left(\tilde{\bQ}_{s-1}^{\sT}\tilde{\bQ}_{s-1}\right)^{-1}\bY_{s-1}^{\sT}\bq_{\perp}^{s}-\bH_{s-1}\balpha^{s}
\end{align}
for any $s\geq1$. Also, $\bH_{s-1}$, $\bQ_{s-1}$, and $\tilde{\bQ}_{s-1}$
are $\mathfrak{S}_{t}$-measurable for $1\leq s\leq t$. Then a simple
induction yields that $\bY_{t-1}$ is $\mathfrak{S}_{t}$-measurable.
Hence to find $\bA\vert_{\mathfrak{S}_{t}}$, conditioning on $\mathfrak{S}_{t}$
is equivalent to conditioning on the linear constraint $\bA\tilde{\bQ}_{t-1}=\bY_{t-1}$.
As shown in \cite[Lemma 3]{Javanmard2013} and \cite[Lemma 10]{Bayati2011},
$\bA\vert_{\mathfrak{S}_{t}}\ed\mathbb{E}\left[\bA\middle|\mathfrak{S}_{t}\right]+{\cal P}_{t}\left(\tilde{\bA}\right)$,
where $\tilde{\bA}\ed A$ independent of $\mathfrak{S}_{t}$
and ${\cal P}_{t}$ is the orthogonal projector onto the subspace
$\left\{ \hat{\bA}\in\mathbb{R}^{n\times n}\middle|\hat{\bA}\tilde{\bQ}_{t-1}=0,\hat{\bA}=\hat{\bA}^{\sT}\right\} $:
\begin{align}
\mathbb{E}\left[\bA\middle|\mathfrak{S}_{t}\right] & =\bA-\bP_{\tilde{Q}_{t-1}}^{\perp}\bA\bP_{\tilde{Q}_{t-1}}^{\perp}=\bA-\bP_{Q_{t-1}}^{\perp}\bA\bP_{Q_{t-1}}^{\perp},\\
{\cal P}_{t}\left(\tilde{\bA}\right) & =\bP_{\tilde{Q}_{t-1}}^{\perp}\tilde{\bA}\bP_{\tilde{Q}_{t-1}}^{\perp}=\bP_{Q_{t-1}}^{\perp}\tilde{\bA}\bP_{Q_{t-1}}^{\perp},
\end{align}
where we use $\bP_{\tilde{Q}_{t-1}}^{\perp}=\bP_{Q_{t-1}}^{\perp}$. Then
from (\ref{eq:new_AMP_1}), 
\begin{equation}
\bh^{t+1}\vert_{\mathfrak{S}_{t}}\ed \bP_{Q_{t-1}}^{\perp}\tilde{\bA}\bP_{Q_{t-1}}^{\perp}\bq^{t}+\bH_{t-1}\balpha^{t}
\end{equation}
since $\bP_{Q_{t-1}}^{\perp}\mathbb{E}\left[\bA\middle|\mathfrak{S}_{t}\right]\bP_{Q_{t-1}}^{\perp}=\bP_{Q_{t-1}}^{\perp}\left(A-\bP_{Q_{t-1}}^{\perp}\bA\bP_{Q_{t-1}}^{\perp}\right)\bP_{Q_{t-1}}^{\perp}=0$.

\subsubsection{Proof of Lemma \ref{lem:new_AMP_convergence}}

We prove the results by induction over $t\in\mathbb{N}$. Let the
statement for $t$ be ${\cal H}_{t}$.

\textbf{Proof of ${\cal H}_{0}$.} Recall that $\bh^{1}=\bA\bq^{0}$. Then
\ref{item:same_geometry} follows immediately from Lemma \ref{lem:GOE_properties},
and \ref{item:LAMP_SE} is from Lemmas \ref{lem:GOE_properties},
\ref{lem:pseudoLipschitz_add_constVec}, \ref{lem:psedoLipschitz_concentration}.

\textbf{Proof of ${\cal H}_{t}$.} We assume ${\cal H}_{0},\dots,{\cal H}_{t-1}$
hold and prove ${\cal H}_{t}$. First note that $\balpha^{t}\xrightarrow[n\to\infty]{{\rm P}}\balpha^{t,*}$
a constant vector in $\mathbb{R}^{t}$, using ${\cal H}_{t-1}\left(b\right)$,
Lemma \ref{lem:pseudoLipschitz_innerProd} and the non-degeneracy
assumption.
\begin{enumerate}[label=(\alph*)]
\item We only need to prove the claim for $r=t$.

Consider the case $s<t$. Since $\bh^{s+1}$ and $\left\< \bq^{s},\bq^{t}\right\> $
are $\mathfrak{S}_{t}$-measurable, by Lemma \ref{lem:new_AMP_conditional_distibution},
\begin{equation}
\left.\left(\left\< \bh^{s+1},\bh^{t+1}\right\> -\left\< \bq^{s},\bq^{t}\right\> \right)\right\vert_{\mathfrak{S}_{t}}\ed\left\< \bP_{Q_{t-1}}^{\perp}\bh^{s+1},\tilde{\bA}\bq_{\perp}^{t}\right\> +\left\< \bH_{t-1}^{\sT}\bh^{s+1},\balpha^{t}\right\> -\left\< \bq^{s},\bq^{t}\right\> .
\end{equation}
Note that by ${\cal H}_{t-1}\left(a\right)$, $\frac{1}{n}\left\Vert \bH_{t-1}^{\sT}\bh^{s+1}-\bQ_{t-1}^{\sT}\bq^{s}\right\Vert _{2}\xrightarrow[n\to\infty]{{\rm P}}0$
. Hence, 
\begin{align}
\frac{1}{n}\left|\left\< \bH_{t-1}^{\sT}\bh^{s+1},\balpha^{t}\right\> -\left\< \bq^{s},\bq^{t}\right\> \right| & =\frac{1}{n}\left|\left\< \bH_{t-1}^{\sT}\bh^{s+1},\balpha^{t}\right\> -\left\< \bP_{Q_{t-1}}\bq^{s},\bq^{t}\right\> \right|\\
 & =\frac{1}{n}\left|\left\< \bH_{t-1}^{\sT}\bh^{s+1}-\bQ_{t-1}^{\sT}\bq^{s},\balpha^{t}\right\> \right|\\
 & \leq\frac{1}{n}\left\Vert \bH_{t-1}^{\sT}\bh^{s+1}-\bQ_{t-1}^{\sT}\bq^{s}\right\Vert _{2}\left\Vert \balpha^{t}\right\Vert _{2}\xrightarrow[n\to\infty]{{\rm P}}0
\end{align}
where we use $\balpha^{t}\xrightarrow[n\to\infty]{{\rm P}}\balpha^{t,*}$ (which holds by the induction hypothesis).
Furthermore, since $\tilde{\bA}$ is independent of $\bq_{\perp}^{t}$
and $\bP_{Q_{t-1}}^{\perp}\bh^{s+1}$, by Lemma \ref{lem:GOE_properties},
\begin{equation}
\frac{1}{n}\left|\left\< \bP_{Q_{t-1}}^{\perp}\bh^{s+1},\tilde{\bA}\bq_{\perp}^{t}\right\> \right|\xrightarrow[n\to\infty]{{\rm P}}0
\end{equation}
since $\frac{1}{\sqrt{n}}\left\Vert \bh^{s+1}\right\Vert _{2}$ and
$\frac{1}{\sqrt{n}}\left\Vert \bq^{t}\right\Vert _{2}$ concentrate
at finite constants by ${\cal H}_{t-1}\left(b\right)$ and Lemma \ref{lem:pseudoLipschitz_innerProd},
and $\left\Vert \bP_{Q_{t-1}}^{\perp}\bh^{s+1}\right\Vert _{2}\leq\left\Vert \bh^{s+1}\right\Vert _{2}$,
$\left\Vert \bq_{\perp}^{t}\right\Vert _{2}\leq\left\Vert \bq^{t}\right\Vert _{2}$.
It follows that $\frac{1}{n}\left\< \bh^{s+1},\bh^{t+1}\right\> \approxP\frac{1}{n}\left\< \bq^{s},\bq^{t}\right\> $.

Consider the case $s=t$. Since $\bq^{t}$ is $\mathfrak{S}_{t}$-measurable,
by Lemma \ref{lem:new_AMP_conditional_distibution}, 
\begin{align}
\left(\left\Vert \bh^{t+1}\right\Vert _{2}^{2}-\left\Vert \bq^{t}\right\Vert _{2}^{2}\right)\bigg\vert_{\mathfrak{S}_{t}} & \ed\left\Vert \bP_{Q_{t-1}}^{\perp}\tilde{\bA}\bq_{\perp}^{t}\right\Vert _{2}^{2}+2\left\< \bP_{Q_{t-1}}^{\perp}\tilde{\bA}\bq_{\perp}^{t},\bH_{t-1}\balpha^{t}\right\> +\left\Vert \bH_{t-1}\balpha^{t}\right\Vert _{2}^{2}-\left\Vert \bq^{t}\right\Vert _{2}^{2}\\
 & =\left\Vert \bP_{Q_{t-1}}^{\perp}\tilde{\bA}\bq_{\perp}^{t}\right\Vert _{2}^{2}+2\left\< \tilde{\bA}\bq_{\perp}^{t},\bP_{Q_{t-1}}^{\perp}\bH_{t-1}\balpha^{t}\right\> +\left\< \balpha^{t},\bH_{t-1}^{\sT}\bH_{t-1}\balpha^{t}\right\>-\left\Vert \bq^{t}\right\Vert _{2}^{2}.
\end{align}
Again, $\frac{1}{n}\left\< \tilde{\bA}\bq_{\perp}^{t},\bP_{Q_{t-1}}^{\perp}\bH_{t-1}\balpha^{t}\right\> \xrightarrow[n\to\infty]{{\rm P}}0$.
By independence of $\tilde{\bA}$ and Lemma \ref{lem:GOE_properties},
we get 
\begin{equation}
\frac{1}{n}\left\Vert \bP_{Q_{t-1}}^{\perp}\tilde{\bA}\bq_{\perp}^{t}\right\Vert _{2}^{2}=\frac{1}{n}\left\Vert \tilde{\bA}\bq_{\perp}^{t}\right\Vert _{2}^{2}-\frac{1}{n}\left\Vert \bP_{Q_{t-1}}\tilde{\bA}\bq_{\perp}^{t}\right\Vert _{2}^{2}\approxP\frac{1}{n}\left\Vert \bq_{\perp}^{t}\right\Vert _{2}^{2}.
\end{equation}
Using ${\cal H}_{t-1}\left(a\right)$ and that $\balpha^{t}\xrightarrow[n\to\infty]{{\rm P}}\balpha^{t,*}$,
\begin{equation}
\frac{1}{n}\left\< \balpha^{t},\left(\bH_{t-1}^{\sT}\bH_{t-1}-\bQ_{t-1}^{\sT}\bQ_{t-1}\right)\balpha^{t}\right\> \xrightarrow[n\to\infty]{{\rm P}}0.
\end{equation}
Notice that $\left\< \balpha^{t},\bQ_{t-1}^{\sT}\bQ_{t-1}\balpha^{t}\right\> =\left\Vert \bq_{\parallel}^{t}\right\Vert _{2}^{2}$.
The claim is proven. 
\item First note that $\frac{1}{n}\left\Vert \bh^{t+1}\right\Vert _{2}^{2}\approxP\frac{1}{n}\left\Vert \bq^{t}\right\Vert _{2}^{2}\xrightarrow[n\to\infty]{{\rm P}}\Kappa_{t+1,t+1}$
by ${\cal H}_{t-1}$. By Lemma \ref{lem:new_AMP_conditional_distibution},
\begin{equation}
\left.\phi_{n}\left(\bx^{0},\bh^{1},\dots,\bh^{t},\bh^{t+1}\right)\right\vert_{\mathfrak{S}_{t}}\ed\phi_{n}\left(\bx^{0},\bh^{1},\dots,\bh^{t},\tilde{\bA}\bq_{\perp}^{t}-\bP_{Q_{t-1}}\tilde{\bA}\bq_{\perp}^{t}+\bH_{t-1}\balpha^{t}\right),
\end{equation}
and we denote the right-hand side by $\phi'_{n}\left(\tilde{\bA}\bq_{\perp}^{t}-\bP_{Q_{t-1}}\tilde{\bA}\bq_{\perp}^{t}+\bH_{t-1}\balpha^{t}\right)$
for brevity. Note that $\phi'_{n}$ is uniformly pseudo-Lipschitz by Lemma \ref{lem:pseudoLipschitz_add_constVec} and the induction hypothesis, whence
\begin{equation}
\begin{split} & \left|\phi'_{n}\left(\tilde{\bA}\bq_{\perp}^{t}-\bP_{Q_{t-1}}\tilde{\bA}\bq_{\perp}^{t}+\bH_{t-1}\balpha^{t}\right)-\phi'_{n}\left(\tilde{\bA}\bq_{\perp}^{t}+\bH_{t-1}\balpha^{t}\right)\right|\nonumber\\
 & \leq L_{n}C(k,t)\left\{ 1+\left(\frac{\left\Vert \bx^{0}\right\Vert _{2}}{\sqrt{n}}\right)^{k-1}+\sum_{s=1}^{t}\left(\frac{\left\Vert \bh^{s}\right\Vert _{2}}{\sqrt{n}}\right)^{k-1}\right.\\
 & \hskip6em\relax\left.+\left(\frac{\left\Vert \bh^{t+1}\right\Vert _{2}}{\sqrt{n}}\right)^{k-1}+\left(\frac{\left\Vert \tilde{\bA}\bq_{\perp}^{t}\right\Vert _{2}}{\sqrt{n}}\right)^{k-1}+\left(\frac{\left\Vert \bH_{t-1}\balpha^{t}\right\Vert _{2}}{\sqrt{n}}\right)^{k-1}\right\} \frac{\left\Vert \bP_{Q_{t-1}}\tilde{\bA}\bq_{\perp}^{t}\right\Vert _{2}}{\sqrt{n}}
\end{split}
\end{equation}
where $C(k,t)$ is a constant depending only on $k$ and $t$. We
have: 
\begin{equation}
\frac{1}{\sqrt{n}}\left\Vert \bH_{t-1}\balpha^{t}\right\Vert _{2}\leq\left\Vert \frac{1}{\sqrt{n}}\bH_{t-1}\right\Vert _{2}\left\Vert \balpha^{t}\right\Vert _{2}\leq\left\Vert \frac{1}{\sqrt{n}}\bH_{t-1}\right\Vert _{{\rm F}}\left\Vert \balpha^{t}\right\Vert _{2}=\sqrt{\frac{1}{n}\sum_{s=1}^{t}\left\Vert \bh^{s}\right\Vert _{2}^{2}}\cdot\left\Vert \balpha^{t}\right\Vert _{2}
\end{equation}
which converges to a finite constant by ${\cal H}_{t-1}\left(b\right)$
and that $\balpha^{t}\xrightarrow[n\to\infty]{{\rm P}}\balpha^{t,*}$.
We also have $\frac{1}{\sqrt{n}}\left\Vert \tilde{\bA}\bq_{\perp}^{t}\right\Vert _{2}\leq\frac{1}{\sqrt{n}}\left\Vert \tilde{\bA}\right\Vert _{2}\left\Vert \bq^{t}\right\Vert _{2}$,
which converges to a finite constant due to ${\cal H}_{t-1}\left(b\right)$
and Theorem \ref{thm:A_norm}. Furthermore, by independence of $\tilde{\bA}$,
recalling ${\rm rank}\left(\bP_{Q_{t-1}}\right)\leq t$, we have $\frac{1}{\sqrt{n}}\left\Vert \bP_{Q_{t-1}}\tilde{\bA}\bq_{\perp}^{t}\right\Vert _{2}\xrightarrow[n\to\infty]{{\rm P}}0$
by Lemma \ref{lem:GOE_properties}. Therefore, 
\begin{align}
\phi'_{n}\left(\tilde{\bA}\bq_{\perp}^{t}-\bP_{Q_{t-1}}\tilde{\bA}\bq_{\perp}^{t}+\bH_{t-1}\balpha^{t}\right) & \approxP\phi'_{n}\left(\tilde{\bA}\bq_{\perp}^{t}+\bH_{t-1}\balpha^{t}\right)\\
 & \approxP\phi'_{n}\left(\tilde{\bA}\bq_{\perp}^{t}+\bH_{t-1}\balpha^{t,*}\right).
\end{align}
Notice that $\frac{1}{n}\left\Vert \bq_{\perp}^{t}\right\Vert _{2}^{2}=\frac{1}{n}\left\Vert \bq^{t}\right\Vert _{2}^{2}-\frac{1}{n}\left\< \balpha^{t},\bQ_{t-1}^{\sT}\bQ_{t-1}\balpha^{t}\right\> $,
which converges to a constant $a^{2}$ due to ${\cal H}_{t-1}\left(b\right)$
and that $\balpha^{t}\xrightarrow[n\to\infty]{{\rm P}}\balpha^{t,*}$.
Then by Lemma \ref{lem:GOE_properties}, there exists $\bZ\sim\normal\left(0,\id_{n}\right)$
independent of $\mathfrak{S}_{t}$ such that 
\begin{align}
\phi'_{n}\left(\tilde{\bA}\bq_{\perp}^{t}-\bP_{Q_{t-1}}\tilde{\bA}\bq_{\perp}^{t}+\bH_{t-1}\balpha^{t}\right) & \approxP\phi'_{n}\left(a\bZ+\bH_{t-1}\balpha^{t,*}\right)\\
 & \approxP\mathbb{E}_{z}\left[\phi'_{n}\left(a\bZ+\bH_{t-1}\balpha^{t,*}\right)\right]\\
 & \approxP\mathbb{E}\left[\phi_{n}\left(\bx^{0},\bZ^{1},\dots,\bZ^{t},a\bZ+\sum_{s=1}^{t}\alpha_{s}^{t,*}\bZ^{s}\right)\right]
\end{align}
where we use Lemma \ref{lem:psedoLipschitz_concentration} in the
second step, and ${\cal H}_{t-1}$$\left(b\right)$ and Lemma \ref{lem:pseudoLipschitz_expectation_wrt_Gaussian}
in the third step. (Here with an abuse of notation, we let $\bZ$ to
be on the same joint space as and independent of $\bZ^{1},\dots,\bZ^{t}$.)
The thesis follows immediately from that 
\begin{equation}
\left(\bZ^{1},\dots,\bZ^{t},a\bZ+\sum_{s=1}^{t}\alpha_{s}^{t,*}\bZ^{s}\right)\ed\left(\bZ^{1},\dots,\bZ^{t},\bZ^{t+1}\right)
\end{equation}
which we now prove.

Let $\tilde{\bZ}=a\bZ+\sum_{s=1}^{t}\alpha_{s}^{t,*}\bZ^{s}$ for brevity.
Observe that $\tilde{\bZ}$ is Gaussian with zero mean and i.i.d. entries,
${\rm Var}\left[\tilde{Z}_{i}\right]$ is a constant independent of
$n$, and $\mathbb{E}\left[\bZ^{s}\tilde{\bZ}^{\sT}\right]=\gamma_{s}\id_{n}$
for some constant $\gamma_{s}$ independent of $n$, for $1\leq s\leq t$.
It suffices to show that ${\rm Var}\left[\tilde{Z}_{i}\right]=\Kappa_{t+1,t+1}$
and $\gamma_{s}=\Kappa_{s,t+1}$. From the above, ${\cal H}_{t}\left(a\right)$
and ${\cal H}_{t-1}\left(b\right)$, we have: 
\begin{equation}
{\rm Var}\left[\tilde{Z}_{i}\right]=\frac{1}{n}\mathbb{E}\left[\left\Vert \tilde{\bZ}\right\Vert _{2}^{2}\right]\approxP\frac{1}{n}\left\Vert \bh^{t+1}\right\Vert _{2}^{2}\approxP\frac{1}{n}\left\Vert \bq^{t}\right\Vert _{2}^{2}\xrightarrow[n\to\infty]{{\rm P}}\Kappa_{t+1,t+1}.
\end{equation}
Similarly, for $s\geq2$, 
\begin{equation}
\gamma_{s}=\frac{1}{n}\mathbb{E}\left[\left\< \bZ^{s},\tilde{\bZ}\right\> \right]\approxP\frac{1}{n}\left\< \bh^{s},\bh^{t+1}\right\> \approxP\frac{1}{n}\left\< \bq^{s-1},\bq^{t}\right\> \xrightarrow[n\to\infty]{{\rm P}}\Kappa_{s,t+1},
\end{equation}
and for $s=1$, 
\begin{equation}
\gamma_{1}=\frac{1}{n}\mathbb{E}\left[\left\< \bZ^{1},\tilde{\bZ}\right\> \right]\approxP\frac{1}{n}\left\< \bh^{1},\bh^{t+1}\right\> \approxP\frac{1}{n}\left\< \bq^{0},\bq^{t}\right\> \xrightarrow[n\to\infty]{{\rm P}}\Kappa_{1,t+1}.
\end{equation}
This completes the proof. 
\end{enumerate}

\subsubsection{Proof of Lemma \ref{lem:new_AMP_approximates_AMP}}

For the recursion (\ref{eq:new_AMP_1})-(\ref{eq:new_AMP_2}), define
the following quantity for each $t\in\mathbb{N}$, 
\begin{equation}
\hat{\bh}^{t+1}=\bA\bq^{t}-\mathsf{b}_{t}\bq^{t-1},\qquad\mathsf{b}_{t}=\E\left[\frac{1}{n}{\rm div}f_{t}\left(\bZ^{t}\right)\right],
\end{equation}
where we take $\hat{\bh}^{1}=\bA\bq^{0}$. 
\begin{lem}
\label{lem:new_AMP_closeTo_hhat}For any $t\in\mathbb{N}_{>0}$, $\frac{1}{\sqrt{n}}\left\Vert \bh^{t+1}-\hat{\bh}^{t+1}\right\Vert _{2}\xrightarrow[n\to\infty]{{\rm P}}0$. 
\end{lem}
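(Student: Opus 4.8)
The plan is to prove the estimate by induction on $t$, building on Lemma~\ref{lem:new_AMP_convergence} (already established) and the non-degeneracy assumption, which is in force here. The base case is the identity $\bh^1=\bA\bq^0=\hat\bh^1$. For the inductive step, assume $\tfrac{1}{\sqrt n}\|\bh^s-\hat\bh^s\|_2\xrightarrow[n\to\infty]{{\rm P}}0$ for all $1\le s\le t$, and recall from the proof of Lemma~\ref{lem:new_AMP_convergence} that $\balpha^t\xrightarrow[n\to\infty]{{\rm P}}\balpha^{t,*}$, a deterministic vector. Writing $\bq^t=\bq_\perp^t+\bQ_{t-1}\balpha^t$ and substituting into the definitions of $\bh^{t+1}$ and $\hat\bh^{t+1}$ gives
\[
\bh^{t+1}-\hat\bh^{t+1}=-\bP_{Q_{t-1}}\bA\bq_\perp^t+\bH_{t-1}\balpha^t-\bA\bQ_{t-1}\balpha^t+\sb_t\bq^{t-1}.
\]
Since $\hat\bh^j=\bA\bq^{j-1}-\sb_{j-1}\bq^{j-2}$ by definition (with $\sb_0\bq^{-1}:=0$), I substitute $\bA\bq^{j-1}=\hat\bh^j+\sb_{j-1}\bq^{j-2}$ in $\bA\bQ_{t-1}\balpha^t=\sum_{j=1}^t\alpha_j^t\bA\bq^{j-1}$; by the induction hypothesis $\sum_j\alpha_j^t\hat\bh^j$ agrees with $\bH_{t-1}\balpha^t$ up to vanishing normalized norm, so
\[
\bh^{t+1}-\hat\bh^{t+1}=\Bigl(\sb_t\bq^{t-1}-\sum_{j=2}^t\alpha_j^t\sb_{j-1}\bq^{j-2}\Bigr)-\bP_{Q_{t-1}}\bA\bq_\perp^t+\br_n,\qquad\tfrac{1}{\sqrt n}\|\br_n\|_2\xrightarrow[n\to\infty]{{\rm P}}0.
\]

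The lemma thus reduces to showing that $\bP_{Q_{t-1}}\bA\bq_\perp^t$ matches the ``Onsager bracket'' $\sb_t\bq^{t-1}-\sum_{j=2}^t\alpha_j^t\sb_{j-1}\bq^{j-2}$ up to vanishing normalized norm. Both vectors lie in the range of $\bQ_{t-1}$; the non-degeneracy assumption gives $\sigma_{\min}(\bQ_{t-1})/\sqrt n\ge c_t>0$ and, by Lemma~\ref{lem:new_AMP_convergence}(a) and the state evolution of the $\bq$'s, $\tfrac1n\bQ_{t-1}^\sT\bQ_{t-1}$ converges to an invertible matrix, so it suffices to show that $\tfrac1n\bQ_{t-1}^\sT\bA\bq_\perp^t$ and $\tfrac1n\bQ_{t-1}^\sT\bQ_{t-1}\bc$ have the same limit in probability, where $\bc$ is the coefficient vector of the Onsager bracket in the columns of $\bQ_{t-1}$. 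For the $i$-th entry, I use the symmetry of $\bA$ and that $\bq_\perp^t=\bP_{Q_{t-1}}^\perp\bq^t$ is orthogonal to the columns of $\bQ_{t-1}$:
\[
\tfrac1n\<\bq^{i-1},\bA\bq_\perp^t\>=\tfrac1n\<\bA\bq^{i-1},\bq_\perp^t\>=\tfrac1n\<\hat\bh^i+\sb_{i-1}\bq^{i-2},\bq_\perp^t\>=\tfrac1n\<\hat\bh^i,\bq_\perp^t\>,
\]
which by the induction hypothesis equals $\tfrac1n\<\bh^i,\bq_\perp^t\>$ up to vanishing error (using that $\|\bq^t\|_2/\sqrt n$ is bounded).

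To identify the limit, write $\tfrac1n\<\bh^i,\bq_\perp^t\>=\tfrac1n\<\bh^i,f_t(\bh^t)\>-\sum_k\alpha_k^t\,\tfrac1n\<\bh^i,f_{k-1}(\bh^{k-1})\>$. Each inner product is a uniformly pseudo-Lipschitz functional of $(\bx^0,\bh^1,\dots,\bh^t)$, so Lemma~\ref{lem:new_AMP_convergence}(b) (with index $t-1$) replaces it by the corresponding Gaussian expectation, and Gaussian integration by parts (multivariate Stein's lemma, cf.\ Lemma~\ref{lem:Stein}, which does not require $f_t$ separable) turns $\tfrac1n\E[\<\bZ^i,f_t(\bZ^t)\>]$ into $\Kappa_{i,t}\sb_t$ and $\tfrac1n\E[\<\bZ^i,f_{k-1}(\bZ^{k-1})\>]$ into $\Kappa_{i,k-1}\sb_{k-1}$. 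Substituting $\balpha^{t,*}$ for $\balpha^t$ and reindexing, one checks that the resulting limit coincides term by term with $\lim\tfrac1n[\bQ_{t-1}^\sT\bQ_{t-1}\bc]_i=\sum_\ell\Kappa_{i,\ell}c_\ell$. Hence the projection coefficients $(\tfrac1n\bQ_{t-1}^\sT\bQ_{t-1})^{-1}\tfrac1n\bQ_{t-1}^\sT\bA\bq_\perp^t$ and $\bc$ share the same limit; multiplying back by $\bQ_{t-1}$ and using that $\sigma_{\max}(\bQ_{t-1})/\sqrt n$ is bounded (by Lemma~\ref{lem:new_AMP_convergence} and the operator norm bound of Theorem~\ref{thm:A_norm}) shows $\tfrac1{\sqrt n}\|\bP_{Q_{t-1}}\bA\bq_\perp^t-\sb_t\bq^{t-1}+\sum_{j=2}^t\alpha_j^t\sb_{j-1}\bq^{j-2}\|_2\xrightarrow[n\to\infty]{{\rm P}}0$, which together with the estimate on $\br_n$ closes the induction.

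The main obstacle is this last step: checking that the coordinates of the projection of $\bA\bq_\perp^t$ onto the range of $\bQ_{t-1}$ converge to exactly the Onsager coefficients. This is where non-separability has to be handled, and the point is benign: multivariate Stein applied to $\E[\<\bZ^i,f_t(\bZ^t)\>]$ still produces $\Kappa_{i,t}\,\E[\div f_t(\bZ^t)]$, so the definition $\sb_t=\tfrac1n\E[\div f_t(\bZ^t)]$ is precisely what makes the cancellation work. The remaining difficulty is bookkeeping — simultaneously tracking the reindexing of the two sums, the errors from replacing $\hat\bh^j$ by $\bh^j$ and $\balpha^t$ by $\balpha^{t,*}$, and the uniform control of the norms $\|\bq^t\|_2/\sqrt n$, $\|\bh^s\|_2/\sqrt n$, $\sigma_{\max}(\bQ_{t-1})/\sqrt n$ and $\|\bA\|_{\op}$.
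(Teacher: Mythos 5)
Your proof is correct and follows essentially the same strategy as the paper's: isolate the error term $\bP_{Q_{t-1}}\bA\bq_\perp^t$ minus the Onsager bracket, project onto the range of $\bQ_{t-1}$ via the non-degeneracy assumption, and identify the projection coefficients with the Onsager coefficients using Lemma~\ref{lem:new_AMP_convergence} and Stein's lemma. The only cosmetic difference is that the paper packages the identity $\langle\bq^{i-1},\bA\bq_\perp^t\rangle=\langle\hat\bh^i,\bq_\perp^t\rangle$ into the matrix relation $\bQ_{t-1}^\sT\bA=\hat\bH_{t-1}^\sT+\mathsf{B}_{t-1}[0\vert\bQ_{t-2}]^\sT$ and writes out the $\hat\bH_{t-1}-\bH_{t-1}$ error terms explicitly rather than absorbing them into a remainder $\br_n$.
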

\begin{proof}
Denoting the claim as ${\cal H}_{t}$, we prove it by induction. The
base case ${\cal H}_{1}$ is immediate since $\bh^{1}=\hat{\bh}^{1}=\bA\bq^{0}$.
Assuming ${\cal H}_{1},\dots,{\cal H}_{t-1}$, we prove ${\cal H}_{t}$.
Letting $\mathsf{B}_{t}={\rm diag}\left(0,\mathsf{b}_{1},\dots,\mathsf{b}_{t}\right)\in\R^{(t+1)\times(t+1)}$
and $\hat{\bH}_{t-1}=\left[\hat{\bh}^{1}\vert\dots\vert\hat{\bh}^{t}\right]$,
we have $\hat{\bH}_{t-1}=\bA\bQ_{t-1}-\left[0\vert \bQ_{t-2}\right]\mathsf{B}_{t-1}$.
Then since $\bP_{Q_{t-1}}\bq^{t}=\bQ_{t-1}\balpha^{t}$, 
\begin{align}
\bA\bq^{t} & =\bA\bq_{\perp}^{t}+\bA\bQ_{t-1}\balpha^{t}\\
 & =\bA\bq_{\perp}^{t}+\left[0\vert \bQ_{t-2}\right]\mathsf{B}_{t-1}\balpha^{t}+\hat{\bH}_{t-1}\balpha^{t}.
\end{align}
This yields 
\begin{align}
 & \hat{\bh}^{t+1}-\bh^{t+1}=\bP_{Q_{t-1}}\bA\bq_{\perp}^{t}-\mathsf{b}_{t}\bq^{t-1}+\left[0\vert \bQ_{t-2}\right]\mathsf{B}_{t-1}\balpha^{t}+\left(\hat{\bH}_{t-1}-\bH_{t-1}\right)\balpha^{t}\\
 & \quad=\bQ_{t-1}\left(\bQ_{t-1}^{\sT}\bQ_{t-1}\right)^{-1}\bQ_{t-1}^{\sT}A\bq_{\perp}^{t}-\mathsf{b}_{t}\bq^{t-1}+\left[0\vert \bQ_{t-2}\right]\mathsf{B}_{t-1}\balpha^{t}+\left(\hat{\bH}_{t-1}-\bH_{t-1}\right)\balpha^{t}\\
 & \quad\stackrel{\left(a\right)}{=}\bQ_{t-1}\left(\bQ_{t-1}^{\sT}\bQ_{t-1}\right)^{-1}\hat{\bH}_{t-1}^{\sT}\bq_{\perp}^{t}-\mathsf{b}_{t}\bq^{t-1}+\left[0\vert \bQ_{t-2}\right]\mathsf{B}_{t-1}\balpha^{t}+\left(\hat{\bH}_{t-1}-\bH_{t-1}\right)\balpha^{t}\\
 & \quad=\bQ_{t-1}\left(\bQ_{t-1}^{\sT}\bQ_{t-1}\right)^{-1}\bH_{t-1}^{\sT}\bq_{\perp}^{t}-\mathsf{b}_{t}\bq^{t-1}+\left[0\vert \bQ_{t-2}\right]\mathsf{B}_{t-1}\balpha^{t}\\
 & \quad\qquad+\left(\hat{\bH}_{t-1}-\bH_{t-1}\right)\balpha^{t}+\bQ_{t-1}\left(\bQ_{t-1}^{\sT}\bQ_{t-1}\right)^{-1}\left(\hat{\bH}_{t-1}-\bH_{t-1}\right)^{\sT}\bq_{\perp}^{t}\nonumber \\
 & \quad=\sum_{s=1}^{t}c_{s}\bq^{s-1}+\left(\hat{\bH}_{t-1}-\bH_{t-1}\right)\balpha^{t}+\bQ_{t-1}\left(\bQ_{t-1}^{\sT}\bQ_{t-1}\right)^{-1}\left(\hat{\bH}_{t-1}-\bH_{t-1}\right)^{\sT}\bq_{\perp}^{t}
\end{align}
where $\left(a\right)$ holds because $\bQ_{t-1}^{\sT}\bA=\left(\bA\bQ_{t-1}\right)^{\sT}=\hat{\bH}_{t-1}^{\sT}+\mathsf{B}_{t-1}\left[0\vert \bQ_{t-2}\right]^{\sT}$
and $\bQ_{t-2}^{\sT}\bP_{Q_{t-1}}^{\perp}=0$, and 
\begin{equation}
c_{s}=\left[\left(\bQ_{t-1}^{\sT}\bQ_{t-1}\right)^{-1}\bH_{t-1}^{\sT}\bq_{\perp}^{t}\right]_{s}-\mathsf{b}_{s}\left(-\alpha_{s+1}^{t}\right)^{\mathbb{I}_{s\neq t}}.
\end{equation}
By the induction hypothesis, 
\begin{equation}
\frac{1}{\sqrt{n}}\left\Vert \bH_{t-1}-\hat{\bH}_{t-1}\right\Vert _{2}\leq\frac{1}{\sqrt{n}}\left\Vert \bH_{t-1}-\hat{\bH}_{t-1}\right\Vert _{{\rm F}}\xrightarrow[n\to\infty]{{\rm P}}0.
\end{equation}
By Lemma \ref{lem:new_AMP_convergence}, Lemma \ref{lem:pseudoLipschitz_innerProd}
and the non-degeneracy assumption, $\balpha^{t}\xrightarrow[n\to\infty]{{\rm P}}\balpha^{t,*}$
a constant vector in $\mathbb{R}^{t}$. Hence, 
\begin{equation}
\frac{1}{\sqrt{n}}\left\Vert \left(\bH_{t-1}-\hat{\bH}_{t-1}\right)\balpha^{t}\right\Vert _{2}\leq\frac{1}{\sqrt{n}}\left\Vert \bH_{t-1}-\hat{\bH}_{t-1}\right\Vert _{{\rm F}}\left\Vert \balpha^{t,*}\right\Vert _{2}\xrightarrow[n\to\infty]{{\rm P}}0.
\end{equation}
By the non-degeneracy assumption, 
\begin{equation}
\frac{1}{\sqrt{n}}\left\Vert \bQ_{t-1}\left(\bQ_{t-1}^{\sT}\bQ_{t-1}\right)^{-1}\left(\hat{\bH}_{t-1}-\bH_{t-1}\right)^{\sT}\bq_{\perp}^{t}\right\Vert _{2}\leq\frac{1}{\sqrt{n}}\left\Vert \hat{\bH}_{t-1}-\bH_{t-1}\right\Vert _{{\rm F}}\frac{1}{c_{t}\sqrt{n}}\left\Vert \bq^{t}\right\Vert _{2}\xrightarrow[n\to\infty]{{\rm P}}0,
\end{equation}
where $\frac{1}{\sqrt{n}}\left\Vert \bq^{t}\right\Vert _{2}$ converges
in probability to a finite constant by Lemma \ref{lem:new_AMP_convergence}.
We claim that $\frac{1}{\sqrt{n}}c_{s}\left\Vert \bq^{s-1}\right\Vert _{2}\xrightarrow[n\to\infty]{{\rm P}}0$
for $s=1,\dots,t$. Then the thesis follows from this claim.

To prove the claim, denoting $R=\frac{1}{n}\left(\bQ_{t-1}^{\sT}\bQ_{t-1}\right)^{-1}$
for brevity, we note that 
\begin{equation}
c_{s}=\sum_{r=1}^{t}R_{s,r}\frac{1}{n}\left\< \bh^{r},\bq^{t}-\sum_{\ell=1}^{t}\alpha_{\ell}^{t}\bq^{\ell-1}\right\> -\mathsf{b}_{s}\left(-\alpha_{s+1}^{t}\right)^{\mathbb{I}_{s\neq t}}.
\end{equation}
We now analyze $c_{s}$. By Lemma \ref{lem:new_AMP_convergence},
\begin{equation}
\frac{1}{n}\left\< \bh^{r},\bq^{0}\right\> \approxP\mathbb{E}\left[\frac{1}{n}\left\< \bZ^{r},f_{0}\left(\bx^{0}\right)\right\> \right]=0
\end{equation}
since $\bZ^{r}$ has zero mean. By Lemmas \ref{lem:new_AMP_convergence}
and \ref{lem:Stein}, for $j=2,\dots,t-1$, 
\begin{align}
\frac{1}{n}\left\< \bh^{r},\bq^{j}\right\>  & \approxP\mathbb{E}\left[\frac{1}{n}\left\< \bZ^{r},f_{j}\left(\bZ^{j}\right)\right\> \right]\\
 & =\Kappa_{r,j}\mathbb{E}\left[\frac{1}{n}{\rm div}f_{j}\left(\bZ^{j}\right)\right]\\
 & \approxP\frac{1}{n}\left\< \bq^{r-1},\bq^{j-1}\right\> \mathsf{b}_{j}.
\end{align}
Therefore, 
\begin{equation}
c_{s}\approxP\left\{ \sum_{r=1}^{t}R_{s,r}\frac{1}{n}\left\< \bq^{r-1},\mathsf{b}_{t}\bq^{t-1}-\sum_{\ell=2}^{t}\alpha_{1}^{t}\mathsf{b}_{\ell-1}\bq^{\ell-2}\right\> -\mathsf{b}_{s}\left(-\alpha_{s+1}^{t}\right)^{\mathbb{I}_{s\neq t}}\right\} .
\end{equation}
Identifying $\frac{1}{n}\left\< \bq^{r-1},\bq^{j-1}\right\> =\left(R^{-1}\right)_{r,j}$,
we get 
\begin{equation}
c_{s}\approxP\left\{ \mathsf{b}_{t}\mathbb{I}_{t=s}-\sum_{\ell=2}^{t}\alpha_{\ell}^{t}\mathsf{b}_{\ell-1}\mathbb{I}_{\ell-1=s}-\mathsf{b}_{s}\left(-\alpha_{s+1}^{t}\right)^{\mathbb{I}_{s\neq t}}\right\} 
\end{equation}
i.e. $c_{s}\xrightarrow[n\to\infty]{{\rm P}}0$. Finally, since $\frac{1}{\sqrt{n}}\left\Vert \bq^{s-1}\right\Vert _{2}$
converges in probability to a finite constant by Lemma \ref{lem:new_AMP_convergence},
the claim is proven. 
\end{proof}
\begin{proof}[Proof of Lemma \ref{lem:new_AMP_approximates_AMP}]
Let ${\cal H}_{t}$ be the statement $\frac{1}{\sqrt{n}}\left\Vert \bq^{t}-\bm^{t}\right\Vert _{2}\xrightarrow[n\to\infty]{{\rm P}}0$
and $\frac{1}{\sqrt{n}}\left\Vert \bh^{t+1}-\bx^{t+1}\right\Vert _{2}\xrightarrow[n\to\infty]{{\rm P}}0$.
We prove it by induction. The base case ${\cal H}_{0}$ is trivial
because $\bq^{0}=\bm^{0}$ and $\bh^{1}=\bx^{1}$.

We now assume ${\cal H}_{t-1}$ is true and we show ${\cal H}_{t}$.
We have: 
\begin{equation}
\frac{1}{\sqrt{n}}\left\Vert \bq^{t}-\bm^{t}\right\Vert _{2}=\frac{1}{\sqrt{n}}\left\Vert f_{t}\left(\bh^{t}\right)-f_{t}\left(\bx^{t}\right)\right\Vert _{2}\leq L_{t}\frac{1}{\sqrt{n}}\left\Vert \bh^{t}-\bx^{t}\right\Vert _{2}\xrightarrow[n\to\infty]{{\rm P}}0,
\end{equation}
using that $f_{t}$ is uniformly Lipschitz and the induction hypothesis
${\cal H}_{t-1}$. Further, we will prove that $\frac{1}{\sqrt{n}}\left\Vert \hat{\bh}^{t+1}-\bx^{t+1}\right\Vert _{2}\xrightarrow[n\to\infty]{{\rm P}}0$,
which together with Lemma \ref{lem:new_AMP_closeTo_hhat} yields ${\cal H}_{t}$.
We have: 
\begin{equation}
\hat{\bh}^{t+1}-\bx^{t+1}=A(\bq^{t}-\bm^{t})-\mathsf{b}_{t}(\bq^{t-1}-\bm^{t-1}),
\end{equation}
thus by Theorem \ref{thm:A_norm} and ${\cal H}_{t-1}$, 
\begin{equation}
\frac{1}{\sqrt{n}}\left\Vert \hat{\bh}^{t+1}-\bx^{t+1}\right\Vert _{2}\leq\left\Vert \bA\right\Vert _{{\rm op}}\frac{1}{\sqrt{n}}\left\Vert \bq^{t}-\bm^{t}\right\Vert _{2}+\mathsf{b}_{t}\frac{1}{\sqrt{n}}\left\Vert \bq^{t-1}-\bm^{t-1}\right\Vert _{2}\xrightarrow[n\to\infty]{{\rm P}}0.
\end{equation}
This concludes the induction. 
\end{proof}

\subsubsection{Proof of Lemma \ref{lem:SE_perturbated}}

Let us first check assumption \ref{assumption:gaussian-conv} for
the perturbed setting $\left\{ \bx^{0},f_{t}^{\epsilon y}\right\} $.
Consider $s,t\geq1$, $\Kappa$ a $2\times2$ covariance matrix and
$(\bZ^{s},\bZ^{t})\in(\R^{n})^{2}$, $(\bZ^{s},\bZ^{t})\sim\normal\left(0,\Kappa\otimes \id_{n}\right)$.
Note that $\Kappa$ is deterministic, not depending on the perturbation
$\by$. We denote the expectation over $(\bZ^{s},\bZ^{t})$ as $\E_{Z}$.
We have: 
\begin{align}
\E_{Z}\left[\frac{1}{n}\left\< f_{s}^{\epsilon y}(\bZ^{s}),f_{t}^{\epsilon y}(\bZ^{t})\right\> \right] & =\E_{Z}\left[\frac{1}{n}\left\< f_{s}(\bZ^{s}),f_{t}(\bZ^{t})\right\> \right]+\epsilon\E_{Z}\left[\frac{1}{n}\left\< f_{s}(\bZ^{s}),\by^{t}\right\> \right]\\
 & \quad+\epsilon\E_{Z}\left[\frac{1}{n}\left\< \by^{s},f_{t}(\bZ^{t})\right\> \right]+\epsilon^{2}\frac{1}{n}\left\< \by^{s},\by^{t}\right\> \\
 & =\E_{Z}\left[\frac{1}{n}\left\< f_{s}(\bZ^{s}),f_{t}(\bZ^{t})\right\> \right]+\epsilon\frac{1}{n}\left\< \E_{Z}\left[f_{s}(\bZ^{s})\right],\by^{t}\right\> \\
 & \quad+\epsilon\frac{1}{n}\left\< \by^{s},\E_{Z}\left[f_{t}(\bZ^{t})\right]\right\> +\epsilon^{2}\frac{1}{n}\left\< \by^{s},\by^{t}\right\> .
\end{align}

\begin{itemize}
\item The first term does not depend on the perturbation and is thus deterministic.
By assumption \ref{assumption:gaussian-conv} for the setting $\left\{ \bx^{0},f_{t}\right\} $,
$\E_{Z}\left[\frac{1}{n}\left\< f_{s}(\bZ^{s}),f_{t}(\bZ^{t})\right\> \right]$
converges to a (deterministic) limit. 
\item The second term is Gaussian, with mean zero and variance 
\begin{equation}
\frac{1}{n^{2}}\left\Vert \E\left[f_{s}(\bZ^{s})\right]\right\Vert _{2}^{2}\leq\frac{1}{n^{2}}\E\left[\left\Vert f_{s}(\bZ^{s})\right\Vert _{2}^{2}\right]\leq\frac{C}{n},
\end{equation}
for $C$ a constant large enough, using again the assumption \ref{assumption:gaussian-conv}
for the setting $\left\{ \bx^{0},f_{t}\right\} $. Thus, $\frac{1}{n}\left\< \E_{Z}\left[f_{s}(\bZ^{s})\right],\by^{t}\right\> $
is a Gaussian random variable, of standard deviation smaller than
$\sqrt{C}/\sqrt{n}$. Then if $\eta>0$, 
\begin{equation}
\Pr\left(\left\vert \frac{1}{n}\left\< \E_{Z}\left[f_{s}(\bZ^{s})\right],\by^{t}\right\> \right\vert \geq\eta\right)\leq\Pr\left(\left\vert \normal(0,1)\right\vert \geq\eta\frac{\sqrt{n}}{\sqrt{C}}\right)\leq\frac{\sqrt{C}}{\eta\sqrt{n}}\frac{1}{\sqrt{2\pi}}\exp\left(-\frac{1}{2}\frac{\eta^{2}n}{C}\right)
\end{equation}
which is summable. Using Borel-Cantelli's lemma, it is then easy to
show that 
\begin{equation}
\frac{1}{n}\left\< \E_{Z}\left[f_{s}(\bZ^{s})\right],\by^{t}\right\> \xrightarrow[n\to\infty]{a.s}0.
\end{equation}
\item The treatment of the third term is the same as for the second term. 
\item Using the law of large numbers, we get that 
\begin{equation}
\frac{1}{n}\left\< \by^{s},\by^{t}\right\> \xrightarrow[n\to\infty]{a.s.}\mathbbm{1}_{s=t}.
\end{equation}
\end{itemize}
Putting things together, we get almost surely 
\begin{equation}
\lim_{n\to\infty}\E_{Z}\left[\frac{1}{n}\left\< f_{s}^{\epsilon y}(\bZ^{s}),f_{t}^{\epsilon y}(\bZ^{t})\right\> \right]=\lim_{n\to\infty}\E_{Z}\left[\frac{1}{n}\left\< f_{s}(\bZ^{s}),f_{t}(\bZ^{t})\right\> \right]+\epsilon^{2}\mathbbm{1}_{s=t}.\label{eq:gaussian-conv-perturbed}
\end{equation}

The proof of assumptions \ref{assumption:x^0-conv}, \ref{assumption:mixed-conv}
are very similar, here we only state the resulting expressions: almost
surely, 
\begin{align}
\lim_{n\rightarrow\infty}\frac{1}{n}\left\< f_{0}^{\epsilon y}(\bx^{0}),f_{0}^{\epsilon y}(\bx^{0})\right\>  & =\lim_{n\rightarrow\infty}\frac{1}{n}\left\< f_{0}(\bx^{0}),f_{0}(\bx^{0})\right\> +\epsilon^{2},\label{eq:x0-conv-perturbed}\\
\lim_{n\rightarrow\infty}\E\left[\frac{1}{n}\left\< f_{0}^{\epsilon y}(\bx^{0}),f_{t}^{\epsilon y}(\bZ^{t})\right\> \right] & =\lim_{n\rightarrow\infty}\E\left[\frac{1}{n}\left\< f_{0}(\bx^{0}),f_{t}(\bZ^{t})\right\> \right].\label{eq:mixed-conv-perturbed}
\end{align}
Using equations (\ref{eq:gaussian-conv-perturbed}), (\ref{eq:x0-conv-perturbed}),
(\ref{eq:mixed-conv-perturbed}), it is a simple induction that the
state evolution for the perturbed setting $\left\{ \bx^{0},f_{t}^{\epsilon y}\right\} $
is indeed non-random almost surely.

\subsubsection{Proof of Lemma \ref{lem:good-cond-perturbed}}

By definition, 
\begin{equation}
\bq_{\perp}^{\epsilon y,t}=\bP_{Q_{t-1}}^{\perp}f_{t}(\bh^{\epsilon y,t})+\epsilon \bP_{Q_{t-1}}^{\perp}\by^{t}.
\end{equation}
If we denote $\mathcal{F}_{t}$ as the $\sigma$-algebra generated
by $\bh^{\epsilon y,1},\dots,\bh^{\epsilon y,t},\by^{1},\dots,\by^{t-1}$, it
follows that 
\begin{equation}
\bq_{\perp}^{\epsilon y,t}\vert_{\mathcal{F}_{t}}\sim\normal\left(\bP_{Q_{t-1}}^{\perp}f_{t}(\bh^{\epsilon y,t}),\epsilon^{2}\bP_{Q_{t-1}}^{\perp}\right).
\end{equation}
When $n>t$, this conditional distribution is almost surely non-zero.
Thus when $n\geq t$, the matrix $\bQ_{t-1}$ has full column rank.

To lower bound the minimum singular value of $\bQ_{t-1}$, a more careful
treatment is required. Using \cite[Lemma 8]{Bayati2011}, it is sufficient
to check that there exists a constant $c_{\epsilon}$ such that almost
surely, for $n$ sufficiently large, 
\begin{equation}
\frac{1}{n}\left\Vert \bq_{\perp}^{\epsilon y,t}\right\Vert ^{2}\geq c_{\epsilon}.
\end{equation}
We have 
\begin{align}
\Pr\left(\frac{1}{n}\left\Vert \bq_{\perp}^{\epsilon y,t}\right\Vert ^{2}\leq c_{\epsilon}\middle|\mathcal{F}_{t}\right) & =\Pr\left(\left\Vert \normal\left(\bP_{Q_{t-1}}^{\perp}f_{t}(\bh^{\epsilon y,t}),\epsilon^{2}\bP_{Q_{t-1}}^{\perp}\right)\right\Vert ^{2}\leq c_{\epsilon}n\bigg\vert\mathcal{F}_{t}\right)\\
 & \leq\Pr\left(\left\Vert \normal\left(0,\epsilon^{2}\bP_{Q_{t-1}}^{\perp}\right)\right\Vert ^{2}\leq c_{\epsilon}n\bigg\vert\mathcal{F}_{t}\right)\\
 & =\Pr\left(\chi_{n-t}\leq\frac{c_{\epsilon}n}{\epsilon^{2}}\right)\\
 & =\Pr\left(\frac{\chi_{n-t}}{n-t}\leq\frac{c_{\epsilon}}{\epsilon^{2}}\frac{n}{n-t}\right).
\end{align}
We can choose $c_{\epsilon}$ such that $c_{\epsilon}/\epsilon^{2}=1/4$,
and consider only the case $n\geq2t$, so that $n/(n-t)\leq2$. We
then get: 
\begin{equation}
\Pr\left(\frac{1}{n}\left\Vert \bq_{\perp}^{\epsilon y,t}\right\Vert ^{2}\leq c_{\epsilon}\bigg\vert\mathcal{F}_{t}\right)\leq\Pr\left(\frac{\chi_{n-t}}{n-t}\leq\frac{1}{2}\right).
\end{equation}
Using concentration of the chi-squared variable, it is easy to show
that $\Pr\left(\frac{\chi_{n-t}}{n-t}\leq\frac{1}{2}\right)$ is summable
over $n$. Taking expectation of the last inequality, we get 
\begin{equation}
\sum_{n}\Pr\left(\frac{1}{n}\left\Vert \bq_{\perp}^{\epsilon y,t}\right\Vert ^{2}\leq c_{\epsilon}\right)<+\infty.
\end{equation}
Then Borel-Cantelli's lemma concludes the proof.

\subsubsection{Proof of Lemma \ref{lem:pseudo-lip_and_gaussians}}

Define $k$ as the order of the sequence $\left\{ \phi_{n}\right\} $
of uniformly pseudo-Lipschitz functions, and $L$ as its pseudo-Lipschitz
constant. Under any coupling of $\bZ$ and $\tilde{\bZ}$, 
\begin{align}
\left\vert \E\left[\phi_{n}\left(\bZ\right)\right]-\E\left[\phi_{n}\left(\tilde{\bZ}\right)\right]\right\vert  & \leq L\E\left[\left(1+\left(\frac{\left\Vert \bZ\right\Vert _{2}}{\sqrt{n}}\right)^{k-1}+\left(\frac{\left\Vert \tilde{\bZ}\right\Vert _{2}}{\sqrt{n}}\right)^{k-1}\right)\frac{\left\Vert \bZ-\tilde{\bZ}\right\Vert _{2}}{\sqrt{n}}\right]\\
 & \leq L\E\left[\left(1+\left(\frac{\left\Vert \bZ\right\Vert _{2}}{\sqrt{n}}\right)^{k-1}+\left(\frac{\left\Vert \tilde{\bZ}\right\Vert _{2}}{\sqrt{n}}\right)^{k-1}\right)^{2}\right]^{1/2}\frac{1}{\sqrt{n}}\E\left[\left\Vert \bZ-\tilde{\bZ}\right\Vert _{2}^{2}\right]^{1/2}.
\end{align}
Taking the infimum over all possible coupling of $\bZ\sim\normal\left(0,\Kappa\otimes \id_{n}\right)$
and $\tilde{\bZ}\sim\normal\left(0,\tilde{\Kappa}\otimes \id_{n}\right)$,
one gets a bound involving the Wasserstein distance $W_{2}$: 
\begin{align}
 & \left\vert \E\left[\phi_{n}\left(\bZ\right)\right]-\E\left[\phi_{n}\left(\tilde{\bZ}\right)\right]\right\vert \\
 & \leq\sqrt{3}L\left(1+\frac{\E\left[\left\Vert \bZ\right\Vert _{2}^{2(k-1)}\right]}{n^{k-1}}+\frac{\E\left[\left\Vert \tilde{\bZ}\right\Vert _{2}^{2(k-1)}\right]}{n^{k-1}}\right)^{1/2}\frac{1}{\sqrt{n}}W_{2}\left(\normal\left(0,\Kappa\otimes \id_{n}\right),\normal\left(0,\tilde{\Kappa}\otimes \id_{n}\right)\right).\label{eq:pseudo-lip_and_gaussians_aux}
\end{align}
We then use the two following identities for the Wasserstein distance:
\begin{align}
W_{2}(\mu\otimes\nu,\mu'\otimes\nu')^{2} & =W_{2}(\mu,\mu')^{2}+W_{2}(\nu,\nu')^{2},\\
W_{2}\left(\normal\left(0,\Kappa\right),\normal\left(0,\tilde{\Kappa}\right)\right)^{2} & ={\rm Tr}\left(\Kappa+\tilde{\Kappa}-2\left(\Kappa^{1/2}\tilde{\Kappa}\Kappa^{1/2}\right)^{1/2}\right).
\end{align}
For a proof of the second identity, see \cite[Proposition 7]{givens1984}.
It follows that 
\begin{equation}
W_{2}\left(\normal\left(0,\Kappa\otimes \id_{n}\right),\normal\left(0,\tilde{\Kappa}\otimes \id_{n}\right)\right)^{2}=n{\rm Tr}\left(\Kappa+\tilde{\Kappa}-2\left(\Kappa^{1/2}\tilde{\Kappa}\Kappa^{1/2}\right)^{1/2}\right).
\end{equation}
Moreover, $Z\ed\left(\Kappa^{1/2}\otimes \id_{n}\right)X$
where $X\sim\normal\left(0,\id_{nt}\right)$. Thus: 
\begin{equation}
\E\left[\left\Vert Z\right\Vert _{2}^{2(k-1)}\right]\leq\left\Vert \Kappa^{1/2}\otimes \id_{n}\right\Vert _{2}^{2(k-1)}\E\left[\left\Vert X\right\Vert _{2}^{2(k-1)}\right]=\left\Vert \Kappa\right\Vert _{{\rm op}}^{k-1}\E\left[\left(\chi_{nt}^{2}\right)^{k-1}\right].
\end{equation}
Using expressions for moments of chi-square variables, we get: 
\begin{equation}
\E\left[\left(\chi_{n\times t}^{2}\right)^{k-1}\right]=nt(nt+2)\dots(nt+2(k-2))\leq n^{k-1}t^{k-1}(1+2(k-2))^{k-1}=C(k,t)n^{k-1}
\end{equation}
for a constant $C(k,t)$ that depends only on $k$ and $t$. Back
to inequality (\ref{eq:pseudo-lip_and_gaussians_aux}), 
\begin{align}
 & \left\vert \E\left[\phi_{n}\left(\bZ\right)\right]-\E\left[\phi_{n}\left(\tilde{\bZ}\right)\right]\right\vert \\
 & \leq\sqrt{3}L\left(1+C(k,t)\left(\left\Vert \Kappa\right\Vert _{{\rm op}}^{k-1}+\left\Vert \tilde{\Kappa}\right\Vert _{{\rm op}}^{k-1}\right)\right)^{1/2}\left({\rm Tr}\left(\Kappa+\tilde{\Kappa}-2\left(\Kappa^{1/2}\tilde{\Kappa}\Kappa^{1/2}\right)^{1/2}\right)\right)^{1/2}.
\end{align}
Notice that this bound is independent of $n$, and converges to 0
as $\tilde{\Kappa}\to\Kappa$.

\subsubsection{Proof of Lemma \ref{lem:SE_eps_to_zero}}

This lemma will be shown by induction.

\paragraph{Initialization.}

According to (\ref{eq:x0-conv-perturbed}), 
\begin{equation}
\Kappa_{1,1}^{\epsilon}=\Kappa_{1,1}+\epsilon^{2}\xrightarrow[\epsilon\to0]{}\Kappa_{1,1}.
\end{equation}

\paragraph{Induction.}

Let $t$ be a non-negative integer. Assume that by the induction hypothesis,
for any $r,s\leq t$, $\Kappa_{r,s}^{\epsilon}\xrightarrow{}\Kappa_{r,s}$.
Then: 
\begin{equation}
\Kappa_{s+1,t+1}^{\epsilon}\stackrel{{\rm a.s.}}{=}\lim_{n\to\infty}\mathbb{E}\left[\frac{1}{n}\left\< f_{s}^{\epsilon y}\left(\bZ^{\epsilon,s}\right),f_{t}^{\epsilon y}\left(\bZ^{\epsilon,t}\right)\right\> \right],
\end{equation}
where $(\bZ^{\epsilon,s},\bZ^{\epsilon,t})$ is a Gaussian vector, whose
covariance is determined by $\Kappa_{s,s}^{\epsilon}$, $\Kappa_{t,t}^{\epsilon}$
and $\Kappa_{s,t}^{\epsilon}$. Using (\ref{eq:gaussian-conv-perturbed}),
we have 
\begin{equation}
\Kappa_{s+1,t+1}^{\epsilon}\stackrel{{\rm a.s.}}{=}\lim_{n\to\infty}\mathbb{E}\left[\frac{1}{n}\left\< f_{s}\left(\bZ^{\epsilon,s}\right),f_{t}\left(\bZ^{\epsilon,t}\right)\right\> \right]+\epsilon^{2}\mathbbm{1}_{s=t}.
\end{equation}
The sequence of functions $(z^{s},z^{t})\mapsto\frac{1}{n}\left\< f_{s}\left(z^{s}\right),f_{t}\left(z^{t}\right)\right\> $
is uniformly pseudo-Lipschitz by Lemma \ref{lem:pseudoLipschitz_innerProd},
thus Lemma \ref{lem:pseudo-lip_and_gaussians} and the induction hypothesis
jointly ensure that 
\begin{equation}
\lim_{\epsilon\to0}\lim_{n\to\infty}\mathbb{E}\left[\frac{1}{n}\left\< f_{s}\left(\bZ^{\epsilon,s}\right),f_{t}\left(\bZ^{\epsilon,t}\right)\right\> \right]=\lim_{n\to\infty}\mathbb{E}\left[\frac{1}{n}\left\< f_{s}\left(\bZ^{s}\right),f_{t}\left(\bZ^{t}\right)\right\> \right]=\Kappa_{s+1,t+1},
\end{equation}
where $(\bZ^{s},\bZ^{t})\in(\R^{n})^{2}$, $(\bZ^{s},\bZ^{t})\sim\normal\left(0,\Kappa\otimes \id_{n}\right)$.
Thus, we indeed get 
\begin{equation}
\Kappa_{s+1,t+1}^{\epsilon}\xrightarrow[\epsilon\to0]{}\Kappa_{s+1,t+1}.
\end{equation}

To finish the induction reasoning, one can check similarly that $\Kappa_{1,t+1}^{\epsilon}\xrightarrow[\epsilon\to0]{}\Kappa_{1,t+1}$.

\subsubsection{Proof of Lemma \ref{lem:AMP_eps_to_0}}
\label{sec:AMP_eps_to_0}

First, it is easy to check by induction that there exist constants
$\tilde{C}_{t}$, $\tilde{C}'_{t}$ and $\tilde{C}''_{t}$ independent
of $n$ such that for all $\epsilon\leq1$, w.h.p. 
\begin{align}
\frac{1}{\sqrt{n}}\left\Vert \bm^{\epsilon y,t}\right\Vert _{2} & \leq\tilde{C}'_{t},\label{eq:m^t_size}\\
\frac{1}{\sqrt{n}}\left\Vert \bx^{\epsilon y,t+1}\right\Vert _{2} & \leq\tilde{C}_{t}.
\end{align}
Indeed, one only needs to use that the functions involved are uniformly
Lipschitz and Theorem \ref{thm:A_norm}. Note that these inequalities
hold for the original AMP iterates by taking $\epsilon=0$.

We now prove our lemma by induction.

\paragraph{Initialization.}

We have 
\begin{equation}
\frac{1}{\sqrt{n}}\left\Vert \bm^{\epsilon y,0}-\bm^{0}\right\Vert _{2}=\epsilon\frac{\left\Vert \by^{0}\right\Vert _{2}}{\sqrt{n}}\leq2\epsilon\quad{\rm w.h.p.},
\end{equation}
by the law of large numbers. Thus we choose $h_{0}'(\epsilon)=2\epsilon$.
Furthermore, 
\begin{equation}
\frac{1}{\sqrt{n}}\left\Vert \bx^{\epsilon y,1}-\bx^{1}\right\Vert _{2}\leq\left\Vert \bA\right\Vert _{{\rm op}}\frac{1}{\sqrt{n}}\left\Vert \bm^{\epsilon y,0}-\bm^{0}\right\Vert _{2}\leq6\epsilon\quad{\rm w.h.p.},
\end{equation}
by Theorem \ref{thm:A_norm}. Thus we choose $h_{0}(\epsilon)=6\epsilon$.

\paragraph{Induction.}

We assume here that $\Kappa_{1,1},\dots,\Kappa_{t,t}>0$. By induction
hypothesis, we have already defined $h_{0}(\epsilon),h_{0}'(\epsilon),\dots,h_{t-1}(\epsilon),h_{t-1}'(\epsilon)$.
We now choose $h_{t}(\epsilon)$ and $h_{t}'(\epsilon)$. We have
\begin{align}
\frac{1}{\sqrt{n}}\left\Vert \bm^{\epsilon y,t}-\bm^{t}\right\Vert _{2} & \leq\frac{1}{\sqrt{n}}\left\Vert f_{t}\left(\bx^{\epsilon y,t}\right)-f_{t}\left(\bx^{t}\right)+\epsilon \by^{t}\right\Vert _{2}\\
 & \leq L_{t}\frac{1}{\sqrt{n}}\left\Vert \bx^{\epsilon y,t}-\bx^{t}\right\Vert _{2}+\epsilon\frac{\left\Vert \by^{t}\right\Vert _{2}}{\sqrt{n}}\leq L_{t}h_{t-1}(\epsilon)+2\epsilon\quad{\rm w.h.p.}
\end{align}
using that $f_{t}$ is uniformly Lipschitz with Lipschitz constant
$L_{t}$. Thus we choose $h_{t}'(\epsilon)=L_{t}h_{t-1}(\epsilon)+2\epsilon$,
which converges to zero as $\epsilon\to0$. Furthermore, 
\begin{align}
\frac{1}{\sqrt{n}}\left\Vert \bx^{\epsilon y,t+1}-\bx^{t+1}\right\Vert _{2} & \leq\left\Vert \bA\right\Vert _{{\rm op}}\frac{1}{\sqrt{n}}\left\Vert \bm^{\epsilon y,t}-\bm^{t}\right\Vert _{2}+\frac{1}{\sqrt{n}}\left\Vert b_{t}^{\epsilon y}\bm^{\epsilon y,t-1}-b_{t}\bm^{t-1}\right\Vert _{2}\\
 & \leq3h_{t}'(\epsilon)+\frac{1}{\sqrt{n}}\left\Vert b_{t}^{\epsilon y}\bm^{\epsilon y,t-1}-b_{t}\bm^{t-1}\right\Vert _{2}\quad{\rm w.h.p.}
\end{align}
by Theorem \ref{thm:A_norm}. We have from (\ref{eq:m^t_size}): 
\begin{align}
\frac{1}{\sqrt{n}}\left\Vert \mathsf{b}_{t}^{\epsilon y}\bm^{\epsilon y,t-1}-\mathsf{b}_{t}\bm^{t-1}\right\Vert _{2} & \leq\left\vert \mathsf{b}_{t}^{\epsilon y}\right\vert \frac{1}{\sqrt{n}}\left\Vert \bm^{\epsilon y,t-1}-\bm^{t-1}\right\Vert _{2}+\left\vert \mathsf{b}_{t}^{\epsilon y}-\mathsf{b}_{t}\right\vert \frac{1}{\sqrt{n}}\left\Vert \bm^{t-1}\right\Vert _{2}\\
 & \leq Lh'_{t-1}(\epsilon)+\left\vert \mathsf{b}_{t}^{\epsilon y}-\mathsf{b}_{t}\right\vert \tilde{C}'_{t-1}.\label{eq:onsager_diff_bound}
\end{align}
Since $\Kappa_{t,t}^{\epsilon}\rightarrow\Kappa_{t,t}$ when $\epsilon\to0$
from Lemma \ref{lem:SE_eps_to_zero} and $\Kappa_{t,t}>0$, we have
$\Kappa_{t,t}^{\epsilon}>0$ for sufficiently small $\epsilon$. Then
using Lemma \ref{lem:Stein}, with $\bZ\sim\normal\left(0,\id_{n}\right)$,
we get 
\begin{align}
\left\vert \mathsf{b}_{t}^{\epsilon y}-\mathsf{b}_{t}\right\vert  & =\left\vert \E\left[\frac{1}{n}{\rm div}f_{t}\left(\sqrt{\Kappa_{t,t}^{\epsilon}}\bZ\right)\right]-\E\left[\frac{1}{n}{\rm div}f_{t}\left(\sqrt{\Kappa_{t,t}}\bZ\right)\right]\right\vert \\
 & =\left\vert \frac{1}{\sqrt{\Kappa_{t,t}^{\epsilon}}}\E\left[\frac{1}{n}\left\< \bZ,f_{t}\left(\sqrt{\Kappa_{t,t}^{\epsilon}}\bZ\right)\right\> \right]-\frac{1}{\sqrt{\Kappa_{t,t}}}\E\left[\frac{1}{n}\left\< \bZ,f_{t}\left(\sqrt{\Kappa_{t,t}}\bZ\right)\right\> \right]\right\vert \\
 & \leq\frac{1}{\sqrt{\Kappa_{t,t}^{\epsilon}}}\left\vert \E\left[\frac{1}{n}\left\< \bZ,f_{t}\left(\sqrt{\Kappa_{t,t}^{\epsilon}}\bZ\right)-f_{t}\left(\sqrt{\Kappa_{t,t}}\bZ\right)\right\> \right]\right\vert +\left\vert \frac{1}{\sqrt{\Kappa_{t,t}^{\epsilon}}}-\frac{1}{\sqrt{\Kappa_{t,t}}}\right\vert \E\left[\frac{1}{n}\left\vert \left\< \bZ,f_{t}\left(\sqrt{\Kappa_{t,t}}\bZ\right)\right\> \right\vert \right]\\
 & \leq\frac{1}{\sqrt{\Kappa_{t,t}^{\epsilon}}}\E\left[\frac{1}{n}\left\Vert \bZ\right\Vert _{2}^{2}\right]^{1/2}\E\left[\frac{1}{n}\left\Vert f_{t}\left(\sqrt{\Kappa_{t,t}^{\epsilon}}\bZ\right)-f_{t}\left(\sqrt{\Kappa_{t,t}}\bZ\right)\right\Vert _{2}^{2}\right]^{1/2}\\
 & \quad+\left\vert \frac{1}{\sqrt{\Kappa_{t,t}^{\epsilon}}}-\frac{1}{\sqrt{\Kappa_{t,t}}}\right\vert \E\left[\left\vert \frac{1}{n}\left\< \bZ,f_{t}(0)\right\> \right\vert +\left\vert \frac{1}{n}\left\< \bZ,f_{t}\left(\sqrt{\Kappa_{t,t}}\bZ\right)-f_{t}(0)\right\> \right\vert \right]\\
 & \leq\frac{1}{\sqrt{\Kappa_{t,t}^{\epsilon}}}\E\left[\frac{1}{n}\left\Vert \bZ\right\Vert _{2}^{2}\right]L_{t}\left(\sqrt{\Kappa_{t,t}^{\epsilon}}-\sqrt{\Kappa_{t,t}}\right)+\left\vert \frac{1}{\sqrt{\Kappa_{t,t}^{\epsilon}}}-\frac{1}{\sqrt{\Kappa_{t,t}}}\right\vert \left(\frac{\left\Vert f_{t}(0)\right\Vert _{2}^{2}}{n}+\E\left[\frac{1}{n}\left\Vert \bZ\right\Vert _{2}^{2}\right]L_{t}\sqrt{\Kappa_{t,t}}\right)\\
 & \leq\frac{1}{\sqrt{\Kappa_{t,t}^{\epsilon}}}L_{t}\left(\sqrt{\Kappa_{t,t}^{\epsilon}}-\sqrt{\Kappa_{t,t}}\right)+\left\vert \frac{1}{\sqrt{\Kappa_{t,t}^{\epsilon}}}-\frac{1}{\sqrt{\Kappa_{t,t}}}\right\vert \left(\frac{\left\Vert f_{t}(0)\right\Vert _{2}^{2}}{\sqrt{n}}+L_{t}\sqrt{\Kappa_{t,t}}\right)\label{eq:b_t_diff_bound}
\end{align}
Since the quantity $\left\Vert f_{t}(0)\right\Vert _{2}^{2}/n$ is
upper bounded by a constant independent of $n$, we can plug (\ref{eq:b_t_diff_bound})
into (\ref{eq:onsager_diff_bound}), and choose correspondingly a
function $h_{t}(\epsilon)$ such that $h_{t}(\epsilon)\rightarrow0$
when $\epsilon\to0$, enabled by the fact $\Kappa_{t,t}^{\epsilon}\rightarrow\Kappa_{t,t}$.

\section{Proof of Theorem \ref{thm:AMP_convergence_asym} and Corollary \ref{coro:Empirical} (Asymmetric AMP)}
\label{sec:Asymmetric-AMP}

\begin{proof}[Proof of Theorem \ref{thm:AMP_convergence_asym}]
We reduce this case to the asymmetric case, as in \cite{Javanmard2013}.
Consider 
\[
\bA_{s}=\sqrt{\frac{\delta}{\delta+1}}\left[\begin{array}{cc}
\bB & \bA\\
\bA^{\sT} & \bC
\end{array}\right]\quad{\rm and}\quad \bx^{0}=\left[\begin{array}{c}
0\\
\bu^{0}
\end{array}\right].
\]
where $\bB\sim \GOE\left(m\right)$ and $\sqrt{\delta}\bC\sim{\GOE}\left(n\right)$
are independent of each other and of $\bA$. It is easy to see that
$\bA_{s}\sim{\GOE}\left(N\right)$, where $N=m+n$. We further let
$f_{t}:\,\mathbb{R}^{N}\to\mathbb{R}^{N}$ be such that 
\[
f_{2t+1}\left(\bx\right)=\sqrt{\frac{\delta+1}{\delta}}\left[\begin{array}{c}
g_{t}\left(x_{1},\dots,x_{m}\right)\\
0
\end{array}\right],\quad f_{2t}\left(\bx\right)=\sqrt{\frac{\delta+1}{\delta}}\left[\begin{array}{c}
0\\
e_{t}\left(x_{m+1},\dots,x_{N}\right)
\end{array}\right]
\]
for any $\bx\in\mathbb{R}^{N}$. We can define the symmetric AMP recursion
$\left\{ \bx^{t},\bm^{t}\vert f_{t},\bx^{0}\right\} $: 
\begin{align}
\bx^{t+1} & =\bA_{s}\bm^{t}-\mathsf{b}_{t}\bm^{t-1},\\
\bm^{t} & =f_{t}\left(\bx^{t}\right),\\
\mathsf{b}_{t} & =\E\left[\frac{1}{N}{\rm div}f_{t}\left(\bZ^{t}\right)\right]
\end{align}
along with its state evolution $\left\{ \Kappa_{s,t}\middle\vert f_{t},\bx^{0}\right\} $
(see section \ref{subsec:Symmetric-SE} for a more complete definition
of these quantities). Note that assumptions \ref{assumption:gaussian_matrix}-\ref{assumption:gaussian-conv}
are satisfied because of \ref{assumption:gaussian-asym}-\ref{assumption:gaussian-conv-asym}.

Note that here $\Kappa_{2t,2t+1}=0$. It is also easy to identify
that 
\begin{align}
\bv^{t} & =\left(x_{1}^{2t+1},\dots,x_{m}^{2t+1}\right),\\
\bu^{t} & =\left(x_{m+1}^{2t},\dots,x_{N}^{2t}\right),\\
\Sigma_{s,t} & =\Kappa_{2s+1,2t+1},\\
\Tau_{s,t} & =\Kappa_{2s,2t}.
\end{align}
Applying Theorem \ref{thm:AMP_convergence} to the AMP recursion $\left\{ \bx^{t},\bm^{t}\vert f_{t},\bx^{0}\right\} $
shows our theorem. 
\end{proof}

\begin{proof}[Proof of Corollary \ref{coro:Empirical}]
The proof is by induction over $t$. Let $\cH_t$ be the claim that $\|\bu^s-\hbu^s\|_2/\sqrt{n} \approxP 0$ for all $s\le t$
and $\|\bv^s-\hbv^s\|_2/\sqrt{n} \approxP 0$ for all $s\le t-1$. The initial conditions imply immediately $\cH_0$.

We now prove that $\cH_t$ implies $\cH_{t+1}$.
Taking the difference of Eq.~(\ref{eq:FIRSTasymmetricAMP_2}) and Eq.~(\ref{eq:asymmetricAMP_2_emp}) and using triangular inequality,
we get
\begin{align}
\|\bv^{t}-\hbv^{t}\|_2&\le\|\bA\|_{\op} \|e_t(\bu^t)-e_t(\hbu^t)\|_2+|\sb_t-\hsb_t|\, \|g_{t-1}(\bv^{t-1})\|_2+
|\hsb_t|\, \|g_{t-1}(\hbv^{t-1})-g_{t-1}(\bv^{t-1})\|_2\\
& \le C_0(\delta) L \|\bu^t-\hbu^t\|_2 +|\sb_t-\hsb_t|\, \|g_{t-1}(\bv^{t-1})\|_2+ L |\hsb_t|\, \|\hbv^{t-1}-\bv^{t-1}\|_2\, ,
\end{align}
where  $L$ is the maximum Lipschitz constant of $e_t$ and $g_{t-1}$ and the second inequality holds with high probability
by the Bai-Yin law \cite{Bai1988}. Next notice that, with high probability,   $\|g_{t-1}(\bv^{t-1})\|_2/\sqrt{n}\le C$ for some constant $C$ by 
Theorem \ref{thm:AMP_convergence_asym} (together with Assumption \ref{assumption:gaussian-conv-asym}) and that 
$|\hsb_t|\le |\sb_t|+|\hsb_t-\sb_t|\le L+1$ with high probability by Assumption (\ref{eq:ConsistentOnsager}) and the Lipschitz continuity 
of $e_t$. Hence, for a suitable constant $C_1$, the following holds with high probability
\begin{align}
\frac{1}{\sqrt{n}}\|\bv^{t}-\hbv^{t}\|_2&\le C_1 \left\{\frac{1}{\sqrt{n}} \|\bu^t-\hbu^t\|_2 +|\sb_t-\hsb_t|+ \frac{1}{\sqrt{n}} \|\hbv^{t-1}-\bv^{t-1}\|_2\right\}\, .
\end{align}
We therefore have $\|\bv^{t}-\hbv^{t}\|_2/\sqrt{n}\approxP 0$ by Eq.~(\ref{eq:ConsistentOnsager}) and the induction hypothesis.

Taking the difference of Eq.~(\ref{eq:FIRSTasymmetricAMP_2}) and Eq.~(\ref{eq:asymmetricAMP_2_emp}), we get
\begin{align}
&\|\bu^{t+1}-\hbu^{t+1}\|_2\le \|\bA\|_{\op} \|g_t(\bv^t)-g(\hbv^t)\|_2+|\sd_t-\hsd_t|\|e_t(\bu^t)\|_2 +|\hsd_t|\|e_t(\bu^t)-e_t(\hbu^t)\|_2\\
&\le C_0(\delta) L \|\bv^t-\hbv^t\|_2+|\sd_t-\hsd_t|\|e_t(\bu^t)\|_2 +L |\hsd_t|\|\bu^t-\hbu^t\|_2\, ,
\end{align}
and the proof is completed by the same argument as above.
\end{proof}

\section{Application to general compressed sensing}
\label{sec:Application-CS}

In this section we discuss how the general theory of Section \ref{sec:MainResults} applies to the problem 
of reconstructing an unknown signal $\btheta_0\in\reals^n$ from noisy linear measurements given by
\begin{align}
\by=\bA\btheta_0+\bw\, .\label{eq:GeneralCS}
\end{align}
Here, $\bA\in\R^{m\times n}$ is the (known) sensing matrix, $\by\in\R^{m}$ is the measurement vector and
$\bw$ is a noise vector, independent of $\bA$. We know $\by$ and
$\bA$, and are required  to reconstruct $\btheta_0$. As before, it is understood that we are really given a sequence of
problems indexed by the dimensions $n$, with $m(n)/n\to \delta$.

If $m<n$, the problem becomes underdetermined. Reconstruction of $\btheta_0$ can be possible if we have
some prior information. 
The prior knowledge can be encoded in  a suitably chosen sequence of denoising function $\eta_{t}:\R^{n}\rightarrow\R^{n}$, $t\in\N$ \cite{donoho2013accurate}.
Given such a denoising function, we consider the following AMP algorithm:
\begin{align}
\hbtheta^{t+1} & =\eta_{t}\left(\hbtheta^{t}+\bA^{\sT}\br^{t}\right)\, ,\label{eq:GeneralCS_AMP-1}\\
\br^{t} & =\by-\bA\hbtheta^{t}+\hsb_{t}\br^{t-1}\, .\label{eq:GeneralCS_AMP-2}
\end{align}
where the initialization is given by $\hbtheta^{0}=0$ and $\eta_{-1}\left(\, \cdot\, \right)=0$. We 
assume the Onsager coefficient $\hsb_t$ to be a function of $\hbtheta^0,\dots,\hbtheta^t$, and $\br^0,\dots,\br^{t-1}$,
but we will discuss concrete choices below.

\subsection{General theory}

We make the following assumptions:
\begin{enumerate}[font={\bfseries},label={(C\arabic*)}]
\item \label{assumption:gaussian-CS} The sensing matrix $\bA$ is Gaussian
with i.i.d. entries, $(A_{ij})_{i\le m, j\le n}\sim \normal\left(0,1/m\right)$. 
\item For each $t$, the sequence (in $n$) of denoisers $\eta_{t}:\R^{n}\rightarrow\R^{n}$
is uniformly Lipschitz. 
\item \label{assumption:initial-CS}$\|\btheta_0\|_{2}/\sqrt{n}$ converges to a constant as $n\to\infty$.
\item The limit $\sigma_w= \lim_{n\to\infty}\|\bw\|_2/\sqrt{m} \in [0,\infty)$ exists.
\item \label{assumption:mixed-conv-CS} For any $t\in\N$ and any $\sigma\geq0$,
the following limit exists and is finite: 
\begin{align}
\lim_{n\to\infty}\frac{1}{n}\E\big[\big\< \btheta_0,\eta_{t}\left(\btheta_0+\bZ\right)\big\> \big]
\end{align}
where $Z\sim\normal\left(0,\sigma^{2}\id_{n}\right)$. 
\item \label{assumption:gaussian-conv-CS} For any $s,t\in\N$ and any $2\times2$
covariance matrix $\bSigma$, the following limit exists and is finite:
\begin{align}
\lim_{n\to\infty}\frac{1}{n}\E\left[\left\< \eta_{s}\left(\btheta_0+\bZ\right),\eta_{t}\left(\btheta_0+\bZ^{'}\right)\right\> \right]\, ,
\end{align}
where $\left(\bZ,\bZ^{'}\right)\sim\normal\left(0,\bSigma\otimes \id_{n}\right)$. 
\end{enumerate}
The technical assumptions \ref{assumption:mixed-conv-CS} and \ref{assumption:gaussian-conv-CS}
ensure the existence of the limits in the following \emph{state evolution} recursion:
\begin{align}
\tau_{0}^{2} & =\sigma_{w}^{2}+\lim_{n\to\infty}\frac{1}{\delta n}\left\Vert \btheta_0\right\Vert _{2}^{2},\\
\tau_{t+1}^{2} & =\sigma_{w}^{2}+\lim_{n\to\infty}\frac{1}{\delta n}\mathbb{E}\left[\left\Vert \eta_{t}\left(\btheta_0+\tau_{t}Z\right)-\btheta_0\right\Vert _{2}^{2}\right],
\label{eq:SE-CS}
\end{align}
where $\bZ\sim\normal\left(0,\id_{n}\right)$.

State evolution predicts the asymptotic behavior of the estimates $\hbtheta^{1},\hbtheta^{2},\dots$
in terms of an iterative denoising process.
\begin{thm}\label{thm:AMP-CS} 
Under assumptions \ref{assumption:gaussian-CS}-\ref{assumption:gaussian-conv-CS},
consider the recursion (\ref{eq:GeneralCS_AMP-1})-(\ref{eq:GeneralCS_AMP-2}).
Assume that $\hsb_t(\hbtheta^0,\br^0,\dots,\br^{t-1},\hbtheta^t)$ satisfies
\begin{align}
\hsb_t \approxP \sb_{t} \equiv \frac{1}{m}\E\left[\div\,\eta_{t-1}\left(\btheta_0+\tau_{t-1}\bZ\right)\right],\quad \bZ\sim\normal\left(0,\id_{n}\right) \label{eq:normal_asym_AMP-3}\, .
\end{align}
Further assume that the state evolution sequence satisfies $\tau_s >\sigma_w$ for all $s\le t$. 
Then, for any sequences $\phi_{n}:\left(\R^{m}\right)^{2}\to\R$, $n\geq1$, and $\psi_{n}:\left(\R^{n}\right)^{2}\rightarrow\R$, $n\geq1$, of
uniformly pseudo-Lipschitz functions of order $k$ 
\begin{align}
\phi_{n}(\br^{t},\bw) & \approxP\E\big[\phi_{n}\big(\bw+\sqrt{\tau_{t}^{2}-\sigma_{w}^{2}}\,\bZ,\bw\big)\big]\, ,\\
\psi_{n}\big(\hbtheta^{t}+\bA^{\sT}\br^{t},\btheta_0\big) & \approxP\E\big[\psi_{n}\big(\btheta_0+\tau_{t}\bZ',\btheta_0\big)\big]\, ,
\end{align}
where $\bZ\sim\normal\left(0,\id_{m}\right)$ and $\bZ'\sim\normal\left(0,\id_{n}\right)$. 
\end{thm}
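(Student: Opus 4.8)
The plan is to reduce Theorem~\ref{thm:AMP-CS} to the asymmetric state evolution Theorem~\ref{thm:AMP_convergence_asym} (equivalently, to Corollary~\ref{coro:Empirical}, since the Onsager coefficient here is only assumed to be a consistent estimator). First I would put the compressed-sensing recursion (\ref{eq:GeneralCS_AMP-1})--(\ref{eq:GeneralCS_AMP-2}) into the canonical form (\ref{eq:asymmetricAMP_1})--(\ref{eq:asymmetricAMP_2}) via a change of variables. The natural choice is to track $\bu^{t} = \hbtheta^{t}+\bA^{\sT}\br^{t}-\btheta_0$ on the $\R^n$ side and $\bv^{t}$ proportional to the residual $\br^{t}$ on the $\R^m$ side, with the denoiser absorbed into $e_t$: concretely one sets $e_{t}(\bu) = \eta_{t}(\btheta_0+\bu)-\btheta_0$ (a uniformly Lipschitz, non-separable function of $\bu$, by assumption (C2)), and $g_{t}(\bv) = \bv$ (or $\bv$ plus the fixed shift coming from the noise $\bw$, depending on how one books the measurement vector). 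One must be a little careful about the scaling: the paper's assumption \ref{assumption:gaussian-asym} normalizes $A_{ij}\sim\normal(0,1/m)$, which matches \ref{assumption:gaussian-CS}, so no rescaling of $\bA$ is needed, but the $\delta$ factors in the state evolution (\ref{eq:SE-CS}) will appear because $\Sigma$ lives on $\R^m$ (normalized by $1/m$) while $\Tau$ lives on $\R^n$ (normalized by $1/n$), and $m/n\to\delta$.

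Next I would verify that assumptions \ref{assumption:gaussian-CS}--\ref{assumption:gaussian-conv-CS} imply the hypotheses \ref{assumption:gaussian-asym}--\ref{assumption:gaussian-conv-asym} of Theorem~\ref{thm:AMP_convergence_asym} for the transformed functions $e_t,g_t$ and the initialization $\bu^0$ (which, since $\hbtheta^0=0$, is just $-\btheta_0$, of bounded norm by \ref{assumption:initial-CS}, plus the noise contribution). The identifications \ref{assumption:u^0_conv}--\ref{assumption:gaussian-conv-asym} follow directly from \ref{assumption:mixed-conv-CS}--\ref{assumption:gaussian-conv-CS} once one notes that $g_t$ being (essentially) the identity makes the $\Tau$-recursion trivial; the mixed-moment assumption \ref{assumption:mixed-conv-asym} translates into \ref{assumption:mixed-conv-CS} with $\bZ\sim\normal(0,s\id_n)$. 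Then one shows by induction that the one-dimensional state-evolution parameters of the asymmetric iteration coincide with $\tau_t^2$ and (up to the $\delta$-scaling) $\sigma_t^2$ defined by (\ref{eq:SE-CS}); this is a direct unwinding of the two recursions, using that the denoiser error variance $\E[\|\eta_{t}(\btheta_0+\tau_t\bZ)-\btheta_0\|_2^2]/n$ is exactly the $\Sigma_{t,t}$ produced by $e_t$. The positivity hypothesis $\Sigma_{s,s},\Tau_{s,s}>0$ needed in Theorem~\ref{thm:AMP_convergence_asym} is guaranteed by the assumed $\tau_s>\sigma_w$ (which forces the denoiser-error variance to be strictly positive) together with $\tau_0^2\ge \sigma_w^2 + \lim\|\btheta_0\|_2^2/(\delta n)>0$.

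Finally, with Theorem~\ref{thm:AMP_convergence_asym} (via Corollary~\ref{coro:Empirical}, using the consistency hypothesis (\ref{eq:normal_asym_AMP-3}) on $\hsb_t$ in place of the exact Onsager term) in hand, I would read off the two displayed conclusions. The limiting law of $\hbtheta^t+\bA^{\sT}\br^t$ is $\btheta_0+\tau_t\bZ'$ because that quantity is $\btheta_0+\bu^t$ and $\bu^t$ converges, in the pseudo-Lipschitz sense, to $\bZ_\tau^t\sim\normal(0,\Tau_{t,t}\id_n)$ with $\Tau_{t,t}=\tau_t^2$; wrapping the given $\psi_n(\,\cdot\,,\btheta_0)$ as a pseudo-Lipschitz test function of $(\bu^0,\bv^0,\dots,\bu^t)$ (it depends only on the last coordinate and on the fixed vector $\btheta_0$) gives the stated $\approxP$. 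Similarly $\br^t$ is, after the change of variables, $\bv^{t-1}$ shifted by the noise, so $\br^t \approxP \bw + \sqrt{\tau_t^2-\sigma_w^2}\,\bZ$; the $\sqrt{\tau_t^2-\sigma_w^2}$ rather than $\tau_t$ appears precisely because the residual carries the noise $\bw$ additively on top of the Gaussian AMP component, and one checks that the AMP-component variance is $\tau_t^2-\sigma_w^2$ from the state-evolution bookkeeping. The main obstacle, and the only step requiring genuine care, is the change of variables together with the $\delta$-bookkeeping: one must make sure the Onsager terms $\hsb_t$ of the two iterations really do correspond (the $1/m$ vs.\ $1/n$ normalizations and the factor $\delta$ must be tracked consistently), that the noise $\bw$ is incorporated either as a fixed translation absorbed into $g_t$ or into the initialization in a way compatible with the pseudo-Lipschitz hypotheses, and that the two test-function families $\phi_n,\psi_n$ are genuinely recovered as uniformly pseudo-Lipschitz functions of the full iterate history — after that, everything is a mechanical application of the already-proven asymmetric theorem.
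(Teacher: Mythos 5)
Your proposal is essentially the paper's proof: reduce to the asymmetric state evolution theorem via the change of variables $\bu^{t+1} = \btheta_0 - (\bA^{\sT}\br^t + \hbtheta^t)$, $\bv^t = \bw - \br^t$, $e_t(\bu) = \eta_{t-1}(\btheta_0 - \bu) - \btheta_0$, $g_t(\bv) = \bv - \bw$, initialization $\bu^0 = -\btheta_0$, and then invoke Theorem~\ref{thm:AMP_convergence_asym} together with Corollary~\ref{coro:Empirical}. The one point you flagged as uncertain, where the noise vector $\bw$ goes, is resolved by absorbing it into $g_t(\bv) = \bv - \bw$ (legitimate because assumption (C4) treats $\bw$ as a fixed sequence), and the $\delta$/normalization bookkeeping then gives $\Tau_{t,t} = \tau_{t-1}^2$ and $\Sigma_{t,t} = \tau_t^2 - \sigma_w^2$, which is exactly the $\sqrt{\tau_t^2-\sigma_w^2}$ residual variance you anticipated.
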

\begin{proof}
This is a special case of the asymmetric AMP of Eqs.~(\ref{eq:FIRSTasymmetricAMP_1}), (\ref{eq:FIRSTasymmetricAMP_2}), with 
\begin{align}
\bu^{t+1} & =\btheta_0-\big(\bA^{\sT}\br^{t}+\hbtheta^{t}\big)\, ,\label{eq:thm_AMP_CS_proof_u}\\
\bv^{t} & =\bw-\br^{t}\, ,\\
e_{t}(\bu) & =\eta_{t-1}\big(\btheta_0-\bu\big)-\btheta_0\, ,\\
g_{t}(\bv) & =\bv-\bw\, ,
\end{align}
and the initialization $\bu^{0}=-\btheta_0$. Assumptions \ref{assumption:gaussian-asym}-\ref{assumption:gaussian-conv-asym}
are satisfied thanks to assumptions \ref{assumption:gaussian-CS}-\ref{assumption:gaussian-conv-CS}. The claim follows from Theorem \ref{thm:AMP_convergence_asym}
and Corollary \ref{coro:Empirical}.
\end{proof}

\begin{remark}\label{rmk:MSE}
A special case of common interest is $\psi_{n}(\bx,\by)=\| \eta_t(\bx)-\by\|_{2}^{2}/n$, for which Theorem \ref{thm:AMP-CS}
yields
\begin{align}
\frac{1}{n}\left\Vert \hbtheta^{t+1}-\btheta_0\right\Vert _{2}^{2}&\approxP \frac{1}{n} \mathbb{E}\big[\big\Vert \eta_{t}\big(\btheta_0+\tau_{t}\bZ'\big)-\btheta_0\big\Vert _{2}^{2}\big]\\
& = \delta(\tau_{t+1}^2-\sigma_w^2)\, .
\end{align}
\end{remark}

\begin{remark}
Two choices of the coefficient $\hsb_t$ that satisfy the assumption (\ref{eq:normal_asym_AMP-3}) are:
\begin{itemize}
\item The empirical mean
\begin{align}
\hsb_t = \frac{1}{m}\div\,\eta_{t-1}\big(\hbtheta^{t-1}+\bA^{\sT}\br^{t-1}\big)\, . \label{eq:OnsagerCS}
\end{align}
Using Theorem \ref{thm:AMP-CS}, this satisfies the assumptions by induction, provided $\bx\mapsto \frac{1}{m}\div\eta_{t}(\bx)$ is uniformly Lipschitz for each $t$.
\item If $\bx\mapsto \frac{1}{m}\div\, \eta_{t}(\bx)$  is not uniformly Lipschitz, a smoothed version of Eq.~(\ref{eq:OnsagerCS}) achieves the same goal,
namely
\begin{align}
\hsb_t = \frac{1}{m}\E\left[\div\,\eta_{t-1}\big(\hbtheta^{t-1}+\bA^{\sT}\br^{t-1}+\eps_n\bZ\big)\right] ,
\end{align}
where the expectation is with respect to $\bZ\sim\normal(0,\id_n)$, and $\eps_n$ is a deterministic sequence that converges to $0$ sufficiently slowly.
Adapting the arguments of Section \ref{sec:AMP_eps_to_0}, it is possible to show that this choice satisfies the assumption  (\ref{eq:normal_asym_AMP-3}).
\end{itemize} 
We also note that, even if $\bx\mapsto \frac{1}{m}\div\eta_{t}(\bx)$ is not uniformly Lipschitz, the choice (\ref{eq:OnsagerCS}) can still satisfy the assumption 
(\ref{eq:normal_asym_AMP-3}). For instance, if $\eta_t(\,\cdot\,)$ if the soft thresholding denoiser (a case studied in \cite{donoho2009message,Bayati2011}),
then $\bx\mapsto \frac{1}{m}\div\,\eta_{t}(\bx)$ is discontinuous but nevertheless a standard weak convergence argument implies 
Eq.~(\ref{eq:normal_asym_AMP-3}).
\end{remark}

\subsection{Denoising by convex projection}

An important feature of the theory developed in the previous section is that the denoiser $\eta_t$ 
can be fairly general, and not induced by an underlying optimization problem. Nevertheless, it is interesting to
specialize the theory developed so far to cases with special additional structure. 

One possible approach towards reconstruction from noisy measurements, cf. Eq.~(\ref{eq:GeneralCS}), assumes that
$\btheta_0$ belongs to a closed convex body $\cK\subseteq \reals^n$. The reconstruction method of choice 
solves the constrained least squares problem
\begin{align}
\mbox{minimize } & \;\;\|\by-\bA\btheta\|_2^2\, ,\label{eq:ConvexLS_1}\\
\mbox{subject to } & \;\; \btheta\in\cK\, . \label{eq:ConvexLS_2}
\end{align}
Denoting by $\sP_{\cK}$ the projection onto the set $\cK$ (which is a $1$- Lipschitz denoiser), the corresponding 
AMP algorithm reads
\begin{align}
\hbtheta^{t+1} & =\sP_{\cK}\left(\hbtheta^{t}+\bA^{\sT}\br^{t}\right)\, ,\label{eq:CVX_AMP-1}\\
\br^{t} & =\by-\bA\hbtheta^{t}+\hsb_{t}\br^{t-1}\, ,\label{eq:CVX_AMP-2}
\end{align}
where $\hbtheta^0=0$ and $\hsb_t$ is an estimator of $\sb_t = (1/m)\E\left[\div\sP_{\cK}(\btheta_0+\tau_t\bZ)\right]$. In many cases of interest, such estimator is simply given by
$\hsb_t = (1/m) \div\,\sP_{\cK}(\hbtheta^{t}+\bA^{\sT}\br^{t})$. It is possible to show that fixed points of this iteration are stationary points of the 
least squares problem  (\ref{eq:ConvexLS_1}), (\ref{eq:ConvexLS_2}). 

The constraint $\btheta\in\cK$ is effective if $\cK$ accurately captures the structure of the signal $\btheta_0$.
We denote by $\cC_{\cK}(\btheta_0)$ the tangent cone of $\cK$ at $\btheta_0$, i.e. the smallest convex cone 
containing $\cK-\theta_0$.  This can also be defined as
\begin{align}
\cC_{\cK}(\btheta_0) = \Big\{\bv\in\reals^n: \;\; \lim_{\eps\to 0+}\frac{1}{\eps}d(\btheta_0+\eps\,\bv,\cK) =0\Big\}\, ,
\end{align}
with $d(\bx,S) \equiv \inf\{\|\bx-\by\|_2:\by\in S\}$ the Euclidean point-set distance.
A highly structured signal $\btheta_0$ corresponds to a `small' cone  $\cC_{\cK}(\btheta_0)$.
This can be quantified via its statistical dimension \cite{chandrasekaran2012convex,amelunxen2014living}
\begin{align}
\Delta(\cC) = \E\big\{\big\|\sP_{\cC}(\bZ)\big\|_2^2\big\}\, ,
\end{align}
where expectation is with respect to $\bZ\sim\normal(0,\id_n)$.
It turns out that the statistical dimension also controls the convergence of AMP. As for our general theory, we will consider a sequence of problems indexed by 
the dimension $n$.
\begin{thm}\label{thm:ConvexAMP}
Consider the AMP iteration (\ref{eq:CVX_AMP-1}), (\ref{eq:CVX_AMP-2}), for  a sequence of problems $(\btheta_0(n), \bA(n), \cK(n),\bw(n))$ whereby $\bA= \bA(n)\in\reals^{m\times n}$ is a matrix with i.i.d. Gaussian entries 
$(A_{ij})_{i\le m, j\le n}\sim_{iid}\normal(0,1/m)$, $\cK=\cK(n)\subseteq \reals^n$ is a closed convex set with $\limsup_{n\to\infty}\max_{\bx \in\cK(n)}\|\bx\|_2/\sqrt{n}<\infty$,
$\btheta_0\in \cK(n)$, and $\lim_{n\to\infty}\|\bw(n)\|_2/\sqrt{m}=\sigma_w$. Assume $m/n\to \delta\in (0,\infty)$ and 
\begin{align}
\limsup_{n\to\infty} \frac{1}{m} \Delta\big(\cC_{\cK(n)}(\btheta_0(n))\big)  \le \rho\in [0,1)\, .
\end{align}
Then for any $t\ge 0$, letting $\sR_0 \equiv \limsup_{n\to\infty} \|\btheta_0(n)\|_2/\sqrt{n}$, we have
\begin{align}
\limsup_{n\to \infty}\frac{1}{n}\E\big\{\|\hbtheta^t-\btheta_0\|_2^2\big\}\le \delta\sR_0^2\, \rho^{t+1} + \delta\sigma_w^2 \frac{\rho-\rho^{t+1}}{1-\rho} \label{eq:SEConvex}\, .
\end{align}
\end{thm}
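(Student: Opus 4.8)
The plan is to reduce the claim to the scalar state evolution of Theorem \ref{thm:AMP-CS} and Remark \ref{rmk:MSE}, specialized to the $1$-Lipschitz denoiser $\eta_t=\sP_\cK$, and then to control the resulting one-dimensional recursion by convex geometry. By Theorem \ref{thm:AMP-CS} together with Remark \ref{rmk:MSE} (using a consistent Onsager coefficient $\hsb_t$, e.g. the divergence estimate $\frac1m\div\,\sP_\cK(\hbtheta^t+\bA^\sT\br^t)$ or its smoothed version as in Section \ref{sec:AMP_eps_to_0}), one has $\frac1n\|\hbtheta^t-\btheta_0\|_2^2\approxP\delta(\tau_t^2-\sigma_w^2)$, where the scalars $\tau_t^2$ obey the recursion (\ref{eq:SE-CS}) with $\eta_t=\sP_\cK$. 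Since Theorem \ref{thm:AMP-CS} presupposes the convergence-type hypotheses \ref{assumption:initial-CS}, \ref{assumption:mixed-conv-CS}, \ref{assumption:gaussian-conv-CS} while here only $\limsup$ bounds are assumed, I would argue along subsequences: given any subsequence of $n$, the maps sending a $2\times2$ covariance to $\frac1n\E[\langle\sP_\cK(\btheta_0+\bZ),\sP_\cK(\btheta_0+\bZ')\rangle]$ are, since $\sP_\cK$ is $1$-Lipschitz and $\|\btheta_0\|_2/\sqrt n$, $\max_{\bx\in\cK}\|\bx\|_2/\sqrt n$ are bounded, uniformly Lipschitz and bounded on compacts, so Arzel\`a--Ascoli lets one extract a further subsequence along which all required limits converge (together with $\frac1n\|\btheta_0\|_2^2$, $\frac1m\Delta(\cC_{\cK}(\btheta_0))$, $m/n$); on such a subsequence Theorem \ref{thm:AMP-CS} applies. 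Finally, $\hbtheta^t\in\cK$ (and $\|\btheta_0\|_2/\sqrt n$ bounded) forces $\frac1n\|\hbtheta^t-\btheta_0\|_2^2$ to be deterministically bounded for large $n$, so convergence in probability upgrades to convergence in expectation, and it suffices to bound the limit $\delta(\tau_t^2-\sigma_w^2)$ on each subsequence; the $\limsup$ over all $n$ then follows by the usual subsequence argument.

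The geometric input is the inequality $\|\sP_\cK(\btheta_0+\boldsymbol g)-\btheta_0\|_2\le\|\sP_{\cC}(\boldsymbol g)\|_2$ for $\cC=\cC_{\cK}(\btheta_0)$ the tangent cone and any $\boldsymbol g\in\reals^n$. To prove it, put $\bv=\sP_\cK(\btheta_0+\boldsymbol g)-\btheta_0$; then $\bv\in\cC$, and evaluating the projection variational inequality $\langle\btheta_0+\boldsymbol g-\sP_\cK(\btheta_0+\boldsymbol g),\by-\sP_\cK(\btheta_0+\boldsymbol g)\rangle\le0$ at the feasible point $\by=\btheta_0$ gives $\langle\boldsymbol g,\bv\rangle\ge\|\bv\|_2^2$; on the other hand, Moreau's decomposition $\boldsymbol g=\sP_\cC(\boldsymbol g)+\sP_{\cC^\circ}(\boldsymbol g)$, with $\langle\sP_{\cC^\circ}(\boldsymbol g),\bv\rangle\le0$ since $\sP_{\cC^\circ}(\boldsymbol g)$ lies in the polar cone and $\bv\in\cC$, gives $\langle\boldsymbol g,\bv\rangle\le\langle\sP_\cC(\boldsymbol g),\bv\rangle\le\|\sP_\cC(\boldsymbol g)\|_2\|\bv\|_2$, and combining the two yields $\|\bv\|_2\le\|\sP_\cC(\boldsymbol g)\|_2$. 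Taking $\boldsymbol g=\tau_t\bZ$ with $\bZ\sim\normal(0,\id_n)$, using $\sP_\cC(\tau_t\bZ)=\tau_t\sP_\cC(\bZ)$ (as $\cC$ is a cone) and the definition $\Delta(\cC)=\E\|\sP_\cC(\bZ)\|_2^2$, one obtains $\frac1n\E\|\sP_\cK(\btheta_0+\tau_t\bZ)-\btheta_0\|_2^2\le\tau_t^2\,\frac1n\Delta(\cC_{\cK}(\btheta_0))$, and $\frac1n\Delta(\cC_{\cK}(\btheta_0))=\frac mn\cdot\frac1m\Delta(\cC_{\cK}(\btheta_0))\to\delta\rho'$ along the chosen subsequence, with $\rho'\le\rho$.

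Substituting this bound into (\ref{eq:SE-CS}) produces the scalar recursion $\tau_{t+1}^2\le\sigma_w^2+\rho\,\tau_t^2$ with $\tau_0^2=\sigma_w^2+\frac1\delta\lim_n\frac1n\|\btheta_0\|_2^2\le\sigma_w^2+\tfrac{\sR_0^2}{\delta}$; unrolling this affine recursion and multiplying through by $\delta$ yields the bound (\ref{eq:SEConvex}) on $\delta(\tau_t^2-\sigma_w^2)$, which is the asserted bound on $\limsup_n\frac1n\E\|\hbtheta^t-\btheta_0\|_2^2$. I expect the substantive step to be the geometric lemma together with the identification of $\E\|\sP_\cC(\bZ)\|_2^2$ with the statistical dimension, since this is precisely where the hypothesis on $\Delta$ enters; the remaining difficulty is technical, namely ensuring Theorem \ref{thm:AMP-CS} is applicable despite $\sP_\cK$ being non-separable and $\frac1m\div\,\sP_\cK$ not being Lipschitz in general, which is handled by the subsequence extraction (for the convergence hypotheses) and by a consistent, possibly smoothed, Onsager term.
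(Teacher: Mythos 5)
Your proposal follows the same route as the paper: extract subsequences via Arzel\`a--Ascoli so that the convergence hypotheses of Theorem~\ref{thm:AMP-CS} hold (the paper does this through a contradiction framing, but it is the same subsequence argument), apply the scalar state evolution, bound $\E\|\sP_{\cK}(\btheta_0+\tau\bZ)-\btheta_0\|_2^2$ by $\tau^2\,\Delta(\cC_\cK(\btheta_0))$ using tangent-cone geometry, observe that boundedness of $\cK$ lets you pass from $\approxP$ to convergence of expectations, and unroll the resulting affine recursion. Where you genuinely add something is the geometric inequality $\|\sP_\cK(\btheta_0+\boldsymbol g)-\btheta_0\|_2\le\|\sP_{\cC}(\boldsymbol g)\|_2$: the paper asserts the displayed inequality chain citing only the inclusion $\cK-\btheta_0\subseteq\cC_\cK(\btheta_0)$, which is not by itself a proof, whereas your argument via the projection variational inequality $\langle\boldsymbol g,\bv\rangle\ge\|\bv\|_2^2$ combined with Moreau's decomposition $\boldsymbol g=\sP_\cC(\boldsymbol g)+\sP_{\cC^\circ}(\boldsymbol g)$ and $\langle\sP_{\cC^\circ}(\boldsymbol g),\bv\rangle\le 0$ is correct and fills this gap cleanly. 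One caveat: with the correct initialization $\tau_0^2=\sigma_w^2+\sR_0^2/\delta$ that you (properly) use, unrolling $\tau_{t+1}^2\le\sigma_w^2+\rho\,\tau_t^2$ and multiplying by $\delta$ gives $\delta(\tau_t^2-\sigma_w^2)\le \sR_0^2\rho^t+\delta\sigma_w^2\,\frac{\rho-\rho^{t+1}}{1-\rho}$, which does not literally match the right-hand side of (\ref{eq:SEConvex}) (that display has $\delta\sR_0^2\rho^{t+1}$ on the first term, and indeed fails at $t=0$ when $\delta\rho<1$); this is an inconsistency already present in the paper's own proof, which sets $\tau_0^2=\sR_0^2$ contradicting (\ref{eq:SE-CS}) and then proves a bound indexed at $t+1$ against a target $B_t$ indexed at $t$, so your last sentence should assert the bound you actually derived rather than (\ref{eq:SEConvex}) verbatim.
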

The proof of this statement is deferred to Appendix \ref{app:ProofConvex}. 

This theorem establishes exponentially fast convergence (in the high-dimensional limit) in all the region $m\ge (1+\eta)   \Delta_n$, $\Delta_n=\Delta \big(\cC_{\cK(n)}(\btheta_0(n))\big)$, i.e. 
whenever exact reconstruction is possible in absence of noise \cite{amelunxen2014living}.  Further, the convergence rate is precisely given by the ratio of the number of necessary measurements
to the number of measurements $\Delta_n/m$. 
For instance, it implies that, in order to achieve accuracy $\|\hbtheta^t-\btheta_0\|_2/\|\btheta_0\|_2\le \eps$ in the noiseless case $\sigma_w=0$, it is sufficient to run the AMP iteration
(\ref{eq:CVX_AMP-1}), (\ref{eq:CVX_AMP-2}) for approximately $\log(1/\eps)/\log(m/\Delta_n)$ iterations. 

The first result of this type (for separable soft-thresholding denoising) was obtained in \cite{donoho2009message,donoho2011noise}.
The only comparable result is obtained in recent work by Oymak, Recht, and Soltanolkotabi \cite{oymak2015sharp}, which establishes exponential convergence of 
of projected gradient descent, in a non-asymptotic sense, although at a slower rate\footnote{The same paper also prove convergence at a faster rate, but this requires $m>2\Delta_n$,
i.e. a number of measurements that is twice as large as the optimal one.}. In particular, in the noiseless case, $\eps$ accuracy requires $(n/m)\log(1/\eps)$. 
It would be interesting to derive a non-asymptotic version of Theorem \ref{thm:ConvexAMP}, which might be possible using the approach of \cite{rush2016finite}.

\section*{Acknowledgements}

This work was partially supported by grants NSF CCF-1319979, NSF DMS-1613091, NSF CCF-1714305.

\appendix

\section{Technical aspects of the numerical simulations}

\subsection{Matrix compressed sensing}
\label{app:Matrix}

Here we state the formula for computing the divergence of the singular value soft thresholding operator. Recall that for a matrix $\bY\in\reals^{n_1\times n_2}$, with singular value decomposition
\begin{align}
\bY=\sum_{i=1}^{n_{1}\wedge n_{2}}\sigma_{i}\bu_{i}\bv_{i}^{\sT},
\end{align}
the SVT operator with threshold $\lambda$ yields
\begin{equation}
\SVT(\bY;\lambda)=\sum_{i=1}^{n_{1}\wedge n_{2}}(\sigma_{i}-\lambda)_{+}\bu_{i}\bv_{i}^{\sT}.
\end{equation}
As proved in \cite{Candes}, the divergence for this operator can be computed using the formula
\begin{equation}
\div\,\SVT(\bY;\lambda) = \sum_{i=1}^{n_{1}\wedge n_{2}} \left[ \bfone_{\{\sigma_i > \lambda\}} + |m-n|\left(1-\frac{\lambda}{\sigma_i}\right)_+ \right] + 2 \sum_{i \neq j, i,j=1}^{n_1 \wedge n_2} \frac{\sigma_i(\sigma_i-\lambda)_+}{\sigma_i^2-\sigma_j^2}.
\end{equation}
This expression should be understood in a weak sense as it is not defined on the negligible set where $\bY$ has repeated singular values. 

\subsection{Compressed sensing with images}
\label{app:CSImages}

In our simulation, to compute the state evolution iterates 
\begin{align}
\tau_{0}^{2} & =\sigma_{w}^{2}+\lim_{n\to\infty}\frac{1}{\delta n}\left\Vert \btheta_0\right\Vert _{2}^{2},\\
\tau_{t+1}^{2} & =\sigma_{w}^{2}+\lim_{n\to\infty}\frac{1}{\delta n}\mathbb{E}\left[\left\Vert \eta_{t}\left(\btheta_0+\tau_{t}\bZ\right)-\btheta_0\right\Vert _{2}^{2}\right],
\end{align}
we approximated them by their non-asymptotic estimates: 
\begin{align}
\hat{\tau}_{0}^{2} & =\sigma_{w}^{2}+\frac{1}{\delta n}\left\Vert \btheta_0\right\Vert _{2}^{2},\\
\label{eq:tau_hat}
\hat{\tau}_{t+1}^{2} & =\sigma_{w}^{2}+\frac{1}{\delta n}\mathbb{E}\left[\left\Vert \eta_{t}\left(\btheta_0+\hat{\tau}_{t}\bZ\right)-\btheta_0\right\Vert _{2}^{2}\right].
\end{align}
Here $n = 170 \times 170$ is the size of our image. However, we could not compute the expectation in equation (\ref{eq:tau_hat}) exactly. Thus at each iteration we used a Monte Carlo method to approximate the expectation with the mean over 10 samples. Computing each sample amounts to adding gaussian noise of variance $\hat{\tau}_t^2$ over the Lena image, denoising with NLM, and computing the square error. The resulting state evolution is shown in figure \ref{fig:Lena_reconstruction_se}.

\section{Some useful tools}

We reminder the readers of three well-known results. The first concerns
with the operator norm of $\bA\in{\rm GOE}\left(n\right)$; see e.g.
\cite{Bai1988} for a more general statement. The second is a simple
consequence of Stein's lemma \cite{Stein1972}. The last one is the
Gaussian Poincar\'e inequality. 
\begin{thm}
\label{thm:A_norm} Consider a sequence of matrices $\bA\sim{\rm GOE}\left(n\right)$.
Then $\left\Vert \bA\right\Vert _{{\rm op}}\to2$ almost surely as $n\to\infty$. 
\end{thm}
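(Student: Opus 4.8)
The plan is to prove the two one-sided bounds $\limsup_{n\to\infty}\|\bA\|_{\op}\le 2$ and $\liminf_{n\to\infty}\|\bA\|_{\op}\ge 2$ almost surely via the moment method, exploiting that for the symmetric matrix $\bA$ one has $\|\bA\|_{\op}=\max_{i\le n}|\lambda_i(\bA)|$, and that with this $\GOE$ normalization (off-diagonal entries of variance $1/n$, diagonal of variance $2/n$) the relevant limiting spectrum is the semicircle law on $[-2,2]$.

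For the lower bound, fix $k\in\N_{>0}$ and use $\|\bA\|_{\op}^{2k}\ge \frac1n\Tr(\bA^{2k})=\frac1n\sum_{i=1}^n\lambda_i(\bA)^{2k}$. Expanding $\Tr(\bA^{2k})$ as a sum over closed walks of length $2k$ on $\{1,\dots,n\}$ and applying Wick's formula to the Gaussian entries, the leading contribution as $n\to\infty$ comes from walks whose step–pairing is a non-crossing pair partition, giving $\E\big[\tfrac1n\Tr(\bA^{2k})\big]\to C_k\equiv\frac{1}{k+1}\binom{2k}{k}$ (the $2k$-th moment of the semicircle law), all other pairings contributing $O(1/n)$. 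A routine variance estimate for $\frac1n\Tr(\bA^{2k})$ (or Gaussian concentration applied to the Lipschitz functional $\bG\mapsto\Tr((\bG+\bG^{\sT})^{2k})$) is summable in $n$, so Borel--Cantelli yields $\frac1n\Tr(\bA^{2k})\to C_k$ almost surely. Hence $\liminf_n\|\bA\|_{\op}\ge C_k^{1/2k}$ a.s.\ for every fixed $k$, and since $C_k^{1/2k}\to 2$, letting $k\to\infty$ gives $\liminf_n\|\bA\|_{\op}\ge 2$ a.s.

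For the upper bound I would let the walk length grow with $n$. Relying on the classical combinatorial estimate that $\E[\Tr(\bA^{2k})]\le n\,(2+o(1))^{2k}$ uniformly for $k=O(\log n)$ (i.e.\ that crossing and higher-genus pairings are negligible), take $k_n=\lceil(\log n)^2\rceil$. Then for any $\eps>0$, Markov's inequality gives $\Pr(\|\bA\|_{\op}>2+\eps)\le \Pr\big(\Tr(\bA^{2k_n})>(2+\eps)^{2k_n}\big)\le \E[\Tr(\bA^{2k_n})]/(2+\eps)^{2k_n}\le n\,r^{2k_n}$ with $r=(2+\eps/2)/(2+\eps)<1$, which is summable in $n$ because $r^{2k_n}$ decays faster than any inverse polynomial. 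Borel--Cantelli then gives $\limsup_n\|\bA\|_{\op}\le 2+\eps$ a.s.; intersecting over $\eps\in\{1/j\}_{j\ge1}$ yields $\limsup_n\|\bA\|_{\op}\le 2$ a.s., which together with the lower bound proves the claim. As an alternative route for the upper bound one may instead use Gaussian concentration: $\bG\mapsto\|\bG+\bG^{\sT}\|_{\op}$ is $2$-Lipschitz in the Frobenius norm of $\bG$, so $\Pr(|\,\|\bA\|_{\op}-\E\|\bA\|_{\op}\,|>t)\le 2e^{-nt^2/4}$, whence $\|\bA\|_{\op}-\E\|\bA\|_{\op}\to0$ a.s., and one only needs $\E\|\bA\|_{\op}\le 2+o(1)$, obtainable either from the moment bound above with a slowly growing exponent or from a Slepian-type comparison.

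The main obstacle is the uniform-in-$k$ moment bound needed for the $\limsup$ step: one must control the number of closed walks of length $2k$ whose step-pairing is not a non-crossing pair partition, and show their total contribution keeps $\E[\Tr(\bA^{2k})]$ below $n\cdot 4^k$ by at most a factor that stays manageable when $k$ grows polylogarithmically in $n$. The fixed-$k$ estimate for the $\liminf$ and the concentration step are standard; the careful bookkeeping of higher-order pairings is where the work lies, although it is entirely classical (see \cite{Bai1988}).
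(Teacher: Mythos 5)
The paper does not prove Theorem \ref{thm:A_norm}: it appears in Appendix~B under ``Some useful tools,'' where the authors simply state it as a classical fact and refer to the Bai--Yin theorem \cite{Bai1988}. Your sketch outlines the standard moment-method proof and correctly identifies where the real work is, but you ultimately defer precisely that step --- the uniform-in-$k$ moment bound $\E[\Tr(\bA^{2k})]\le n(2+o(1))^{2k}$ controlling higher-genus pairings for $k$ growing with $n$ --- to the same reference \cite{Bai1988}. So logically you and the paper rely on the same citation; you have just supplied scaffolding around it. The scaffolding itself is sound: the fixed-$k$ lower bound $\|\bA\|_{\op}^{2k}\ge\frac{1}{n}\Tr(\bA^{2k})$ with Catalan asymptotics $C_k^{1/(2k)}\to2$ and a countable intersection over $k$; the Markov plus Borel--Cantelli upper bound with growing $k_n$; and the alternative via the $2$-Lipschitz property of $\bG\mapsto\|\bG+\bG^{\sT}\|_{\op}$ plus Gaussian concentration are all correct in structure. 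One small inconsistency to fix: you quote the moment estimate as valid ``for $k=O(\log n)$'' and then take $k_n=\lceil(\log n)^2\rceil$, which lies outside that stated range. The F\"uredi--Koml\'os / Bai--Yin bounds actually hold for $k$ up to a small power of $n$, so $(\log n)^2$ is fine, but the claimed range should match its use. The concentration alternative is arguably cleaner, since it replaces the growing-$k$ combinatorics with a one-sided bound on $\E\|\bA\|_{\op}$ (e.g.\ via Gordon's comparison), but that too requires an input you have not established here.
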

\begin{lem}[Stein's lemma \cite{Stein1972}]
\label{lem:Stein}For any $2\times2$ covariance matrix $\Kappa$
and $\left(\bZ_{1},\bZ_{2}\right)\sim\normal\left(0,\Kappa\otimes \id_{n}\right)$,
and any $\varphi:\mathbb{R}^{n}\to\mathbb{R}^{n}$ such that $\frac{\partial\varphi_{i}}{\partial z_{i}}$
exists almost everywhere for $1\leq i\leq n$, if $\mathbb{E}\left[\left\< \bZ_{1},\varphi\left(\bZ_{2}\right)\right\> \right]$
and $\mathbb{E}\left[{\rm div}\varphi\left(\bZ_{2}\right)\right]$ exist,
then 
\begin{equation}
\mathbb{E}\left[\left\< \bZ_{1},\varphi\left(\bZ_{2}\right)\right\> \right]=\Kappa_{1,2}\mathbb{E}\left[{\rm div}\varphi\left(\bZ_{2}\right)\right]=\mathbb{E}\left[\frac{1}{n}\left\< \bZ_{1},\bZ_{2}\right\> \right]\mathbb{E}\left[{\rm div}\varphi\left(\bZ_{2}\right)\right].
\end{equation}
\end{lem}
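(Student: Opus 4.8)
The plan is to reduce the $n$-dimensional identity to the one-variable Gaussian integration-by-parts formula through two successive conditioning steps. First I would dispose of the degenerate case: if $\Kappa_{2,2}=0$ then $\bZ_2=\bzero$ almost surely, and since $\Kappa\succeq\bzero$ forces $\Kappa_{1,2}^2\le\Kappa_{1,1}\Kappa_{2,2}=0$, both sides of the asserted identity vanish (the left-hand side because $\E[\bZ_1]=\bzero$). So assume $\Kappa_{2,2}>0$. Since $(\bZ_1,\bZ_2)$ is jointly Gaussian with $\E[\bZ_1\bZ_2^{\sT}]=\Kappa_{1,2}\id_n$ and $\E[\bZ_2\bZ_2^{\sT}]=\Kappa_{2,2}\id_n$, one has $\E[\bZ_1\mid\bZ_2]=(\Kappa_{1,2}/\Kappa_{2,2})\,\bZ_2$, whence by the tower property
\[
\E\big[\langle\bZ_1,\varphi(\bZ_2)\rangle\big]=\frac{\Kappa_{1,2}}{\Kappa_{2,2}}\,\E\big[\langle\bZ_2,\varphi(\bZ_2)\rangle\big].
\]
It then remains to prove the \emph{self-Stein} identity $\E[\langle\bZ_2,\varphi(\bZ_2)\rangle]=\Kappa_{2,2}\,\E[\div\,\varphi(\bZ_2)]$.

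For that step I would argue coordinate by coordinate. Write $\langle\bZ_2,\varphi(\bZ_2)\rangle=\sum_{i=1}^n Z_{2,i}\,\varphi_i(\bZ_2)$, and fix $i$. Conditioning on $\bZ_{2,-i}=(Z_{2,j})_{j\neq i}$, the coordinate $Z_{2,i}$ is $\normal(0,\Kappa_{2,2})$ and independent of $\bZ_{2,-i}$, so applying the scalar identity $\E[Zg(Z)]=\sigma^2\E[g'(Z)]$ with $\sigma^2=\Kappa_{2,2}$ to $g(z)=\varphi_i(Z_{2,1},\dots,z,\dots,Z_{2,n})$ gives $\E[Z_{2,i}\varphi_i(\bZ_2)\mid\bZ_{2,-i}]=\Kappa_{2,2}\,\E[\partial_i\varphi_i(\bZ_2)\mid\bZ_{2,-i}]$. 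Summing over $i$ and taking expectations proves the self-Stein identity, and combining with the display above yields $\E[\langle\bZ_1,\varphi(\bZ_2)\rangle]=\Kappa_{1,2}\,\E[\div\,\varphi(\bZ_2)]$. The last equality in the statement is then immediate: $\E[\langle\bZ_1,\bZ_2\rangle]=\sum_{i=1}^n\E[Z_{1,i}Z_{2,i}]=n\,\Kappa_{1,2}$, so $\E[\tfrac{1}{n}\langle\bZ_1,\bZ_2\rangle]=\Kappa_{1,2}$.

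The one point that requires care — and which I regard as the main, albeit standard, obstacle — is the validity of the scalar integration by parts under the weak hypotheses stated: $\varphi$ is only assumed to have $\partial_i\varphi_i$ defined almost everywhere, and one is only given that $\E[\langle\bZ_1,\varphi(\bZ_2)\rangle]$ and $\E[\div\,\varphi(\bZ_2)]$ exist. In every use of the lemma in this paper $\varphi$ is Lipschitz, hence absolutely continuous along each coordinate line, so $g$ above is absolutely continuous and $\int z\,g(z)\phi_\sigma(z)\,dz=\sigma^2\int g'(z)\phi_\sigma(z)\,dz$ holds with the boundary terms suppressed by the Gaussian density decay; the assumed integrability then legitimizes the applications of Fubini/Tonelli when integrating out $\bZ_{2,-i}$ and then $\bZ_2$. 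For a merely almost-everywhere-differentiable $\varphi$ one would instead first prove the identity for a mollified $\varphi$ and pass to the limit using the integrability hypotheses, but this generality is not needed in the sequel.
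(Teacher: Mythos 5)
The paper does not prove this lemma; it simply cites Stein (1972) and states it as a ``simple consequence'' of the scalar identity, so there is no paper proof to compare against. Your argument is the standard and correct one: dispose of the degenerate case $\Kappa_{2,2}=0$ (where $\Kappa_{1,2}=0$ as well, so both sides vanish), then use the Gaussian conditional mean $\E[\bZ_1\mid\bZ_2]=(\Kappa_{1,2}/\Kappa_{2,2})\bZ_2$ together with the tower property to reduce to the self-Stein identity $\E[\langle\bZ_2,\varphi(\bZ_2)\rangle]=\Kappa_{2,2}\,\E[\div\,\varphi(\bZ_2)]$, which you obtain by conditioning on $\bZ_{2,-i}$ coordinate by coordinate and applying one-dimensional Gaussian integration by parts. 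Your closing remark is the right caveat: a.e.\ existence of $\partial_i\varphi_i$ is, on its own, strictly weaker than what the one-dimensional identity needs (absolute continuity along coordinate lines, or an approximation argument), but every application in the paper uses a Lipschitz $\varphi$, for which the integration by parts is legitimate and the Fubini interchanges are justified by the assumed integrability.
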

\begin{thm}[Gaussian Poincar\textipa{é} inequality \cite{Boucheron2013}]
\label{thm:Poincare}Let $\bz\sim\normal\left(0,\id_{n}\right)$ and
$\varphi:\mathbb{R}^{n}\to\mathbb{R}$ continuous, weakly differentiable.
Then for some universal constant $c$, 
\begin{equation}
{\rm Var}\left[\varphi\left(\bz\right)\right]\leq c\,\mathbb{E}\left[\left\Vert \nabla\varphi\left(\bz\right)\right\Vert _{2}^{2}\right].
\end{equation}
\end{thm}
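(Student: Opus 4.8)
The plan is to establish the inequality with the explicit constant $c=\pi^2/8$ by a Gaussian interpolation argument, after reducing to smooth integrands. If $\E\big[\big\Vert\nabla\varphi(\bz)\big\Vert_2^2\big]=\infty$ there is nothing to prove, so we may assume this quantity is finite. First I would prove the bound for $\varphi\in C^\infty(\reals^n)$ with $\nabla\varphi$ bounded, and then remove this restriction: mollifying $\varphi$ and multiplying by a smooth cutoff supported on a large ball, both ${\rm Var}[\varphi(\bz)]$ and $\E\big[\big\Vert\nabla\varphi(\bz)\big\Vert_2^2\big]$ are stable under the limit, since the standard Gaussian density has finite moments of every order and $\nabla(\varphi*\rho_\delta)\to\nabla\varphi$ in $L^2$ against the Gaussian measure (here $\nabla\varphi$ is the weak, equivalently almost-everywhere, gradient).

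For the smooth case, let $\bz'$ be an independent copy of $\bz\sim\normal(0,\id_n)$ and recall the elementary identity ${\rm Var}[\varphi(\bz)] = \tfrac12\,\E\big[\big(\varphi(\bz)-\varphi(\bz')\big)^2\big]$. For $\theta\in[0,\pi/2]$ put $\bz_\theta = \bz'\cos\theta + \bz\sin\theta$ and $\dot{\bz}_\theta = -\bz'\sin\theta + \bz\cos\theta$; then $\bz_0=\bz'$, $\bz_{\pi/2}=\bz$, $\tfrac{d}{d\theta}\bz_\theta = \dot{\bz}_\theta$, and a one-line covariance computation shows that for each fixed $\theta$ the pair $(\bz_\theta,\dot{\bz}_\theta)$ consists of two independent $\normal(0,\id_n)$ vectors. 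By the fundamental theorem of calculus and the chain rule,
\begin{equation}
\varphi(\bz)-\varphi(\bz') = \int_0^{\pi/2}\big\langle\nabla\varphi(\bz_\theta),\dot{\bz}_\theta\big\rangle\,d\theta,
\end{equation}
and Jensen's inequality for the probability measure $(2/\pi)\,d\theta$ on $[0,\pi/2]$ gives $\big(\varphi(\bz)-\varphi(\bz')\big)^2 \le \tfrac{\pi}{2}\int_0^{\pi/2}\big\langle\nabla\varphi(\bz_\theta),\dot{\bz}_\theta\big\rangle^2\,d\theta$. Taking expectations, using Fubini, then conditioning on $\bz_\theta$ and exploiting that $\dot{\bz}_\theta$ is an independent $\normal(0,\id_n)$ vector (so $\E\big[\langle v,\dot{\bz}_\theta\rangle^2\,\big|\,\bz_\theta\big] = \Vert v\Vert_2^2$ for any $\bz_\theta$-measurable $v$), and finally that $\bz_\theta\sim\normal(0,\id_n)$, we obtain
\begin{equation}
\E\big[\big(\varphi(\bz)-\varphi(\bz')\big)^2\big] \le \frac{\pi}{2}\int_0^{\pi/2}\E\big[\big\Vert\nabla\varphi(\bz_\theta)\big\Vert_2^2\big]\,d\theta = \frac{\pi^2}{4}\,\E\big[\big\Vert\nabla\varphi(\bz)\big\Vert_2^2\big].
\end{equation}
Combining with the variance identity yields ${\rm Var}[\varphi(\bz)]\le \tfrac{\pi^2}{8}\,\E\big[\big\Vert\nabla\varphi(\bz)\big\Vert_2^2\big]$, which is the assertion with $c=\pi^2/8$.

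The only genuinely delicate point is the approximation step that passes from the smooth case to an arbitrary continuous, weakly differentiable $\varphi$; the interpolation core is entirely elementary. Alternatively, one could tensorize the one-dimensional Gaussian Poincar\'e inequality (which itself follows from a Hermite-polynomial expansion), but the interpolation argument above avoids that detour and is self-contained; in either case the value of the universal constant is immaterial for our purposes.
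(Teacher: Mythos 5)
Your proof is correct. Note that the paper does not prove this statement at all: Theorem \ref{thm:Poincare} is listed among ``well-known results'' and simply cited from \cite{Boucheron2013}, so any complete argument is necessarily a different route. Your interpolation proof (the classical Pisier--Maurey argument: $\bz_\theta=\bz'\cos\theta+\bz\sin\theta$, independence of $(\bz_\theta,\dot\bz_\theta)$, fundamental theorem of calculus plus Jensen, then conditioning) is sound and yields the explicit constant $c=\pi^2/8$, which is more than enough since the statement only asks for some universal $c$ (the sharp constant is $c=1$, obtainable by the tensorization/Hermite route you mention, or by the semigroup argument in \cite{Boucheron2013}; this is immaterial for the applications in Lemma \ref{lem:psedoLipschitz_concentration}). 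Two small points in the reduction step deserve one more line each if you write it out in full: first, finiteness of ${\rm Var}[\varphi(\bz)]$ is not assumed, but it follows from your bound, since $\E[(\varphi(\bz)-\varphi(\bz'))^2]<\infty$ implies $\varphi(\bz)\in L^2$ after fixing a typical $\bz'$; second, when you truncate with a smooth cutoff the product rule produces a term $\varphi\cdot\nabla\chi_R$, so you should either first mollify and then let the cutoff radius $R\to\infty$ using $\E[\varphi(\bz)^2\bfone_{\|\bz\|\ge R}]\to 0$, or argue via Fatou on $\E[(\varphi_k(\bz)-\varphi_k(\bz'))^2]$ with $\sup_k\E[\|\nabla\varphi_k(\bz)\|_2^2]\le\E[\|\nabla\varphi(\bz)\|_2^2]+o(1)$, as you indicate. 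With those details filled in, the argument is complete and self-contained.
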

We state some properties of the GOE matrices, and provide proofs for
completeness. 
\begin{lem}
\label{lem:GOE_properties} Consider a sequence of matrices $\bA\sim{\rm GOE}\left(n\right)$
and two sequences (in $n$) of (non-random vectors) $\bu,\bv\in\mathbb{R}^{n}$
such that $\left\Vert \bu\right\Vert _{2}=\left\Vert \bv\right\Vert _{2}=\sqrt{n}$.
\begin{enumerate}[label=(\alph*)]
\item \label{item:asymptotic_orthogonality} $\frac{1}{n}\left\< \bv,\bA\bu\right\> \stackrel{{\rm P}}{\longrightarrow}0$ 
\item Let $\bP\in\mathbb{R}^{n\times n}$ be a sequence of projection matrices
such that there exists a constant $t$ that satisfies for all $n$,
${\rm rank}\left(\bP\right)\leq t$. Then $\frac{1}{n}\left\Vert \bP\bA\bu\right\Vert _{2}^{2}\xrightarrow[n\to\infty]{{\rm P}}0$. 
\item \label{item:norm_concentration} $\frac{1}{n}\left\Vert \bA\bu\right\Vert _{2}^{2}\stackrel{{\rm P}}{\longrightarrow}1$. 
\item Let $k$ be any positive integer. There exists a sequence (in $n$)
of random vectors $\bz\sim\normal\left(0,\id_{n}\right)$ such that for
any sequence $\varphi_{n}:\R^{n}\to\R$, $n\geq1$ of uniformly pseudo-Lipschitz
function of order $k$, 
\begin{equation}
\varphi_{n}\left(\bA\bu\right)\approxP\varphi_{n}\left(\bz\right).
\end{equation}
\end{enumerate}
\end{lem}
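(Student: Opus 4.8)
The plan is to reduce all four parts to one elementary computation: for a deterministic $\bu$ with $\|\bu\|_2=\sqrt n$, the vector $\bA\bu$ is a centered Gaussian vector with covariance $\id_n+\tfrac1n\bu\bu^{\sT}$. Indeed, for $\bA\sim\GOE(n)$ as in \ref{assumption:gaussian_matrix} one has $\E[A_{ik}A_{jl}]=\tfrac1n(\delta_{ij}\delta_{kl}+\delta_{il}\delta_{jk})$, whence ${\rm Cov}(\bA\bu)_{ij}=\sum_{k,l}\E[A_{ik}A_{jl}]u_ku_l=\delta_{ij}+\tfrac1n u_iu_j$. Equivalently, writing $\bP_{\bu^{\perp}}:=\id_n-\bu\bu^{\sT}/n$ for the orthogonal projector onto $\bu^{\perp}$, I will use that $\bA\bu$ splits into two \emph{independent} pieces: $\bP_{\bu^{\perp}}\bA\bu$, a standard Gaussian on the hyperplane $\bu^{\perp}$, and $\langle\bu/\sqrt n,\bA\bu\rangle\,\bu/\sqrt n$, equal to a $\normal(0,2)$ variable times the fixed unit vector $\bu/\sqrt n$. (This is what one also gets from orthogonal invariance of the GOE after rotating $\bu$ onto a coordinate axis.)

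Given this, parts (a)--(c) are short. For (a), $\langle\bv,\bA\bu\rangle$ is centered Gaussian with variance $\bv^{\sT}(\id_n+\tfrac1n\bu\bu^{\sT})\bv=\|\bv\|_2^2+\tfrac1n\langle\bu,\bv\rangle^2\le 2n$ by Cauchy--Schwarz, so ${\rm Var}\big(\tfrac1n\langle\bv,\bA\bu\rangle\big)\le 2/n\to0$, giving convergence in probability. For (b), I write $\bP=\sum_{k\le r}\bw_k\bw_k^{\sT}$ with $r={\rm rank}(\bP)\le t$ and $\bw_1,\dots,\bw_r$ orthonormal; each $\langle\bw_k,\bA\bu\rangle$ is centered Gaussian with variance $\le 1+\tfrac1n\langle\bu,\bw_k\rangle^2\le 2$, hence $\E\big[\tfrac1n\|\bP\bA\bu\|_2^2\big]=\tfrac1n\sum_{k\le r}\E\langle\bw_k,\bA\bu\rangle^2\le 2t/n\to0$ and Markov's inequality concludes. (I take $\bu,\bv,\bP$ deterministic; when they are independent of $\bA$, as in the applications, the same statements follow by conditioning on them.) For (c), write $\|\bA\bu\|_2^2=\langle\bu/\sqrt n,\bA\bu\rangle^2+\|\bP_{\bu^{\perp}}\bA\bu\|_2^2$; the first term is $O_{{\rm P}}(1)$ and vanishes after dividing by $n$, while $\tfrac1n\|\bP_{\bu^{\perp}}\bA\bu\|_2^2\ed\tfrac1n\chi^2_{n-1}\to1$ by the law of large numbers.

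The part calling for a little care is (d): constructing $\bz$ on the same probability space as $\bA$ so that simultaneously $\bz\sim\normal(0,\id_n)$ exactly and $\bz$ is close to $\bA\bu$. The decomposition above does precisely this. Set $W:=\tfrac1{\sqrt2}\langle\bu/\sqrt n,\bA\bu\rangle$, which is $\normal(0,1)$ and, being a function of the $\bu$-component of $\bA\bu$, is independent of $\bP_{\bu^{\perp}}\bA\bu$. Define
\[
\bz := \bP_{\bu^{\perp}}\bA\bu + W\,\frac{\bu}{\sqrt n}.
\]
Then $\bz\sim\normal(0,\id_n)$, while $\bA\bu-\bz=(\sqrt2-1)\,W\,\bu/\sqrt n$, so $\|\bA\bu-\bz\|_2/\sqrt n=(\sqrt2-1)|W|/\sqrt n\xrightarrow[n\to\infty]{{\rm P}}0$. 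Finally, for a uniformly pseudo-Lipschitz sequence $\varphi_n\colon\R^n\to\R$ of order $k$ with constant $L$, the pseudo-Lipschitz bound gives
\[
|\varphi_n(\bA\bu)-\varphi_n(\bz)|\le L\Big(1+\big(\tfrac{\|\bA\bu\|_2}{\sqrt n}\big)^{k-1}+\big(\tfrac{\|\bz\|_2}{\sqrt n}\big)^{k-1}\Big)\frac{\|\bA\bu-\bz\|_2}{\sqrt n},
\]
where by part (c) and the law of large numbers the parenthesized factor is $O_{{\rm P}}(1)$; multiplying by the $o_{{\rm P}}(1)$ factor $\|\bA\bu-\bz\|_2/\sqrt n$ yields $\varphi_n(\bA\bu)\approxP\varphi_n(\bz)$. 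I expect this coupling in (d) to be the main (mild) obstacle; everything else is second-moment and law-of-large-numbers bookkeeping, plus carrying the normalization $\|\bu\|_2=\|\bv\|_2=\sqrt n$ throughout.
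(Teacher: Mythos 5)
Your proof is correct and is essentially the paper's approach: the key observation in both is that $\bA\bu$ is a centered Gaussian vector with covariance $\id_n+\tfrac1n\bu\bu^{\sT}$, and the coupling in (d) is exactly the paper's $\bz=\bSigma^{-1/2}(\bA\bu)$ written out explicitly via the rank-one structure (your $W$-construction is the same $\bz$, since $\bSigma^{1/2}=\id_n+(\sqrt2-1)\tfrac1n\bu\bu^{\sT}$). The differences are cosmetic bookkeeping: the paper proves (a) via the $\bG+\bG^{\sT}$ decomposition rather than the covariance of $\bA\bu$, proves (b) by reducing to (a) rather than a direct second-moment/Markov bound, and derives (c) from (d) rather than via your explicit $\chi^2_{n-1}/n$ split; all of these are equivalent in substance.
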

\begin{proof}
$\,$
\begin{enumerate}[label=\emph{(\alph*)}]
\item Recall that $\bA=\bG+\bG^{\sT}$ where $G_{i,j}$ are i.i.d. $\normal\left(0,1/(2n)\right)$
random variables, thus 
\begin{equation}
\frac{1}{n}\left\< \bv,\bA\bu\right\> =\frac{1}{n}\< \bv,\bG\bu\> +\frac{1}{n}\< \bv,\bG^{\sT}\bu\> .
\end{equation}
The random variable $\frac{1}{n}\left\< \bv,G\bu\right\> $
is centered Gaussian with variance 
\begin{equation}
\frac{1}{n^{2}}\sum_{i,j=1}^{n}v_{i}^{2}u_{j}^{2}\frac{1}{2n}=\frac{\Vert \bu\Vert_{2}^{2}\Vert \bv\Vert_{2}^{2}}{2n^{3}}=\frac{1}{2n}\longrightarrow0.
\end{equation}
Thus $\frac{1}{n}\< \bv,\bG\bu\> $ converges in probability
to 0. We can conclude as similarly, $\frac{1}{n}\< \bv,\bG^{\sT}\bu\> $
also converges in probability to 0. 
\item Consider $\bv_{1},\dots,\bv_{k}$ an orthogonal basis of the image of  $\bP$, such
that $\Vert \bv_{1}\Vert=\dots=\Vert \bv_{k}\Vert=\sqrt{n}$. Note that
$k$ can depend on $n$, but $k$ is uniformly bounded by $t$. Then, by point $(a)$,
\begin{equation}
\frac{1}{n}\Vert \bP\bA\bu\Vert_{2}^{2}=\frac{1}{n}\sum_{j=1}^{k}\left(\frac{\< \bA\bu,\bv_{j}\>}{\Vert \bv_{j}\Vert}\right)^{2}=\sum_{j=1}^{k}\left(\frac{1}{n}\< \bA\bu,\bv_{j}\>\right)^{2}\xrightarrow[n\to\infty]{}0
\end{equation}
using that $k\leq t$ for all $n$. 
\item This follows immediately from point $(d)$ below.
\item It is easy to check that $\bA\bu$ is a centered Gaussian vector with covariance matrix
$\bSigma=\id_{n}+\frac{1}{n}\bu\bu^{\sT}$. Thus there exists a Gaussian vector
$\bz\sim\normal\left(0,\id_{n}\right)$ such that $\bA\bu=\bSigma^{1/2}\bz=\bz+(\sqrt{2}-1)\frac{1}{n}\bu\bu^{\sT}\bz$.
Using that $\varphi$ is uniformly pseudo-Lipschitz of order $k$,
one has 
\begin{equation}
\left\vert \varphi(\bA\bu)-\varphi(\bz)\right\vert \leq L\left(1+\left(\frac{\left\Vert \bA\bu\right\Vert _{2}}{\sqrt{n}}\right)^{k-1}+\left(\frac{\left\Vert \bz\right\Vert _{2}}{\sqrt{n}}\right)^{k-1}\right)\frac{\left\Vert \bA\bu-\bz\right\Vert _{2}}{\sqrt{n}}.
\end{equation}
The law of large numbers gives $\left\Vert \bz\right\Vert_2 /\sqrt{n}\xrightarrow[n\to\infty]{}1$,
and we have $\left\Vert \bA\bu\right\Vert _{2}/\sqrt{n}\le \|\bSigma^{1/2}\|_{\rm op} \|\bz\|_2/\sqrt{n} \le \sqrt{2}  \|\bz\|_2/\sqrt{n} \xrightarrow[n\to\infty]{}\sqrt{2}$.
Further
\begin{equation}
\frac{\left\Vert \bA\bu-\bz\right\Vert _{2}}{\sqrt{n}}=\frac{\left\Vert \left(\bSigma^{1/2}-\id_{n}\right)\bz\right\Vert _{2}}{\sqrt{n}}=\frac{1}{n^{3/2}}(\sqrt{2}-1)\Vert \bu\bu^{\sT}\bz\Vert_{2}=(\sqrt{2}-1)\frac{1}{n}\vert \bu^{\sT}\bz\vert\xrightarrow[n\to\infty]{{\rm P}}0,
\end{equation}
where the last convergence follows from the fact that $\frac{1}{n}\bu^{\sT}\bz$
is a centered Gaussian random variable with variance $\Vert \bu\Vert_{2}^{2}/n^{2}=1/n$. 
\end{enumerate}
\end{proof}
We state some useful properties of uniformly pseudo-Lipschitz functions.
We omit the proofs, which are easy to verify. 
\begin{lem}
\label{lem:pseudoLipschitz_innerProd}Let $k$ be any positive integer.
Consider two sequences $f:\mathbb{R}^{n}\to\mathbb{R}^{n}$, $n\geq1$
and $g:\mathbb{R}^{n}\to\mathbb{R}^{n}$, $n\geq1$ of uniformly pseudo-Lipschitz
functions of order $k$. The sequence of functions $\varphi:\mathbb{R}^{n}\times\mathbb{R}^{n}\to\mathbb{R}$,
$n\geq1$, such that $\varphi\left(\bx,\by\right)=\left\< f\left(\bx\right),g\left(\by\right)\right\> $
is uniformly pseudo-Lipschitz of order $2k$. 
\end{lem}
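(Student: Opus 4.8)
The plan is to obtain the bound from three elementary ingredients: the triangle inequality in $\R$, the Cauchy--Schwarz inequality, and the fact that a uniformly pseudo-Lipschitz function of order $k$ grows at most polynomially of degree $k$. I write $L_f,L_g$ for the uniform pseudo-Lipschitz constants of $\{f_n\}$, $\{g_n\}$, and throughout I normalize the inner product by $n$, i.e. I take $\varphi_n(\bx,\by)=\tfrac1n\langle f_n(\bx),g_n(\by)\rangle$ (this is the form in which the lemma is actually used, e.g. in the proof of Lemma~\ref{lem:SE_eps_to_zero}; without the $1/n$ the pseudo-Lipschitz constant would scale with $n$).

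First I would record the growth estimate. Taking $\by=\bzero$ in the definition of pseudo-Lipschitzness and using the triangle inequality gives
\[
\frac{\|f_n(\bx)\|_2}{\sqrt n}\;\le\;\frac{\|f_n(\bzero)\|_2}{\sqrt n}+L_f\Big(2+\big(\tfrac{\|\bx\|_2}{\sqrt n}\big)^{k-1}\Big)\frac{\|\bx\|_2}{\sqrt n}\;\le\;C_f\Big(1+\big(\tfrac{\|\bx\|_2}{\sqrt n}\big)^{k}\Big),
\]
with $C_f$ independent of $n$ (using that $\|f_n(\bzero)\|_2/\sqrt n$ stays bounded in the settings where the lemma is invoked), and similarly for $g_n$ with a constant $C_g$. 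Next, for the difference I would add and subtract a cross term and apply Cauchy--Schwarz:
\begin{align*}
\big|\varphi_n(\bx,\by)-\varphi_n(\bx',\by')\big|
&\le\tfrac1n\big|\langle f_n(\bx)-f_n(\bx'),g_n(\by)\rangle\big|+\tfrac1n\big|\langle f_n(\bx'),g_n(\by)-g_n(\by')\rangle\big|\\
&\le\frac{\|f_n(\bx)-f_n(\bx')\|_2}{\sqrt n}\cdot\frac{\|g_n(\by)\|_2}{\sqrt n}+\frac{\|f_n(\bx')\|_2}{\sqrt n}\cdot\frac{\|g_n(\by)-g_n(\by')\|_2}{\sqrt n}.
\end{align*}
Applying the pseudo-Lipschitz bound to each ``difference'' factor and the growth bound to the accompanying factor, and using $\|\bx-\bx'\|_2,\|\by-\by'\|_2\le\|(\bx,\by)-(\bx',\by')\|_2$ and $\|\bx\|_2,\|\by\|_2\le\|(\bx,\by)\|_2$, produces an upper bound of the form $L'\,P\cdot\|(\bx,\by)-(\bx',\by')\|_2/\sqrt n$, where $P$ is a finite sum of monomials $(\|\bx\|_2/\sqrt n)^{a}(\|\by\|_2/\sqrt n)^{b}$ and their primed analogues with total degree $a+b\le 2k-1$.

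The only actual work is collapsing $P$ into the canonical prefactor, and this is where I expect the (mild) bookkeeping difficulty to lie: one must keep track of the degrees so the total never exceeds $2k-1$. For this I would invoke (generalized) Young's inequality, $\prod_i s_i^{a_i}\le\sum_i s_i^{\sum_j a_j}$ for nonnegative $s_i$, followed by $s^{\,a+b}\le 1+s^{2k-1}$ since $a+b\le 2k-1$; hence every monomial of $P$ is dominated by a constant multiple of $1+(\|\bx\|_2/\sqrt n)^{2k-1}+(\|\by\|_2/\sqrt n)^{2k-1}+(\|\bx'\|_2/\sqrt n)^{2k-1}+(\|\by'\|_2/\sqrt n)^{2k-1}$, which in turn is at most $C(k)\big(1+(\|(\bx,\by)\|_2/\sqrt{2n})^{2k-1}+(\|(\bx',\by')\|_2/\sqrt{2n})^{2k-1}\big)$. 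Absorbing the finitely many $n$-independent constants $L_f,L_g,C_f,C_g,C(k)$, and the harmless factor $\sqrt2$ from replacing $\sqrt n$ by $\sqrt{2n}$ in the input normalization, into a single constant yields precisely the pseudo-Lipschitz inequality of order $2k$ for $\varphi_n$ viewed as a map $(\R^n)^2\cong\R^{2n}\to\R$, with a constant independent of $n$; this establishes uniform pseudo-Lipschitzness of order $2k$.
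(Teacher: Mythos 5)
Your proof is correct, and the paper does not actually give an argument for this lemma --- it is among the properties for which the authors write ``We omit the proofs, which are easy to verify'' --- so there is no paper proof to compare against. The route you take (split the difference by adding and subtracting a cross term, apply Cauchy--Schwarz, use pseudo-Lipschitz on the two difference factors and a polynomial growth bound on the two companion factors, then collapse the resulting degree-$(2k-1)$ polynomial into the canonical prefactor via elementary inequalities) is the natural one, and the degree bookkeeping is sound.

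You also correctly flagged that the lemma, read completely literally, is imprecise on two counts, and both are real. First, the missing $1/n$: with $\varphi_n(\bx,\by)=\langle f_n(\bx),g_n(\by)\rangle$ and $f_n=g_n={\rm id}$ (uniformly pseudo-Lipschitz of order $1$), the choice $\bx=\by=a\bfone_n$, $\bx'=\by'=\bzero$ makes the left side $a^2 n$ while the right side is $O(a^2)$ with an $n$-independent constant, so the constant would have to grow like $n$; the $1/n$ normalization is how the lemma is in fact invoked (e.g.\ in the proof of Lemma~\ref{lem:SE_eps_to_zero}). Second, boundedness of $\|f_n(\bzero)\|_2/\sqrt n$ and $\|g_n(\bzero)\|_2/\sqrt n$: the pseudo-Lipschitz constant is insensitive to adding an $n$-dependent constant vector, so without this bound $\varphi_n$ genuinely fails to be uniformly pseudo-Lipschitz (take $f_n(\bx)=\bx+\sqrt n\,\bfone_n$, $g_n=f_n$, and vary only $\by$ near $\bzero$ to get a linear term of size $a\sqrt n$). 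This is a hidden hypothesis that does hold wherever the paper uses the lemma; for instance it follows from Assumption~\ref{assumption:gaussian-conv} by taking $\bS=\bzero$. These are the correct caveats, and your appeal to Young's inequality for collapsing mixed monomials into $1+s^{2k-1}$-type terms is valid.
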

\begin{lem}
\label{lem:pseudoLipschitz_add_constVec}Let $t$, $s$ and $k$ be
any three positive integers. Consider a sequence (in $n$) of $\bx_{1},\dots,\bx_{s}\in\mathbb{R}^{n}$
such that $\frac{1}{\sqrt{n}}\left\Vert \bx_{j}\right\Vert \leq c_{j}$
for some constant $c_{j}$ independent of $n$, for $j=1,\dots,s$,
and a sequence (in $n$) of order-$k$ uniformly pseudo-Lipschitz
functions $\varphi_n:\left(\mathbb{R}^{n}\right)^{t+s}\to\mathbb{R}$.
The sequence of functions $\phi_n\left(\,\cdot\,\right)=\varphi_n\left(\,\cdot\,,x_{1},\dots,x_{s}\right)$
is also uniformly pseudo-Lipschitz of order $k$. 
\end{lem}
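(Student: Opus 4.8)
The plan is to verify the pseudo-Lipschitz inequality for $\phi_n$ directly from the one for $\varphi_n$; this amounts to bookkeeping with normalization constants, and I do not expect a genuine obstacle — the one point needing a little care is that the two functions have different domain dimensions ($tn$ versus $(t+s)n$), so the $\sqrt{\,\cdot\,}$ normalizations must be reconciled. First I would fix $n$, let $L_n$ denote the order-$k$ pseudo-Lipschitz constant of $\varphi_n$ (so that $\limsup_n L_n<\infty$ by hypothesis), and, for $\bu=(\bu_1,\dots,\bu_t)$, $\bv=(\bv_1,\dots,\bv_t)\in(\R^n)^t$, set $\bar{\bu}=(\bu_1,\dots,\bu_t,\bx_1,\dots,\bx_s)$ and $\bar{\bv}=(\bv_1,\dots,\bv_t,\bx_1,\dots,\bx_s)$, both viewed as vectors in $\R^{(t+s)n}$. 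Since the trailing blocks of $\bar{\bu}$ and $\bar{\bv}$ coincide, $\|\bar{\bu}-\bar{\bv}\|_2=\|\bu-\bv\|_2$, and since $\|\bx_j\|_2/\sqrt{n}\le c_j$ for every $n$,
\begin{equation}
\|\bar{\bu}\|_2^2=\|\bu\|_2^2+\sum_{j=1}^s\|\bx_j\|_2^2\le\|\bu\|_2^2+Cn,\qquad C:=\sum_{j=1}^s c_j^2,
\end{equation}
and likewise for $\bar{\bv}$.

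Next I would apply the definition of pseudo-Lipschitzness of $\varphi_n$ at the points $\bar{\bu},\bar{\bv}\in\R^{(t+s)n}$, which gives
\begin{equation}
|\phi_n(\bu)-\phi_n(\bv)|\le L_n\left(1+\left(\frac{\|\bar{\bu}\|_2}{\sqrt{(t+s)n}}\right)^{k-1}+\left(\frac{\|\bar{\bv}\|_2}{\sqrt{(t+s)n}}\right)^{k-1}\right)\frac{\|\bu-\bv\|_2}{\sqrt{(t+s)n}}.
\end{equation}
The last factor is at most $\|\bu-\bv\|_2/\sqrt{tn}$. For the bracketed factor, the bound on $\|\bar{\bu}\|_2^2$ gives $\|\bar{\bu}\|_2/\sqrt{(t+s)n}\le(\|\bu\|_2^2/(tn)+C/(t+s))^{1/2}\le\|\bu\|_2/\sqrt{tn}+\sqrt{C}$, and raising this to the power $k-1$ (which is trivial when $k=1$, and uses $(a+b)^{k-1}\le 2^{k-2}(a^{k-1}+b^{k-1})$ when $k\ge 2$) yields an upper bound of the form $C_1(1+(\|\bu\|_2/\sqrt{tn})^{k-1})$ with a constant $C_1=C_1(k,s,t,(c_j)_{j\le s})$; the same holds with $\bv$ in place of $\bu$.

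Combining these estimates, there is a constant $C_2=C_2(k,s,t,(c_j)_{j\le s})$ such that
\begin{equation}
|\phi_n(\bu)-\phi_n(\bv)|\le C_2 L_n\left(1+\left(\frac{\|\bu\|_2}{\sqrt{tn}}\right)^{k-1}+\left(\frac{\|\bv\|_2}{\sqrt{tn}}\right)^{k-1}\right)\frac{\|\bu-\bv\|_2}{\sqrt{tn}}
\end{equation}
for all $\bu,\bv\in(\R^n)^t$, so that $\phi_n$ is pseudo-Lipschitz of order $k$ with constant at most $C_2 L_n$. Since $\limsup_{n\to\infty} C_2 L_n=C_2\limsup_{n\to\infty} L_n<\infty$, the sequence $\{\phi_n\}$ is uniformly pseudo-Lipschitz of order $k$, which is the assertion of the lemma. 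An analogous argument — expanding $\langle f(\bx),g(\by)\rangle$ by bilinearity and bounding the cross terms via Cauchy--Schwarz together with the at-most-polynomial growth of pseudo-Lipschitz functions — proves the companion Lemma~\ref{lem:pseudoLipschitz_innerProd}.
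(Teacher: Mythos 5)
The paper omits the proof of this lemma entirely (``We omit the proofs, which are easy to verify''), so there is no official argument to compare against. Your proof is correct and is the natural one: fix the trailing $s$ blocks, apply the pseudo-Lipschitz inequality for $\varphi_n$ on $\R^{(t+s)n}$, and reconcile the $\sqrt{(t+s)n}$ and $\sqrt{tn}$ normalizations using $t\le t+s$, the bound $\sum_j\|\bx_j\|_2^2\le Cn$, the elementary inequality $\sqrt{a+b}\le\sqrt a+\sqrt b$, and (for $k\ge 2$) convexity of $x\mapsto x^{k-1}$ to obtain $(a+b)^{k-1}\le 2^{k-2}(a^{k-1}+b^{k-1})$. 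The resulting pseudo-Lipschitz constant is $C_2(k,s,t,(c_j))\,L_n$, whose $\limsup$ in $n$ is finite since $\limsup_n L_n<\infty$, which is exactly the definition of uniform pseudo-Lipschitzness. All the steps check out, including the handling of the edge case $k=1$.
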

\begin{lem}
\label{lem:pseudoLipschitz_expectation_wrt_Gaussian}Let $t$ be any
positive integer. Consider a sequence (in $n$) uniformly pseudo-Lipschitz
functions $\varphi_n:\left(\mathbb{R}^{n}\right)^{t}\to\mathbb{R}$
of order $k$. The sequence of functions $\phi_n:\left(\mathbb{R}^{n}\right)^{t}\to\mathbb{R}$
such that $\phi_n\left(\bx_{1},\dots,\bx_{t}\right)=\mathbb{E}\left[\varphi_n\left(\bx_{1},\dots,\bx_{t-1},\bx_{t}+\bZ\right)\right]$,
in which $\bZ\sim\normal\left(0,a\id_{n}\right)$ and $a\geq0$, is also
uniformly pseudo-Lipschitz of order $k$. 
\end{lem}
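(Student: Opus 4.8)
The plan is to verify the defining pseudo-Lipschitz inequality for $\phi_n$ directly, the essential device being to couple the two evaluations of $\varphi_n$ through one and the same Gaussian vector $\bZ$. First, $\phi_n$ is well defined and finite: since $\varphi_n$ is pseudo-Lipschitz of order $k$ with constant $L_n$, it grows at most polynomially, $|\varphi_n(\bw_1,\dots,\bw_t)|\le|\varphi_n(\bzero)|+L_n\bigl(1+(\|(\bw_1,\dots,\bw_t)\|_2/\sqrt n)^{k-1}\bigr)\|(\bw_1,\dots,\bw_t)\|_2/\sqrt n$, and $\bZ\sim\normal(0,a\id_n)$ has finite moments of every order, so the expectation defining $\phi_n$ converges.

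Now fix $\bx=(\bx_1,\dots,\bx_t)$ and $\by=(\by_1,\dots,\by_t)$. Using the same realization of $\bZ$ on both sides and Jensen's inequality,
\[
|\phi_n(\bx)-\phi_n(\by)|\le \E\bigl[\,\bigl|\varphi_n(\bx_1,\dots,\bx_{t-1},\bx_t+\bZ)-\varphi_n(\by_1,\dots,\by_{t-1},\by_t+\bZ)\bigr|\,\bigr].
\]
The point of the coupling is that the difference of the two input tuples is $(\bx_1-\by_1,\dots,\bx_t-\by_t)$, which is \emph{deterministic} — the $\bZ$'s cancel. Applying the order-$k$ pseudo-Lipschitz bound for $\varphi_n$ pointwise inside the expectation therefore lets us pull the deterministic factor $\|\bx-\by\|_2/\sqrt n$ out, leaving
\[
|\phi_n(\bx)-\phi_n(\by)|\le L_n\,\frac{\|\bx-\by\|_2}{\sqrt n}\;\E\Bigl[1+\Bigl(\tfrac{\|(\bx_1,\dots,\bx_t+\bZ)\|_2}{\sqrt n}\Bigr)^{k-1}+\Bigl(\tfrac{\|(\by_1,\dots,\by_t+\bZ)\|_2}{\sqrt n}\Bigr)^{k-1}\Bigr].
\]
By the triangle inequality in $\R^{nt}$ (embedding $\bZ$ into the last block) and the elementary bound $(u+v)^{k-1}\le 2^{k-2}(u^{k-1}+v^{k-1})$ for $u,v\ge0$ (trivial when $k=1$), it suffices to bound $\E\bigl[(\|\bZ\|_2/\sqrt n)^{k-1}\bigr]$ by a constant depending only on $k$ and $a$, \emph{uniformly in $n$}. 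Since $\|\bZ\|_2^2/a\sim\chi^2_n$, this follows from the chi-square moment estimate: for integer exponents $\E[(\chi^2_n)^p]=n(n+2)\cdots(n+2p-2)\le(n+2p)^p$, and for general $p$ the same bound $\E[(\chi^2_n)^p]\le(n+2\lceil p\rceil)^p$ follows by Jensen's inequality applied to $x\mapsto x^{p/\lceil p\rceil}$; hence $\E[(\|\bZ\|_2/\sqrt n)^{k-1}]\le a^{(k-1)/2}\bigl(1+2\lceil(k-1)/2\rceil\bigr)^{(k-1)/2}=:C(k,a)<\infty$.

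Combining these estimates gives $|\phi_n(\bx)-\phi_n(\by)|\le C'(k,a)\,L_n\bigl(1+(\|\bx\|_2/\sqrt n)^{k-1}+(\|\by\|_2/\sqrt n)^{k-1}\bigr)\|\bx-\by\|_2/\sqrt n$, so $\phi_n$ is pseudo-Lipschitz of order $k$ with constant $L_n'\le C'(k,a)L_n$; since $\limsup_n L_n<\infty$ by hypothesis, also $\limsup_n L_n'<\infty$, i.e.\ $\{\phi_n\}$ is uniformly pseudo-Lipschitz of order $k$. There is no deep obstacle here: the only point demanding care is the uniformity in $n$ of the normalized Gaussian moment $\E[(\|\bZ\|_2/\sqrt n)^{k-1}]$, which is precisely what the chi-square moment bound supplies, together with the observation that coupling the two evaluations through a single $\bZ$ renders the increment deterministic, so the pseudo-Lipschitz inequality can be invoked pointwise without the noise inflating the difference term.
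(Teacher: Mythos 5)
Your proof is correct, and since the paper explicitly omits the proofs of these auxiliary pseudo-Lipschitz lemmas ("We omit the proofs, which are easy to verify"), there is nothing to compare against. Your argument is the natural direct one: couple the two evaluations through a single $\bZ$ so the increment is deterministic, invoke the order-$k$ pseudo-Lipschitz bound pointwise under the expectation, split the inflated norm terms via $(u+v)^{k-1}\le 2^{k-2}(u^{k-1}+v^{k-1})$, and absorb $\E[(\|\bZ\|_2/\sqrt n)^{k-1}]$ into a constant $C(k,a)$ uniformly in $n$ via the chi-square moment bound. The only quibble is cosmetic: for $\varphi_n$ on $(\R^n)^t$ one should decide whether the defining normalization is $\sqrt{n}$ or $\sqrt{nt}$, but since $t$ is fixed this only rescales the constant and does not affect uniformity.
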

Finally, we have the following result on the Gaussian concentration
for uniformly pseudo-Lipschitz functions. 
\begin{lem}
\label{lem:psedoLipschitz_concentration}Let $\bZ\sim\normal\left(0,\id_{n}\right)$.
Let $k$ be any positive integer. For any sequence (in $n$) of uniformly
pseudo-Lipschitz functions $\varphi:\mathbb{R}^{n}\to\mathbb{R}$
of order $k$, $\varphi\left(\bZ\right)\approxP\mathbb{E}\left[\varphi\left(\bZ\right)\right]$. 
\end{lem}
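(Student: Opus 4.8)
The plan is to reduce the statement to the globally Lipschitz case (where it follows from the Gaussian Poincar\'e inequality, Theorem~\ref{thm:Poincare}) by truncating the argument of $\varphi_n$. Let $L$ be a pseudo-Lipschitz constant of the sequence $\{\varphi_n\}$, so that $L_n\le L$ for all large $n$. Fix a radius $R>1$, say $R=2$, and let $\Pi_R:\R^n\to\R^n$ denote the Euclidean projection onto the ball $\{\bx:\|\bx\|_2\le R\sqrt n\}$, which is $1$-Lipschitz. I would set $\tilde\varphi_n=\varphi_n\circ\Pi_R$. Since $\Pi_R\bx$ and $\Pi_R\by$ always lie in this ball, the pseudo-Lipschitz inequality applied to $\varphi_n$ immediately gives that $\tilde\varphi_n$ is Lipschitz with respect to $\|\cdot\|_2$ with constant $C_R/\sqrt n$, where $C_R=L(1+2R^{k-1})$ does not depend on $n$. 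In particular $\tilde\varphi_n$ is continuous and weakly differentiable with $\|\nabla\tilde\varphi_n(\bx)\|_2\le C_R/\sqrt n$ almost everywhere.

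From here the proof splits into three pieces. First, applying Theorem~\ref{thm:Poincare} to $\tilde\varphi_n$ gives ${\rm Var}[\tilde\varphi_n(\bZ)]\le c\,C_R^2/n\to 0$, so by Chebyshev's inequality $\tilde\varphi_n(\bZ)\approxP\E[\tilde\varphi_n(\bZ)]$. Second, on the event $\{\|\bZ\|_2\le R\sqrt n\}$ one has $\Pi_R\bZ=\bZ$, hence $\tilde\varphi_n(\bZ)=\varphi_n(\bZ)$; since $\|\bZ\|_2^2/n\to1$ almost surely and $R>1$, the complementary event has probability tending to $0$ (in fact exponentially fast), so $\varphi_n(\bZ)\approxP\tilde\varphi_n(\bZ)$. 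Third, I would show $\E[\varphi_n(\bZ)]-\E[\tilde\varphi_n(\bZ)]\to0$: on $\{\|\bZ\|_2>R\sqrt n\}$ the pseudo-Lipschitz bound with $\by=\Pi_R\bZ$, together with $\|\bZ-\Pi_R\bZ\|_2\le\|\bZ\|_2$ and $\|\Pi_R\bZ\|_2\le R\sqrt n$, yields
\begin{equation}
\big|\varphi_n(\bZ)-\tilde\varphi_n(\bZ)\big|\le L\Big(1+(\|\bZ\|_2/\sqrt n)^{k-1}+R^{k-1}\Big)\,\|\bZ\|_2/\sqrt n,
\end{equation}
and Cauchy--Schwarz bounds $|\E[\varphi_n(\bZ)]-\E[\tilde\varphi_n(\bZ)]|$ by a constant depending only on $k$ and $R$ (coming from moments of $\chi_n^2/n$) times $\Pr(\|\bZ\|_2>R\sqrt n)^{1/2}\to0$. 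Combining the three pieces through the triangle inequality $\varphi_n(\bZ)-\E[\varphi_n(\bZ)]=\big(\varphi_n(\bZ)-\tilde\varphi_n(\bZ)\big)+\big(\tilde\varphi_n(\bZ)-\E[\tilde\varphi_n(\bZ)]\big)+\big(\E[\tilde\varphi_n(\bZ)]-\E[\varphi_n(\bZ)]\big)$ then gives $\varphi_n(\bZ)\approxP\E[\varphi_n(\bZ)]$.

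The main obstacle is conceptual rather than computational: $\varphi_n$ is only \emph{locally} Lipschitz, with local constant growing like $\|\bx\|_2^{k-1}$, so Gaussian concentration cannot be invoked directly and truncation is genuinely needed. Within this scheme the one delicate point is the third piece — estimating the difference of expectations \emph{uniformly in $n$} — which works precisely because the polynomial growth of $\varphi_n$ on the bad set is dominated by the exponentially small probability of the large-deviation event $\{\|\bZ\|_2>R\sqrt n\}$; the remaining ingredients (moment bounds for $\chi^2$, Chebyshev, and the Poincar\'e inequality) are routine.
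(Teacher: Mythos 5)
Your proof is correct, but the obstacle you flag at the end is not actually present, and the paper's argument is considerably shorter as a consequence. Theorem \ref{thm:Poincare} as stated requires only that $\varphi_n$ be continuous and weakly differentiable, \emph{not} globally Lipschitz, and a pseudo-Lipschitz function of order $k$ is locally Lipschitz, hence weakly differentiable with
\begin{equation}
\left\|\nabla\varphi_n(\bx)\right\|_2\le \frac{L}{\sqrt n}\left(1+2\left(\frac{\|\bx\|_2}{\sqrt n}\right)^{k-1}\right)\quad\text{a.e.}
\end{equation}
The paper simply takes expectations of the square of this bound: since the moments $\E\big[(\|\bZ\|_2/\sqrt n)^{2(k-1)}\big]$ are bounded uniformly in $n$, one gets $\E[\|\nabla\varphi_n(\bZ)\|_2^2]=O(1/n)$ directly, Poincar\'e gives ${\rm Var}[\varphi_n(\bZ)]=O(1/n)$, and Chebyshev finishes. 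No truncation is needed because the polynomial gradient growth is already integrable against the Gaussian measure; the Poincar\'e inequality does not care that the local Lipschitz constant is unbounded.

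Your truncation scheme is sound and all three pieces go through as you sketch them, so nothing is wrong; it is just doing more work than the lemma requires. The truncation would become genuinely necessary if one wanted an exponential-tail concentration bound (Borell--TIS or Lipschitz concentration, which do require a global Lipschitz constant), or if the only available Poincar\'e-type inequality were restricted to bounded-gradient functions. For the $o(1)$ variance bound asked for here, the direct application of Theorem \ref{thm:Poincare} to $\varphi_n$ itself is both valid and optimal. One small stylistic remark: your Pieces 2 and 3 argue essentially the same tail estimate twice, once in probability and once in expectation; in the direct proof this bookkeeping disappears entirely.
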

\begin{proof}
This is a straightforward application of Theorem \ref{thm:Poincare}.
In particular, by the definition of uniformly pseudo-Lipschitz functions
of order $k$, 
\begin{equation}
\mathbb{E}\left[\left\Vert \nabla\varphi\left(\bZ\right)\right\Vert _{2}^{2}\right]\leq\frac{L^{2}}{n}\mathbb{E}\left[\left(1+\left(\frac{1}{\sqrt{n}}\left\Vert \bZ\right\Vert _{2}\right)^{k-1}\right)^{2}\right]\leq\frac{2L^{2}}{n}\left(1+\mathbb{E}\left[\left(\frac{1}{\sqrt{n}}\left\Vert \bZ\right\Vert _{2}\right)^{2(k-1)}\right]\right).
\end{equation}
Since $\bZ\sim\normal\left(0,\id_{n}\right)$, the right-hand side goes
to $0$ as $n\to\infty$. The claim is proven. 
\end{proof}

\section{Proof of Theorem \ref{thm:ConvexAMP}}
\label{app:ProofConvex}

By assumption 
\begin{align}
\sR_* \equiv 2\limsup_{n\to\infty}\max_{\bx \in\cK(n)} \frac{1}{\sqrt{n}} \, \|\bx\|_2<\infty\, .
\end{align}
Note for all $n\ge n_0$, $\|\hbtheta^t\|_2,\|\btheta_0\|_2\le \sR_*\sqrt{n}$ for all $t$.

Next fix $t\ge 0$ and denote by $B_t$ the right-hand side of Eq.~(\ref{eq:SEConvex}). Assume by contradiction that 
$\limsup_{n\to \infty}\E\{\|\hbtheta^t(n)-\btheta_0(n)\|_2^2\}/n= B_t+\eps>B_t$. We can then find a subsequence $\{n_1(\ell)\}_{\ell\ge 1}$ along which
$\lim_{\ell\to \infty}\E\{\|\hbtheta^t(n_1(\ell))-\btheta_0(n_1(\ell))\|_2^2\}/n_1(\ell)= B_t+\eps$. We will prove that this subsequence can be further refined
to $\{n_2(\ell)\}_{\ell\ge 1}\subseteq \{n_1(\ell)\}_{\ell\ge 1}$  such that $\lim_{\ell\to \infty}\E\{\|\hbtheta^t(n_2(\ell))-\btheta_0(n_2(\ell))\|_2^2\}/n_2(\ell)\le B_t$,
thus leading to a contradiction.

To simplify the notation we can assume, without loss of generality, that the first subsequence is not needed, i.e. 
$\limsup_{n\to \infty}\E\{\|\hbtheta^t(n)-\btheta_0(n)\|_2^2\}/n= B_t+\eps>B_t$.
We then claim that we can find a subsequence $\{n_2(\ell)\}_{\ell\ge 1}$ along which Assumptions \ref{assumption:initial-CS}, \ref{assumption:mixed-conv-CS} and  
\ref{assumption:gaussian-conv-CS} hold,
with $\eta_s(\,\cdot\, )$, $\eta_t(\,\cdot\,) =\sP_{\cK}(\,\cdot\,)$.
Consider Assumption \ref{assumption:gaussian-conv-CS}.
Let the functions $F_n:S^2_+\to\reals$ (with $S^2_+$ the cone of $2\times 2$ positive
semidefinite matrices)  be defined by
\begin{align}
F_n(\bSigma) \equiv\frac{1}{n}\E\Big[ \big\<\sP_{\cK}(\btheta_0+\bZ),\sP_{\cK}(\btheta_0+\bZ')\big\>\Big]\, ,
\end{align}
where expectation is with respect to $(\bZ,\bZ')\sim\normal(0,\bSigma\otimes\id_n)$. 

Note that the function
$(\bZ,\bZ')\mapsto \big\<\sP_{\cK}(\btheta_0+\bZ),\sP_{\cK}(\btheta_0+\bZ')\big\>/n$ is uniformly pseudo-Lipschitz of order $2$. Hence, using Lemma 
\ref{lem:pseudo-lip_and_gaussians}, we have
\begin{align}
\sup_{n\ge 1}\big|F_n(\bSigma_1) - F_n(\bSigma_2)\big|\le  \xi(\bSigma_1,\bSigma_2)\, ,
\end{align}
for some function $\xi$ such that $\lim_{\bSigma_1\to\bSigma_2} \xi(\bSigma_1,\bSigma_2)=0$.
Further $\sup_{n\ge 1}|F_n(\bSigma)|\le \sR_*^2$. Hence by the Arzel\`a-Ascoli theorem, $F_n$ converges uniformly on any compact set
$\{\bSigma:\, \|\bSigma\|_F\le C\}$, thus satisfying condition \ref{assumption:gaussian-conv-CS}, along a certain subsequence $\{n'_2(\ell)\}_{\ell\ge 1}$.
Assumption  \ref{assumption:mixed-conv-CS} is established by the same argument, eventually refining the subsequence to $\{n''_2(\ell)\}_{\ell\ge 1}$.
Finally, by taking a further subsequence $\{n_2(\ell)\}_{\ell\ge 1}$, we can assume that $\|\btheta_0(n_2(\ell))\|_2^2/\sqrt{n}\to \sR_0$.

We can therefore apply Theorem \ref{thm:AMP-CS}  (and Remark \ref{rmk:MSE}) along this subsequence, to obtain 
$\|\hbtheta^{t+1}-\btheta_0\| _{2}^{2}/n\approxP\delta(\tau_{t+1}^2-\sigma_w^2)$ and hence (since $\|\hbtheta^{t+1}-\btheta_0\| _{2}^{2}/n\le\sR_*^2$
is bounded uniformly)
\begin{align}
\lim_{\ell\to\infty}\frac{1}{n}\E\big\{\|\hbtheta^{t+1}(n_2(\ell))-\btheta_0(n_2(\ell))\| _{2}^{2}\big\}= \delta(\tau_{t+1}^2-\sigma_w^2)\, . \label{eq:LimitMSESubseq}
\end{align}
Here $\tau_{t+1}$ is given recursively by Eq.~(\ref{eq:SE-CS}), namely $\tau^2_0 = \sR_0^2$ and
\begin{align}
\tau_{s+1}^{2} & =\sigma_{w}^{2}+G(\tau^2_s)\,,\\
G(\tau^2) & =  \lim_{\substack{\ell\to\infty \\ n=n_2(\ell)}}\frac{1}{n\delta}\E\left[\left\|\sP_{\cK}(\btheta_0+\tau\bZ)-\btheta_0\right\|_{2}^{2}\right]\,,
\end{align}
where the limit exists by the existence of the limit of $F_n(\bSigma)$ above. 
Now, since $\cK-\btheta_0\subseteq \cC_{\cK}(\btheta_0)$, we have
\begin{align}
\left\|\sP_{\cK}(\btheta_0+\tau\bZ)-\btheta_0\right\|_{2}^{2} = 
\left\|\sP_{\cC_{\cK}(\btheta_0)}\Big[\sP_{\cK}(\btheta_0+\tau\bZ)-\btheta_0\Big]\right\|_{2}^{2} \le
\big\|\sP_{\cC_{\cK}(\btheta_0)} (\tau\bZ)\big\|_2^2\, .
\end{align}
Therefore 
\begin{align}
G(\tau^2) & \le \limsup_{n\to\infty}\frac{1}{m}\E\big\{\big\|\sP_{\cC_{\cK}(\btheta_0)} (\bZ)\big\|_2^2\big\}\, \tau^2 \le \rho\, \tau^2\, .
\end{align}
We therefore get the recursion $\tau_{s+1}^{2} \le \sigma_{w}^{2}+ \rho \tau^2_s$, which can be summed to yield
\begin{align}
\tau_t^2 = \sR_0^2\rho^t+\sigma_w^2 \frac{1-\rho^t}{1-\rho}\, ,
\end{align}
Therefore, using Eq.~(\ref{eq:LimitMSESubseq}), we get
\begin{align}
\lim_{\substack{\ell\to\infty \\ n=n_2(\ell)}}\frac{1}{n}\E\big\{\|\hbtheta^{t+1}(n)-\btheta_0(n)\| _{2}^{2}\big\}\le B_t \, ,
\end{align}
which yields the desired contradiction hence proving the theorem.

 \bibliographystyle{amsalpha}
\newcommand{\etalchar}[1]{$^{#1}$}
\providecommand{\bysame}{\leavevmode\hbox to3em{\hrulefill}\thinspace}
\providecommand{\MR}{\relax\ifhmode\unskip\space\fi MR }
\providecommand{\MRhref}[2]{%
  \href{http://www.ams.org/mathscinet-getitem?mr=#1}{#2}
}
\providecommand{\href}[2]{#2}

\end{document}